\documentclass[%
aps, a4paper, twocolumn,11pt
amsmath,amssymb
aps,floatfix,superscriptaddress,
]
{revtex4-1}
\pdfoutput = 1
\usepackage{amsmath}
\usepackage[toc,page]{appendix}
\usepackage[utf8]{inputenc}
\usepackage[english]{babel}
\usepackage{amsfonts}
\usepackage{qcircuit}
\usepackage[T1]{fontenc}
\usepackage{xkeyval}
\usepackage{tikz}
\usepackage{placeins}
\usetikzlibrary{arrows.meta, positioning, shapes.geometric}
\usepackage{graphicx}
\usepackage{pgfplots}
\pgfplotsset{compat=1.18}
\usepackage[export]{adjustbox}
\usepackage{soul}
\usepackage{dcolumn}
\usepackage{bm}
\usetikzlibrary{calc}
\usepackage{caption}
\usepackage{subcaption}
\usepackage[margin=1in]{geometry}
\usepackage{amssymb}
\usepackage{overpic}
\usepackage{epstopdf}
 \usepackage[usenames,dvipsnames]{pstricks}
 \usepackage{epsfig}
 \usepackage{pst-grad} 
 \usepackage{pst-plot} 
 \usepackage[space]{grffile} 
 \usepackage{etoolbox} 
 \makeatletter 
 \patchcmd\Gread@eps{\@inputcheck#1 }{\@inputcheck"#1"\relax}{}{}
 \makeatother
\usepackage{booktabs}
\usepackage{boxhandler}
\usepackage{amsmath}
\usepackage{braket}
\usepackage{amsmath}
\usepackage{qcircuit}
\usepackage{float}
\makeatletter
\let\newfloat\newfloat@ltx
\makeatother
\usepackage{algorithm}
\usepackage{algpseudocode}
\usepgfplotslibrary{groupplots}

\usepackage{hyperref}
\hypersetup{
  colorlinks   = true,    
  urlcolor     = blue,   
  linkcolor    = blue,   
  citecolor    = blue     }

\usepackage{multirow}
\usepackage{enumitem}
\usepackage[normalem]{ulem}

\usepackage[amsthm]{ntheorem}

\newcommand{\F}{\mathbb{F}}

\theoremstyle{plain}
\newtheorem{theorem}{Theorem}[section]
\newtheorem{lemma}[theorem]{Lemma}
\newtheorem{corollary}[theorem]{Corollary}
\newtheorem{proposition}[theorem]{Proposition}

\theoremstyle{definition}

\newtheorem{remark}[theorem]{Remark}

\newtheorem{definition}[theorem]{Definition}

\numberwithin{equation}{section}
\newtheorem{example}[theorem]{Example}

\newcommand{\Span}{{\rm Span }}
\newcommand{\wt}{{\rm wt}}

\newcommand{\Z}{\mathbb{Z}}

\newcommand{\CNOT}{{\rm CNOT}}

\begin{document}

\title{Quantum Codes with Arbitrary 
Z-Rotation logical Gates and Applications to Fault-Tolerant Code Switching}

\author{Reza {Dastbasteh}}
\email{rdastbasteh@unav.es}
\affiliation{Department of Basic Sciences, Tecnun - University of Navarra, San Sebastian, Spain.}
\author{Ruben M. Otxoa}
%\email{ro274@cam.ac.uk}
\affiliation{Hitachi Cambridge Laboratory, J. J. Thomson Avenue, Cambridge, United Kingdom.}

\author{Pedro M. Crespo}
%\email{pcrespo@unav.es}
\affiliation{Department of Basic Sciences, Tecnun - University of Navarra, San Sebastian, Spain.}
\author{Josu {Etxezarreta Martinez}}
\email{jetxezarreta@unav.es}
\affiliation{Department of Basic Sciences, Tecnun - University of Navarra, San Sebastian, Spain.}

%%%%%%%%%%%%%%%%%%%%%%%%%%%%
\begin{abstract}
%%%%%%%%%%%%%%%%%%%%%%%%%%%%%
A technique for realizing a universal set of fault-tolerant quantum operations is the code switching method, which leverages two quantum codes with complementary sets of transversal gates.
To date, the application of this technique has been largely limited to families of color codes supporting a logical $T$ gate.
No analogous code switching protocols exist for many other prominent families, such as rotated surface codes, or for finer $Z$-rotation gates.
In this work, we first utilize the doubling technique as a unified framework to construct a class of quantum color codes encoding a single logical qubit with an arbitrarily large minimum distance, enabling the transversal realization of arbitrary small logical $Z$-rotation gates.
We investigate the structural properties of this code family, demonstrating that they improve upon the parameters of state-of-the-art triorthogonal codes, achieve lower qubit overhead compared to certain known color codes, and admit single-shot decoding of $Z$-syndromes via meta-checks.
Furthermore, we show that this framework extends beyond color codes; specifically, it enables the generation of $r$-orthogonal quantum codes, $r \ge 2$, that inherit the local geometry of rotated surface codes.
We then provide an overhead optimization protocol alongside several candidate codes tailored for realizing logical $Z$-rotation gates within rotated surface codes.
Finally, we extend the fault-tolerant code switching protocol based on transversal CNOT gates to incorporate fault-tolerant realization of $Z$-rotation gates at any level of the Clifford hierarchy for geometries compatible with rotated surface codes.
We present the first demonstration of fault-tolerant magic state preparation by means of code switching within a distance-three rotated surface code using a total footprint of only 45 physical qubits, and evaluate its performance through a simulation.
\ \\ \ \\
\textbf{Keywords:} quantum codes, code switching, color codes, surface codes, transversal gates, Clifford hierarchy, 
self-orthogonal codes, triorthogonal codes, magic state.
\end{abstract}

\maketitle

%\tableofcontents

\section{Introduction}
Quantum error correcting codes are the key ingredient  for enabling large scale quantum computation and exploiting the potential advantages of quantum computers by protecting quantum information via lowering logical error rates. 
This is done by encoding the code space into a larger Hilbert space as a trade off for achieving error detection and correction \cite{Gottesman,Calderbank}.

Besides the mentioned quantum error suppression feature, logical operators must be enabled over quantum codes as otherwise they can only be used as quantum memories. 
Such quantum operations must be implemented in a fault-tolerant (FT) way that is consistent with the error protection scheme of the underlying code. 
Among different FT protocols, transversal gates are of main interest as they are inherently FT, can be parallelized, and are consistent with the Steane or Knill's type quantum error correction algorithms \cite{steane1997active,knill2005scalable,heussen2025efficient}. 
However,  the Eastin-Knill no-go theorem limits the realization of a universal set of gates in a single error correcting code transversally by stating that no quantum code can have a universal set of transversal gates \cite{eastin2009restrictions}. 
A generalization of the Eastin-Knill no-go theorem  for other types of FT gates, such as fold-transversal and automorphism-based gates, is discussed in \cite{chakraborty2026no}. 

One way to overcome this limitation is by sticking to the codes that realize logical Clifford gates transversally, and attaining a non-Clifford gate via an alternative FT approach. 
This is mainly because logical Clifford operations have a relatively low cost as they
can be implemented 
through many stabilizer codes (surface codes, color codes, LDPC codes, etc)
either
transversally or fold transversally, by simple code deformation, or by very cheap constant-depth circuits \cite{bombin2006topological,breuckmann2024fold,horsman2012surface,moussa2016transversal,zheng2020constant}. 
On the other hand, logical non-Clifford
gates with transversal implementations only exist in quantum codes that have a rare symmetry. Moreover, such gates usually have a relatively high cost that may exceed the one of Clifford operators by orders of magnitude. 
The most common technique for realizing non-Clifford gates is based on magic state distillation protocols, code switching,  and magic state cultivation. 

In magic state distillation \cite{bravyi2005universal,knill2004fault,litinski2019magic,beverland2021cost,bravyi2012magic,jones2013multilevel,jacinto2026exploringlandscapecompactmagicstate}, noisy magic states are injected into a quantum code and purified through distillation protocols. 
These protocols consume multiple auxiliary noisy magic states to produce a smaller number of magic states with significantly lower infidelity.
One main challenge in magic state distillation is its considerable space-time overhead, and the fact that injection of a magic state is not fault tolerant. 
Thus, reducing the error rate of such a non-Clifford resource, to a level comparable to the error rate of other Clifford logical operations, can lead to a considerable amount of extra complexity and overhead.

Magic state cultivation (MSC) \cite{gidney2024magic,chen2025efficient,chen2026efficientmagicstatecultivation,claes2025cultivating,vaknin2025efficient} has been proposed as an alternative approach in which the fidelity of a magic state is progressively improved through a sequence of fault-tolerant Clifford operations and measurements. 
Rather than relying on large block distillation protocols, MSC gradually refines an initially noisy resource state while maintaining compatibility with the underlying quantum code. 
Although MSC drastically reduces the space-time overhead of factory-style distillation, its resource costs remain significant, and it introduces new architectural challenges like dynamic code growth, patch grafting, and FT decoding.

Another alternative technique is the code switching, also known as dimension jump (or its special case namely gauge-fixing) \cite{anderson2014fault,bombin2015gauge,bombin2016dimensional,heussen2025efficient,daguerre2025code,jiao2025low,campbell2017roads,golowich2025constant,ouyang2025measurement}, in which two different quantum error-correcting codes are employed to realize a FT universal set of gates. 
Typically, one of the codes efficiently supports the implementation of a generator set for Clifford gates, while the other admits the transversal or other FT implementation of a non-Clifford gate. 
In this protocol, the logical information is fault-tolerantly transferred between the two codes in order to apply the non-Clifford operation. 
Consequently, non-Clifford gates can be realized without the need to prepare high-fidelity magic states, potentially reducing the overhead associated with magic state distillation.

Almost all currently-known code switching protocols are based on the color code family  \cite{Color2,eczoo_color}, where the logical state is transferred between 2D and 3D color codes \cite{anderson2014fault, bombin2015gauge,bombin2016dimensional,heussen2025efficient,daguerre2025code,butt2024fault}. 
The former codes support transversal implementations of Clifford gates, while the latter enable transversal logical $T$ gates.
Recently a similar protocol for the realization of a logical CCZ through a one-way transversal CNOT gate between 3D and 2D lifted product and hypergraph product codes has extended the idea behind the family of color codes \cite{li2025transversal,tan2025single, ouyang2025measurement}. 
The primary focus of this work is also on code switching protocols. 
Specifically, it aims to systematically generate suitable codes and expand the families of quantum codes available for this purpose. 

In general, the resource estimation comparison between the three mentioned methods in a fair setting seems impossible as they are distinct 
in the ingredient codes, noise model, decoder and post-selection, and other underlying  assumptions. 
However, almost all the studies agree on the very low space-time overhead of code switching protocols \cite{li2025transversal,heussen2025efficient,daguerre2025code,gao2026resource}. 
More particularly, it is reported in \cite{gao2026resource} that
among the three proposals, and considering the available data in the literature:
\begin{itemize}
    \item magic state distillation stands out by reaching the lowest overall logical output error regime, 
    \item code switching protocols achieve the lowest single-attempt cost and the smallest qubit footprint, and
    \item  MSC bridges these approaches by contributing within intermediate error ranges and space-time overhead.
\end{itemize}

Magic state distillation and code switching both rely on the existence of quantum codes with supporting a logical $T$ gate that can be implemented transversally or in other FT manners.  
The characterization of such codes has been discussed in \cite[Theorem 1]{rengaswamy2020optimality}. 
In particular, the most general family that enables transversal logical $T$ gate through the transversal physical $T$ gate is the so-called triorthogonal code family \cite{bravyi2012magic}. 
Motivated by this and the triorthogonal characterization of quantum codes given in \cite{bravyi2012magic}, new families and examples of quantum triorthogonal codes have been designed \cite{bravyi2015doubled, rengaswamy2020optimality,jain2025transversal,berardini2025asymptotically,nezami2022classification,camps2024algebraic,camps2026transversal}. 
Among such constructions, the quantum doubling construction \cite{bravyi2015doubled,jain2025transversal} paved the path for a recursive construction of triorthogonal codes with the aid of self-dual binary linear codes. 
Also, existence of asymptotically good  quantum (LDPC) codes with non vanishing rate and non vanishing minimum distance (or non vanishing rate and growing minimum distance) supporting a non-Clifford logical gate have been developed very recently \cite{Asym1,Asym2,Asym3, Asym4,Asym5}.
Furthermore, a bi-directional switching protocol via quantum state
teleportation circuits at the logical level for the triorthogonal codes constructed from the doubling construction is discussed in \cite{heussen2025efficient}.
Recently, experimental implementations of the code switching protocol have been demonstrated to prepare a high-fidelity encoded logical magic state. \cite{daguerre2025experimental,pogorelov2025experimental}. 

Currently, the most commonly considered gate set for realizing universal quantum computation on quantum codes is Clifford+T. 
While Clifford+T is a universal gate set, the practical efficiency of implementing complex operations depends heavily on the FT decomposition of such operations into Clifford+T, and also the number of qubits involved.
For example, multi-controlled Toffoli gates cannot be implemented directly using only the native Clifford+T operations on the same set of qubits. 
Instead, a FT decomposition requires additional ancilla qubits to break these gates into sequences of Clifford and $T$ operations, which increases both the number of logical qubits and the circuit depth, and consequently reduces the achievable fidelity of the overall FT implementation.

An alternative proposal is to complement the $T$ gates with $R_Z(\theta)$, namely smaller rotations around the $Z$-axis of the Bloch sphere by an angle $\theta$,  
forming a Clifford+$R_Z(\theta)$ gate set.
This extension allows more efficient decompositions of general gates, such as multi-controlled Toffoli gates. 
In this framework, the decomposition requires fewer elementary gates, a smaller circuit (and non-Clifford) depth and no additional ancilla qubits \cite{biswal2019techniques, forest2015exact}.
More interestingly, recently, it has been suggested that small $R_Z(\theta)$ can be substantially cheaper than previously assumed indicating that many FT quantum algorithms, those with large numbers of small $Z$-rotations, may require significantly fewer non-Clifford resources than standard estimates predict \cite{bothe2026more}.
This promotes the need for quantum codes with small angle $Z$-rotation logical gates, and novel protocols for preparation of such non-Clifford resources with high fidelity. 
This is another main motivation of the code families and examples presented in this paper. 
In particular, for {\em each desired rotation angle in the form $\theta=\frac{\pi}{2^r}$ and minimum distance $d$}, respectively, we give a quantum code with a logical $R_Z(\theta)$ gate and minimum distance $d$ that can be implemented transversally, and be used for preparation  of high fidelity magic states.

%%%%%%%%%%%%%%%%%%%%%%%%%%%%%%%%%%%%%%%
\subsection{Our main contributions}
%%%%%%%%%%%%%%%%%%%%%%%%%%%%%%%%%%%%%%%%
In spite of the previously mentioned advances in the code switching protocol via transversal CNOT gate for realizing logical $Z$-rotation gates  \cite{heussen2025efficient,daguerre2025code}, its applications are remained limited to (a) the color code family and (b) solely for realizing the $T$ gate. 
Motivated by these limitations, we reveal new applications of the doubling method in constructing divisible and orthogonal codes which could serve as ingredients for the code switching protocol, and also for realizing arbitrary fine logical $Z$-rotation gates transversally. 
We highlight a summary of our main results below.
\begin{itemize}
\item We construct quantum codes that admit transversal logical $Z$-rotation gates at arbitrary levels of the Clifford hierarchy by employing a modification of the doubling construction.
In particular, we recursively construct a family of $2^r$-divisible quantum (color) codes with parameters 
\[
[\![S_r(k),1,2k-1]\!]
\]
that realizes logical $R_Z( \frac{\pi}{2^{r-1}})$ through the transversal action of them to the data qubits, where
\[S_r(k)=\sum_{i=0}^{r} 2^i \binom{i+k-2}{i}.
\] 
As an application, we present parameters of small instances of codes inside such family and show examples that improve the parameters of previously-known triorthogonal codes in the literature. 
We also show that this family requires less qubit overhead compared to that of some other well-known code families (see Tables \ref{tab: comparison} and \ref{tab:parameters small}).
Finally, we show that each quantum code constructed from the doubling construction satisfies the possession of $Z$-type meta-checks making the above quantum families viable for single-shot decoding of $Z$-syndromes. 
\item We show that our approach is not restricted to the family of color codes. 
In particular, we adjust our method to construct a family of quantum codes with parameters 
\[
[\![S_r(k),1,2k-1]\!]
\]
that are $r$-orthogonal and preserve the local geometry of rotated surface codes, where
\[
\begin{split}
S_r(k) &= 2^{r+2} \binom{k+r-1}{r+1} \\&+ \sum_{i=0}^{r-1} 2^i \binom{k+i-2}{i}.
\end{split}
\]
As an application, we construct quantum self-orthogonal and triorthogonal codes with the local geometry of rotated surface codes (see Table \ref{tab:surface}) and provide several specific examples.

\item We propose methods for qubit overhead reduction of the above quantum family. 
This is based on carefully concatenating rotated surface codes and punctured rotated surface codes together. 
This leads to a more optimized constructions of quantum self-orthogonal and triorthogonal codes with local geometry of rotated surface codes. 
This includes triorthogonal codes with parameters $[\![31,1,3]\!]$ and $[\![113,1,5]\!]$.

\item We extend the application of the code-switching protocol to quantum codes that have a local geometry of another desired code. 
In particular, one can obtain code-switching protocols for FT implementation of $S$, $T$, or any other $Z$-rotation gate in color codes, rotated surface codes, or extend the idea to other quantum codes of interest.
\item We demonstrate that the proposed code-switching protocol can be realized fault tolerantly for distance $3$ with a remarkably small hardware footprint, requiring only 45 physical qubits in a complete circuit-level simulation.
In particular, we successfully prepare the $S\ket{+}$ state in a $d=3$ rotated surface code as a proxy of the $T\ket{+}$ magic state.  
\end{itemize}

{\em Application.} Immediate applications of the above results are in
\begin{itemize}
    \item presenting many quantum codes with arbitrary minimum distances, local geometries, and $Z$-rotation gates for magic state distillation, cultivation, and code switching. 
    \item FT realization of Clifford+$R_Z(\theta)$ gate sets that have the potential of significantly reducing the cost of non-Clifford resources for universal quantum computation.  
\end{itemize}

This paper is organized as follows. A brief overview of quantum codes and the doubling construction of triorthogonal codes is provided in Section \ref{sec: preliminaries}. In Section \ref{Sec: general family}, we extend the application of the doubling construction to build quantum codes with transversal logical $R_Z(\theta)$ gates for an arbitrarily small $\theta$. We give an explicit recursive construction for such codes and calculate their parameters. Section \ref{Sec: Surface as another} broadens the applicability of this approach to the family of rotated surface codes and provides an overhead optimization method. 
In Section \ref{Sec: code switching}, we show that the code switching protocol can be extended to quantum codes with an arbitrary small phase rotation gate. Details on our circuit-level noise simulation is also provided in this section.

%%%%%%%%%%%%%%%%%%%%%%%%%%%%%%%%%%%%
\section{Preliminaries}\label{sec: preliminaries}
%%%%%%%%%%%%%%%%%%%%%%%%%%%%%%%%%%%%

%%%%%%%%%%%%%%%%%%%%%%%%%%%%%%%%%%
\subsection{Quantum Stabilizer Codes}
%%%%%%%%%%%%%%%%%%%%%%%%%%%%%%%%%%

The state space of a single qubit is the two-dimensional Hilbert space $\mathbb{C}^2$ with
computational basis $\{\ket{0},\ket{1}\}$. 
The state space of $n$ qubits is given by the
tensor product space $(\mathbb{C}^2)^{\otimes n}$. 
Errors acting on qubits can be decomposed as a linear combination of the Pauli operators
\[
\begin{split}
I =
\begin{pmatrix}
1 & 0 \\
0 & 1
\end{pmatrix}, \quad
&X =
\begin{pmatrix}
0 & 1 \\
1 & 0
\end{pmatrix}, \quad
Y =
\begin{pmatrix}
0 & -i \\
i & 0
\end{pmatrix}, \quad \\&
Z =
\begin{pmatrix}
1 & 0 \\
0 & -1
\end{pmatrix}.
\end{split}
\]
The $n$-qubit Pauli group $\mathcal{P}_n$ consists of all $n$-fold tensor products of
these operators, together with the phases $\{\pm 1, \pm i\}$.

A {\em quantum code} $Q$ encoding $k$ logical qubits into $n$ data qubits is a
$2^k$-dimensional subspace
\[
Q \subseteq (\mathbb{C}^2)^{\otimes n}.
\]
This code protects quantum information from errors affecting the physical qubits. 
In particular, the code is designed such that a specified set of error operators, typically drawn from the $n$-qubit Pauli group $\mathcal{P}_n$, can be detected
and corrected without disturbing the encoded logical information. 
Such a code is denoted by $[\![n,k]\!]$.

One of the most important and widely used approaches for constructing quantum codes is the stabilizer construction, which we review below. 
\begin{definition}[Stabilizer Code]\cite{Calderbank,Gottesman}
A {\em quantum (stabilizer) code} is defined by an
abelian subgroup $\mathcal{S} \subset \mathcal{P}_n$ that does not contain $-I$.
The quantum codespace is the simultaneous $+1$ eigenspace of all elements of $\mathcal{S}$, namely
\[
Q = \{ \ket{\psi} \in (\mathbb{C}^2)^{\otimes n} : S\ket{\psi}=\ket{\psi}
\text{ for all } S\in\mathcal{S} \}.
\]
If $\mathcal{S}$ is generated by $n-k$ independent commuting Pauli operators, then the
resulting stabilizer code is an $[\![n,k]\!]$ stabilizer code.
\end{definition}
We call an $n$-qubit unitary matrix $U$ a {\em logical operator (gate)} of $Q$, if $U\mathcal{S}U^\dagger\subseteq \bar{\mathcal{S}}$, where $\bar{\mathcal{S}}$ is the set of all Pauli and non-Pauli stabilizers of the code.
Let
\[
\mathcal{N}(\mathcal{S})=\{P\in\mathcal{P}_n \mid PS=SP \text{ for all } S\in\mathcal{S}\}
\]
denote the {\em normalizer group} of $\mathcal{S}$ in $\mathcal{P}_n$.
The \emph{minimum distance} $d$ of the stabilizer code is defined as
\[
d=\min\{\mathrm{wt}(P)\mid P\in\mathcal{N}(\mathcal{S})\setminus\mathcal{S}\},
\]
where $\mathrm{wt}(P)$ denotes the number of qubits on which the Pauli
operator $P$ acts nontrivially.

Operators in $\mathcal{N}(\mathcal{S})$ preserve the codespace. Indeed, elements of
$\mathcal{N}(\mathcal{S}) \setminus \mathcal{S}$ act nontrivially on the
codespace and correspond to logical Pauli operators acting on the encoded qubits.
For an $[\![n,k]\!]$ stabilizer code, one can choose $k$ pairs of logical Pauli
operators $\overline{X}_i$ and $\overline{Z}_i \in \mathcal{N}(\mathcal{S}) \setminus \mathcal{S}$ that satisfy the same
commutation relations as single-qubit Pauli operators.

If the minimum distance of a quantum code is known its parameters are shown as $[\![n,k,d]\!]$. This code can detect up to $d-1$ errors and can correct up to $\lfloor (d-1)/2 \rfloor$ arbitrary qubit errors.

A subclass of quantum stabilizer codes is the family of  Calderbank--Shor--Steane (CSS) codes \cite{CS96,steane1996multiple}. These are stabilizer codes whose stabilizer group can be
generated by operators that are either tensor products of $X$ operators or tensor
products of $Z$ operators. 
Equivalently, a CSS code can be described using two classical binary linear codes as stated below.

\begin{definition}[CSS Code]
Let $C_2\subseteq C_1 \subseteq \F_2^n$ be binary linear codes with $\dim(C_1)=k_1$ and $\dim(C_2)=k_2$. 
Then one can construct a quantum code with parameters $[\![n,k_1-k_2,d]\!]$, where 
\[
d=\min \{d_x=\wt(C_1\setminus C_2),d_z=\wt(C_2^\bot \setminus C_1^\bot)\},
\]
where $\mathrm{wt}()$ here denotes the minimum Hamming weight.
\end{definition}
Here the $X$-type and $Z$-type stabilizers are in correspondence to the binary codes $C_2$ and $C_1^\bot$, respectively. 
Moreover, one can determine the logical $X$ and $Z$ operators as elements of $C_1\setminus C_2$ and  $C_2^\bot \setminus C_1^\bot$, respective.
\begin{remark}
Although $C_1$ and $C_2$ are binary linear codes, 
with a slight abuse of terminology, we frequently refer to the elements of $C_2$ and $C_1^\perp$ as the $X$- and $Z$-stabilizers, respectively.
\end{remark}

Quantum color codes form an important family of topological quantum error-correcting codes introduced by Bombín and Martin-Delgado \cite{bombin2006topological,Color2,eczoo_2d_color}. 
They are defined on suitably colorable lattices and belong to the class of CSS stabilizer codes. 
One of their main advantages is that their geometric structure enables the implementation of several FT logical gates. 
We recall a description of them below.

\begin{definition}
A quantum color code is a CSS stabilizer code defined on a $(D+1)$-colorable $D$-dimensional lattice, where qubits are placed on vertices and $X$- and $Z$-stabilizers are associated with the cells of the lattice.
\end{definition}
We have provided more details about color codes and their description from the point of view of cell complexes in Appendix \ref{Sec: colandlogics}.

\subsection{Conditions for transversal logical phase gates}
In this section we briefly recall the necessary and sufficient conditions for a quantum CSS code to realize a logical gate transversally ({\em without requiring a correction operator}). 
Since we are only interested in codes with one logical qubit and only $Z$-rotation gates, we state the conditions tailored to such restrictions. Please refer to \cite{webster2023transversal,leitch2025transversal} for other general cases. 

A quantum logical operator is called {\em transversal} if it can be implemented as a tensor product of single-qubit gates acting on the data qubits. 
The main feature of transversality is that a single fault occurring during the implementation process cannot propagate to more than one data qubit, and such operation can be parallelized on a circuit with depth one.
A natural generalization of transversal gates is {\em $k$-fold transversality}, where a logical gate is implemented as a tensor product of $k$-qubit gates acting on disjoint subsets of the data qubits \cite{chakraborty2026no}.

For an integer $r\ge 0$, recall that a $Z$-rotation gate by the angle $\theta=\frac{\pi}{2^r}$ has the form 
\[
R_Z(\theta) = e^{-i \frac{\theta}{2} Z} = \begin{bmatrix} e^{-i\theta/2} & 0 \\ 0 & e^{i\theta/2} \end{bmatrix}=
\begin{bmatrix} 1 & 0 \\ 0 & e^{i\theta} \end{bmatrix}.
\]
Such a gate belongs to the $r+1$-th level of the Clifford hierarchy.
Recall also that the {\em Schur product} of two length $n$ binary vectors $u$ and $v$ is defined by 
\[
u \ast v=(u_1v_1,u_2v_2,\ldots,u_nv_n). 
\]
One can define the Schur product of more than two vectors analogously.
\begin{theorem}(Condition for Transversal Logical Phase gate \cite{webster2023transversal,leitch2025transversal}) \label{T:NandS for logical}
Let $Q$ be a quantum CSS code with the $X$-stabilizers generator set $H_X$ and $X$-logical operator $L_X=\{u\}$.
Then applying $R_Z(\frac{\pi}{2^{r-1}})^i$, for some odd integer $i$, to all data qubits implies a logical $R_Z(\frac{\pi}{2^{r-1}})$ if the following conditions are satisfied
\begin{itemize}
    \item $\wt(u)\equiv 1 \pmod 2$, $\wt(v)\equiv 0 \pmod{2^r}$ for each $v \in H_X$, and 
    \item Schur product of any  $2\le s \le r$ number of vectors in $H_X \cup L_X$ has weight divisible by $2^{r-s+1}$.
\end{itemize}
\end{theorem}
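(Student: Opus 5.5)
The plan is to compute the action of the transversal gate directly on the logical $X$-basis codewords and to reduce everything to a weight-divisibility statement, which is then supplied by an inclusion–exclusion (Ward-type) identity for the weight of a mod-2 sum.

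\textbf{Setting up the codewords.} Write $G=\langle H_X\rangle$ for the $X$-stabilizer code (over $\F_2$). The logical states are
\[
\ket{\bar a}\propto \sum_{g\in G}\ket{g+a u}, \qquad a\in\{0,1\}.
\]
Write $\omega=e^{i\pi i_0/2^{r-1}}$, where $i_0$ denotes the odd exponent from the statement. Then $R_Z(\frac{\pi}{2^{r-1}})^{i_0\otimes n}$ maps $\ket{x}$ to $\omega^{\wt(x)}\ket{x}$. Hence it suffices to prove the following two facts:
\begin{enumerate}
\item[(i)] $\wt(g)\equiv 0 \pmod{2^{r}}$ for all $g\in G$;
\item[(ii)] $\wt(g+u)\equiv \wt(u)\pmod{2^{r}}$ for all $g\in G$.
\end{enumerate}
Since only residues modulo $2^{r-1}$ are needed to evaluate $\omega^{\wt}$, these facts give phase $1$ on $\ket{\bar 0}$ and the constant phase $\omega^{\wt(u)}$ on $\ket{\bar 1}$. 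This is a diagonal logical gate $\mathrm{diag}(1,\omega^{\wt(u)})$.

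The resulting logical gate is a $Z$-rotation with odd exponent $i_0\wt(u)$, because $\wt(u)$ is odd. As a power of $R_Z(\frac{\pi}{2^{r-1}})$ with odd exponent, it generates the same level of the hierarchy. If the exact gate $R_Z(\frac{\pi}{2^{r-1}})$ is required, choose $i_0$ to be the inverse of $\wt(u)$ modulo $2^r$; this inverse is odd, which fits the statement's freedom in choosing $i$.

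\textbf{Key lemma.} For binary vectors $v_1,\dots,v_m$,
\[
\wt\Big(\sum_j v_j\Big)=\sum_{s=1}^{m}(-2)^{s-1}\sum_{j_1<\dots<j_s}\wt(v_{j_1}\ast\cdots\ast v_{j_s}).
\]
I would prove this by induction from $\wt(x+y)=\wt(x)+\wt(y)-2\wt(x\ast y)$, using distributivity of $\ast$ over the integer-valued expansion. Equivalently, it follows from the pointwise identity $x_1\oplus\cdots\oplus x_m=\sum_s(-2)^{s-1}e_s(x)$ for $x\in\{0,1\}^m$.

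\textbf{Applying the lemma.} Write $g$ as a sum of generators $v_j\in H_X$ and apply the lemma. Consider the term of order $s$:
\begin{itemize}
\item For $s=1$, it is divisible by $2^r$ by hypothesis.
\item For $2\le s\le r$, it carries the factor $2^{s-1}$ times a weight divisible by $2^{r-s+1}$. This is divisible by $2^r$.
\item For $s>r$, the factor $2^{s-1}$ alone is divisible by $2^r$.
\end{itemize}
This proves (i).

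For (ii), apply the lemma to the vectors $u,v_1,\dots,v_m$. The terms not involving $u$ sum to a multiple of $2^r$ exactly as in (i), and the single term $\wt(u)$ survives. Every other product involving $u$ has $s\ge2$ factors drawn from $H_X\cup L_X$. By the same count, such a term is divisible by $2^{s-1}\cdot 2^{r-s+1}=2^r$ when $s\le r$, and trivially divisible when $s>r$.

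One subtlety: a generator may repeat in a product only if we expand carelessly. The lemma uses distinct indices $j_1<\dots<j_s$, and $u$ appears at most once, so the hypothesis on Schur products of distinct vectors from $H_X\cup L_X$ is exactly what is needed.

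\textbf{Main obstacle.} I expect the main care to go into the lemma's signs and powers of $2$, and into the bookkeeping showing that the exponents in every order-$s$ term add up to at least $r$. After that, the phase computation is immediate. Note also that the gate preserves the code space, since it maps each $\ket{\bar a}$ to a scalar multiple of itself. Thus no correction operator is needed, which matches the paper's notion of transversality.
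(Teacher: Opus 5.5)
Your proof is correct. The paper does not prove this statement itself. It quotes it from \cite{webster2023transversal,leitch2025transversal}, and the appendix states the weighted generalization, Theorem~\ref{T:logical con}, again only with a pointer to those references. Your route is the standard divisibility argument behind those results. You reduce the logical action on the coset states to $\wt(g)\equiv 0$ and $\wt(g+u)\equiv\wt(u)\pmod{2^r}$, and you obtain both from $\wt(\sum_j v_j)=\sum_s(-2)^{s-1}\sum_{j_1<\dots<j_s}\wt(v_{j_1}\ast\cdots\ast v_{j_s})$. Your choice $i_0\equiv\wt(u)^{-1}\pmod{2^r}$ is exactly how the paper later uses the theorem: in the proof of Theorem~\ref{T:Doubling-general} it takes $u=\mathbf{1}$ and $n_2t\equiv 1\pmod{2^r}$.

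Your version buys self-containedness. The citation buys the weighted form with a non-constant exponent vector, which is what the paper's doubling proof actually invokes (exponents $-s$, $s$, $t$ on the three blocks). Your pointwise identity gives that form with no change. Multiply $x_1\oplus\cdots\oplus x_m=\sum_s(-2)^{s-1}e_s(x)$ by $t_i$ before summing over coordinates, and read the weight of $t\ast v_{j_1}\ast\cdots\ast v_{j_s}$ as $\sum_i t_i\prod_l (v_{j_l})_i$.

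One small slip to fix: $\omega=e^{i\pi i_0/2^{r-1}}$ satisfies $\omega^{2^{r-1}}=-1$ for odd $i_0$. So it is a primitive $2^r$-th root of unity, and $\omega^{\wt(x)}$ depends on $\wt(x)$ modulo $2^r$, not modulo $2^{r-1}$ as you wrote. This does not damage the argument, because you proved (i) and (ii) modulo $2^r$, which is exactly the modulus required.
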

For instance, in a quantum divisible code, where each $X$-stabilizer has weight divisible by $2^r$, with the all ones as a logical operator, we have that the above conditions are satisfied.
Thus such a quantum code has a transversal logical $R_Z(\frac{\pi}{2^{r-1}})$ gate.

We have included a more general type of necessary and sufficient conditions for realizing a logical phase gate, transversally through a transversal application of different gates, in Appendix \ref{Sec: colandlogics}.

%%%%%%%%%%%%%%%%%%%%%%%%%%%%%%%%%%
\subsection{Triorthogonal codes and the doubling method}\label{Sec:triorthogonal}
%%%%%%%%%%%%%%%%%%%%%%%%%%%%%%%%%%
The doubling technique is a common, and perhaps one of the most resource-efficient currently-known methods in the literature for constructing quantum codes that enable the realization of logical 
$T$ gates via transversal physical gates \cite{berardini2025asymptotically,bravyi2015doubled,jain2025transversal}. 
The quantum triorthogonal codes constructed this way also form the main ingredients of the code switching protocol considered in \cite{heussen2025efficient}.

The main ingredients of the doubling construction are two quantum codes: a self-orthogonal (equivalently, a double-orthogonal code) code and a triorthogonal code of smaller distance. 
The resulting construction produces a triorthogonal quantum code with a larger distance. 
We next briefly recall the concepts of self-orthogonal and triorthogonal codes, as they form the fundamental objects used throughout this paper.

Recall that a binary linear code $C$ is called {\em even}, if each codeword of $C$ has an even (Hamming) weight.  
We call a quantum CSS code corresponding to $C_2\subseteq C_1$ {\em self-orthogonal}, if $C_2\subseteq C_1 =C_2^\bot$. 
Quantum self-orthogonal codes form an important class of quantum codes capable of realizing logical Clifford gates via multilevel transversal Clifford operations \cite{steane1999efficient,grassl2000cyclic,tansuwannont2025clifford}. 
There are several important families of quantum self-orthogonal codes, such as quadratic residue codes and duadic codes, and many of the currently best-known quantum codes are constructed using such codes \cite{GrasslT,aly2006remarkable,dastbasteh2024new,dastbasteh2025infinite,ketkar2006nonbinary,dastbasteh2024quantum}.

In a self-orthogonal quantum CSS code corresponding to $C_2 \subseteq C_1=C_2^{\bot}$, the $X$-stabilizers correspond to the vectors in $C_2$, all of which have even weight. 
If $C_2$ has only vectors of weights divisible by four, the code will be called  {\em doubly-even} and otherwise we call it {\em singly-even}. 

Quantum triorthogonal codes are a subclass of quantum self-orthogonal codes, and form the most general family of quantum codes that can realize the logical $T$ gate in a transversal manner (up to Pauli corrections) \cite{bravyi2012magic,rengaswamy2020optimality}.  

\begin{definition}[Triorthogonal Matrices  \cite{bravyi2012magic}]
We call a binary matrix $G$ of size $m \times n$ {\em triorthogonal} if and only if for each pair and each triple of its rows the following are satisfied:  
\[
\sum_{j=1}^nG_{aj}G_{b_j} \equiv 0\mod{2}\]
for all pairs of rows $1 \leq a < b \leq m$ and
\[
\sum_{j=1}^nG_{aj}G_{b_j}G_{c_j} \equiv 0\mod{2}
\]
for all triples of rows $1 \leq a < b < c\leq m.$
\end{definition}
The notion of triorthogonal matrices can be extended to $k$-orthogonal via a natural extension of the above definition. 

\begin{theorem}[Triorthogonal Codes \cite{bravyi2012magic}] \label{T: triorthogonal codes}
Let $G=[G_1,G_0]^T$ be a triorthogonal matrix, where $G_1$ consists of $k\geq0$ odd weight rows.
Let $[\![n,k]\!]$ be the CSS code of $C_2\subseteq C_1$, where $G_0$ and $G$ are generator matrices of the binary linear codes $C_2$ and $C_1$, respectively. 
Then applying transversal $T$ to the data qubits realizes the transversal logical $T$ (up to Pauli corrections) on the logical qubits.
\end{theorem}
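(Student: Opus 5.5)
The plan is to compute the action of $T^{\otimes n}$ on every logical basis state and show that it reproduces the logical $T$ up to a phase factor and a Pauli correction. Write $G_1$ with rows $g_1,\dots,g_k$ (odd weight) and $G_0$ with rows $h_1,\dots,h_m$ (even weight, since triorthogonality forces $\sum_j G_{aj}G_{aj}G_{bj}$-type relations; more precisely, we will show each $h_j$ has weight $\equiv 0 \pmod 8$ up to the corrections). For $x\in\F_2^k$, the logical basis state is
\[
\ket{\bar x}\propto \sum_{y\in \F_2^{m}} \ket{x G_1 + y G_0},
\]
since $C_2$ is spanned by $G_0$ and the cosets of $C_2$ in $C_1$ are labelled by $xG_1$. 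Transversal $T$ multiplies $\ket{f}$ by $\omega^{\wt(f)}$ with $\omega=e^{i\pi/4}$, so everything reduces to computing $\wt(xG_1+yG_0) \bmod 8$.

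The key step is the standard identity expressing the weight of a sum of binary vectors through their Schur products:
\[
\wt\Big(\sum_a f_a\Big)=\sum_a \wt(f_a)-2\sum_{a<b}\wt(f_a\ast f_b)+4\sum_{a<b<c}\wt(f_a\ast f_b\ast f_c)-\cdots,
\]
where the terms with four or more factors carry coefficients divisible by $8$ and therefore vanish modulo $8$. Apply this with the $f_a$ ranging over the rows of $G$ selected by $(x,y)$. By the pairwise condition, each $\wt(f_a\ast f_b)$ is even, so the second term is $\equiv 0 \pmod 4$. By the triple condition, the third term is $\equiv 0\pmod 8$. Modulo 8 we therefore obtain
\[
\wt(xG_1+yG_0)\equiv \sum_i x_i\wt(g_i)+\sum_j y_j\wt(h_j) - 2\!\!\sum_{a<b}\!\wt(f_a\ast f_b)\pmod 8,
\]
where the last sum is a multiple of $4$ determined by a quadratic form with values in $\{0,4\}$.

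The quadratic part and the $y$-dependence are what produce the Pauli corrections. A phase $(-1)^{q(x,y)}$ with $q$ quadratic, obtained from $\omega^{4\cdot(\text{pair weight}/2)}$, factors as $(-1)^{\sum (\wt(f_a\ast f_b)/2)\,z_az_b}$. I would reduce it to terms linear in each variable, and then into $Z$-type phases on logical or stabilizer variables. The main obstacle, in my estimation, is handling the cross terms $x_iy_j$ and $y_jy_{j'}$. These come with factor $4$ times $\wt(\cdot\ast\cdot)/2$. I would argue that, because the codes form a CSS code and $Z$-Pauli operators act by signs linear in the bitstring, every such $\pm1$ phase can be realized by a physical $Z$-string $Z(v)$. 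Choosing $v$ with $v\cdot h_j$ and $v\cdot g_i$ matching the required parities accounts for the linear parts. For the bilinear parts, I would use that $\omega^{\wt(f)}$ combined with the triorthogonality makes the map $f\mapsto \omega^{\wt(f)}$ restricted to $C_1$ a quadratic character times a linear one, which is implementable as a Clifford and, in the simplest case, as a Pauli. The remaining phases $\omega^{\wt(h_j)}$ must be trivial or linear, which follows from $\wt(h_j)$ being even together with the pair condition. The surviving $x$-dependence is $\omega^{\sum_i x_i\wt(g_i)}$ with each $\wt(g_i)$ odd; composing with a suitable power of $S$ or $Z$ corrections turns this into $\prod_i \omega^{x_i}$, which is exactly logical $T^{\otimes k}$ (or $T^\dagger$ for some qubits, again fixed by Clifford or Pauli corrections). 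This yields the statement.
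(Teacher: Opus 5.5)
The paper gives no proof of this theorem; it only cites \cite{bravyi2012magic}. Your mod-$8$ inclusion--exclusion expansion is the standard core of that argument, and it is correct. Triorthogonality removes the cubic term and leaves $\omega^{\wt(f)}=\omega^{\sum_i x_i\wt(g_i)+\sum_j y_j\wt(h_j)}\,(-1)^{\sum_{a<b}z_az_b\wt(r_a\ast r_b)/2}$, where $z=(x,y)$ and the $r_a$ are the rows of $G$.

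\emph{The gap is the last step}, where you turn this residual phase into Pauli corrections. That step fails, and two sub-claims you rely on are false.

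(i) Triorthogonality constrains $\wt(h_j)$ only modulo $2$. Evenness of $G_0$ comes from splitting the rows into odd and even, not from the triple condition, and nothing forces $\wt(h_j)\equiv 0\pmod 8$. So when $\wt(h_j)\equiv 2\pmod 4$ you get a factor $(\pm i)^{y_j}$. This is not a sign and it is not constant on the coset $xG_1+C_2$. Hence $T^{\otimes n}$ leaves the code space, and a Pauli, which multiplies each $\ket{f}$ only by a $\pm1$ factor times a global phase, cannot undo it. A concrete example: $G_1=(1,0,0)$, $G_0=(0,1,1)$ is triorthogonal with $k=1$. The code is qubit $1$ tensored with $(\ket{00}+\ket{11})/\sqrt2$ on qubits $2,3$. $T^{\otimes 3}$ sends the latter to $(\ket{00}+i\ket{11})/\sqrt2$. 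Every two-qubit Pauli keeps the relative phase in $\{\pm i\}$, whereas $S^\dagger$ on qubit $2$ restores it.

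(ii) Your claim that every $\pm1$ phase is realized by some $Z(v)$ is false for the bilinear terms. $Z(v)$ only produces characters $(-1)^{v\cdot f}$, which are linear in $z$, and a multilinear polynomial over $\F_2$ with a nonzero $z_az_b$ coefficient is never linear. Bilinear terms involving a $y$-variable again take the state out of the code; terms in two $x$-variables are logical $CZ$'s. Likewise, $\omega^{x_i(\wt(g_i)-1)}$ needs a power of $S$ when $\wt(g_i)\equiv 3\pmod 4$. You concede the Clifford nature yourself mid-argument, so ``this yields the statement'' does not follow.

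What your computation actually proves is the version with a \emph{diagonal Clifford} correction, which is how the triorthogonal-code result is usually stated. The residual phase has the form $i^{\ell(z)}(-1)^{q(z)}$ with $\ell$ linear and $q$ quadratic, that is, a product of $S$, $Z$ and $CZ$ gates in the variables $z$. Since $z$ is a linear function of $f\in C_1$, this phase is implemented by a physical diagonal Clifford: $S$ and $CZ$ gates acting on parities $d_a\cdot f$.

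That reduction needs the rows of $G_1$ to be independent modulo $C_2$. To see this, pair $xG_1\in C_2$ with $g_i$ and use the pair condition to get $x_i=0$. Your labelling of cosets by $xG_1$ also uses this fact without justification.

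To obtain only Pauli corrections you must strengthen the hypotheses, for instance to $\wt(h_j)\equiv 0\pmod 4$, $\wt(g_i)\equiv 1\pmod 4$, and $\wt(r_a\ast r_b)\equiv 0\pmod 4$ for all pairs of rows. Then the residual phase is a sign $(-1)^{\lambda(z)}$ with $\lambda$ linear, which equals $(-1)^{v\cdot f}$ for a suitable $v$, and $Z(v)$ is the correction. Compare Theorem \ref{T:NandS for logical} with $r=3$, where the stronger requirement $\wt(h)\equiv 0\pmod 8$ on $X$-stabilizers removes the correction entirely. Under mod-$2$ triorthogonality alone, the example above shows that the Pauli version of the statement is not true, so it cannot be proved as worded.
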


The {\em doubling technique} is a method for constructing a larger quantum triorthogonal quantum code by combining a self-orthogonal code with a smaller triorthogonal code, effectively increasing the code distance \cite{betsumiya2012triply,bravyi2015doubled,jain2025transversal,berardini2025asymptotically}. 
In particular, it takes a quantum self-orthogonal code $Q_1$ with parameters $[\![n_1,1,d_1]\!]$ and a quantum triorthogonal code $Q_2$ with parameters $[\![n_2,1,d_2]\!]$ and returns a new quantum triorthogonal code with parameters
\[
[\![2n_1+n_2,1,\min\{d_1,d_2+2\}]\!],
\]
where the $X$-stabilizers of the new triorthogonal code are in the form of
\begin{equation}\label{E:general doubling matrix}
 \begin{bmatrix}
        E_1 & E_1&\bf{0_{n_2}}\\
        \bf{0_{n_1}}&\bf{0_{n_1}}&E_2\\
        \bf{0_{n_1}}&\bf{1_{n_1}}&\bf{1_{n_2}}
    \end{bmatrix},
\end{equation}
and $E_1$ and $E_2$ are generators of the $X$-stabilizers of $Q_1$ and $Q_2$, respectively. 
A simple representation, but not of the minimum weight, for the logical $X$-operator of the new code is $(\bf{1_{n_1}}, \bf{1_{n_1}},\bf{1_{n_2}})$.

A natural generalization of self-orthogonal and triorthogonal codes are {\em $k$-orthogonal codes}, where the overlap between any $k$ number of $X$-type Pauli stabilizer is even \cite{koutsioumpas2022smallest}.

%%%%%%%%%%%%%%%%%%%%%%%%%%%%%%%%%%%%%%%%%%%%%%%%%%%
\section{Quantum color codes with transversal logical $R_Z(\theta)$ gates}\label{Sec: general family}
%%%%%%%%%%%%%%%%%%%%%%%%%%%%%%%%%%%%%%%%%%%%%

The main goal of this section is to present a family of quantum codes
containing a single logical qubit that realize the logical $R_Z(\theta)$ transversally, for an arbitrary angle $\theta=\frac{\pi}{2^r}$. 
The family presented in this paper forms a subclass of color codes \cite{bombin2006topological,Color2,kubica2015universal,bombin2013introduction,eczoo_color}. 

One of the main features that singles out color codes from the other families of quantum codes is their capability of realize logical $R_Z(\theta)$ gates, as discussed in \cite{kubica2015universal,bombin2013introduction}. 
However, apart from certain numerical examples and a few scalable 2D and 3D families, the parameters and the systematic construction of such codes have not progressed much. 
In particular, the main scalable subclasses of color codes that realize logical $T$ gates transversally include 
\begin{enumerate}
    \item {\em Bravyi-Cross doubled color codes} \cite{bravyi2015doubled} 
    \[
    [\![\frac{d^3+5d^2-d-9}{4},1,d]\!]
    \]
\item {\em recursive capped color codes} \cite{tansuwannont2022achieving} 
\[
[\![\frac{d^3 + 3d^2 + 3d - 3}{4},1,d]\!],
\]
\item traditional 3D color codes \cite{bombin2015gauge} with parameters 
\[
[\![\frac{d^3 +d}{2},1,d]\!],
\]
\item and {\em Stacked codes} \cite{jochym2016stacked}
with parameters
\[
[\![\frac{3d^3 - 3d^2 + d + 3}{4},1,d]\!],
\]
\end{enumerate}
for an odd distance $d$.
To the best of our knowledge, explicit closed form formulas for the parameters of quantum color codes that realize finer 
$Z$-rotation gates while achieving increasing minimum distance have not been reported in the literature. 

\begin{table*}[t]
\centering
\small
\renewcommand{\arraystretch}{1.4}
\setlength{\tabcolsep}{6pt}
\begin{tabular}{l c c c}
\hline
Family & Parameters $[\![n,k,d]\!]$ & Logical gate & Reference \\
\hline

Our 2D family & $[\![\frac{d^2+2d-1}{2},1,d]\!]$ & Clifford & This work \\

Traditional 2D color codes & $[\![\frac{3(d^2-1)}{4}+1,1,d]\!]$ & Clifford & \cite{bombin2015gauge} \\

Capped color codes & $[\![\frac{3(d^2-1)}{2}+3,1,d]\!]$ & Clifford & \cite{tansuwannont2022achieving} \\

\hline

Our 3D family & $[\![\frac{d^3+6d^2+5d-6}{6},1,d]\!]$ & $T$ & This work \\

Bravyi-Cross doubled color codes & $[\![\frac{d^3+5d^2-d-9}{4},1,d]\!]$ & $T$ & \cite{bravyi2015doubled} \\

Traditional 3D color codes & $[\![\frac{d^3+d}{2},1,d]\!]$ & $T$ & \cite{bombin2015gauge} \\

Recursive capped color codes & $[\![\frac{d^3+3d^2+3d-3}{4},1,d]\!]$ & $T$ & \cite{tansuwannont2022achieving} \\

Stacked codes & $[\![\frac{3d^3-3d^2+d+3}{4},1,d]\!]$ & $T$ & \cite{jochym2016stacked} \\

\hline

Our $r$-dimensional family, $r,k \ge 2$ & $[\![\sum_{i=0}^{r} 2^i \binom{i+k-2}{i},1,2k-1]\!]$ & $R_Z(\frac{\pi}{2^{r-1}})$ & This work \\
\hline
\end{tabular}
\caption{Comparison of the qubit overhead of our code families with other color codes and the logical gates they enable transversally.}
\label{tab: comparison}
\end{table*}

In this section, we present a family of $2^r$-divisible quantum color codes with parameters 
\[
[\![S_r(k),1,2k-1]\!]
\]
that realizes logical $\frac{\pi}{2^{r-1}}$-$Z$ rotation gate transversally, where
\[S_r(k)=\sum_{i=0}^{r} 2^i \binom{i+k-2}{i}.
\] 
In particular, the 2D and 3D subclasses of our code family, which realize logical Clifford and $T$ gates, respectively, require less overhead than the families 1--4 reported above. 
Interestingly, optimal self-orthogonal and triorthogonal codes with small distances such as $[\![7,1,3]\!], [\![17,1,5]\!],[\![15,1,3]\!]$, and $[\![49,1,5]\!]$ appear as the first instances of color codes inside our family (see Table \ref{tab: comparison} for details). 
Our proofs employ elementary tools from linear algebra to describe these quantum codes, enabling the exact calculation of code parameters as well as their explicit stabilizers and logical operators.

Another feature of our approach is that it enables us to extend the idea to other classes of codes with different geometries including rotated surface codes. 
Moreover, we take advantage of the structure of stabilizer operators to identify $Z$ meta-check operators that enable single shot decoding of $Z$-syndromes.

We finish this part by highlighting that all the distance $d=3$ codes presented for the rest of this section are optimal; in fact, they have the minimum length among all color codes with this property, as they coincide with the codes presented in \cite{koutsioumpas2022smallest}.

%%%%%%%%%%%%%%%%%%%%%%%%%%%%%%%%%%%%%%%%%%%%%%%%
\subsection{Quantum self-orthogonal codes via doubling}
%%%%%%%%%%%%%%%%%%%%%%%%%%%%%%%%%%%%%%%%%%%%%%%%%%%%%
In this section, we first reveal an application of the doubling method in generating quantum self-orthogonal codes. 
Next, we show that well-known examples of self-orthogonal codes, including the $[\![7,1,3]\!]$ Steane and the $[\![17,1,5]\!]$ color codes, can be constructed in this way. 
Then, we extend the construction to build a family of 2D quantum (color) codes that are degenerate and many of their stabilizer generators have {\em weight four}. 

We start by formalizing our main tool of this work, namely doubling technique as a method of combining codes and its properties.  
\begin{theorem} \label{T:Doubling-general}
 Let $Q_1$ and $Q_2$ be $[\![n_1,1,d_1]\!]$ and $[\![n_2,1,d_2]\!]$ quantum codes of odd lengths with the all ones vector as a logical X-operator. Then \begin{enumerate}

     \item There exists a quantum code $Q$ with parameters $[\![2n_1+n_2,1,\min\{(d_1)_z,(d_2)_z+2\}]\!]$. 
\item The $X$-stabilizers of $Q$ are in the form
\begin{equation*}
G= \begin{bmatrix}
        E_1 & E_1&\bf{0_{n_2}}\\
        \bf{0_{n_1}}&\bf{0_{n_1}}&E_2\\
        \bf{0_{n_1}}&\bf{1_{n_1}}&\bf{1_{n_2}}
    \end{bmatrix},
\end{equation*}
where $E_1$ and $E_2$ are the generators of $X$-stabilizers of $Q_1$ and $Q_2$, respectively. 

\item The code $Q$ has an $X$-logical operator of the form $(\bf{1_{n_1}}, \bf{0_{n_1}},\bf{0_{n_2}})$.  
 \end{enumerate}
Furthermore, if $2^{r-1}$ and $2^{r}$ divide the weight of all $X$-stabilizers in $Q_1$ and $Q_2$, respectively, then $Q$ realizes a transversal $\frac{\pi}{2^{r-1}}$-$Z$ rotation logical gate.
\end{theorem}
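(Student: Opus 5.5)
The plan is to present $Q$ explicitly as a CSS code $C_2\subseteq C_1\subseteq\F_2^{2n_1+n_2}$. We take $C_2$ to be the row space of $G$ and set $C_1=C_2+\Span\{u\}$ with $u=(\mathbf{1_{n_1}},\mathbf{0_{n_1}},\mathbf{0_{n_2}})$, so that $C_1^\perp$ gives the $Z$-stabilizers. Item 2 then holds by definition, and items 1 and 3 reduce to two claims: $\dim C_1-\dim C_2=1$, and a weight bound on $C_2^\perp\setminus C_1^\perp$.

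For the dimension count it suffices to show $u\notin C_2$. A general element of $C_2$ has the form $(a,\,a+c\mathbf{1},\,b+c\mathbf{1})$ with $a\in\langle E_1\rangle$, $b\in\langle E_2\rangle$ and $c\in\F_2$. If this element equals $u$, then $a=\mathbf{1_{n_1}}$ and $a+c\mathbf{1}=0$. This would put $\mathbf{1_{n_1}}$ in the $X$-stabilizer space of $Q_1$, contradicting the fact that it is a nontrivial logical operator of $Q_1$. Hence $Q$ encodes exactly one qubit and $u$ is a logical $X$. The same argument, using that $\mathbf{1_{n_2}}\notin\langle E_2\rangle$, shows that the rows of $G$ have the expected independence.

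For the distance, write $z=(x,y,w)\in C_2^\perp\setminus C_1^\perp$. The defining conditions are:
\begin{itemize}
\item $x+y\perp E_1$;
\item $w\perp E_2$;
\item $\wt(y)+\wt(w)$ is even;
\item $\wt(x)$ is odd, since $z\cdot u=1$.
\end{itemize}
We split into two cases.
\begin{itemize}
\item If $\wt(x+y)$ is odd, then $x+y$ has odd overlap with $\mathbf{1_{n_1}}$, so it is a nontrivial $Z$-logical of $Q_1$. Hence $\wt(z)\ge \wt(x+y)\ge (d_1)_z$.
\item If $\wt(x+y)$ is even, then $\wt(y)$ is odd, hence $\wt(w)$ is odd. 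So $w$ is a nontrivial $Z$-logical of $Q_2$, and $x,y$ are both nonzero. This gives $\wt(z)\ge (d_2)_z+2$.
\end{itemize}
Both bounds are attained. In the first case take $(x,0,0)$ with $x$ a minimum-weight $Z$-logical of $Q_1$. In the second take $(e_i,e_i,w)$ with $w$ a minimum-weight $Z$-logical of $Q_2$. This gives the $Z$-distance $\min\{(d_1)_z,(d_2)_z+2\}$. I would also check the $X$-side. One route is that for the self-orthogonal-type inputs used later we have $C_1\subseteq C_2^\perp$, so $X$-logicals dominate $Z$-logicals. Failing that, I would note that the theorem is quoting the $Z$-distance, which is the relevant one for a $Z$-rotation gate.

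For the transversal gate, I would apply Theorem~\ref{T:NandS for logical} with the logical $\mathbf{1}=(\mathbf{1},\mathbf{1},\mathbf{1})$, which has odd weight $2n_1+n_2$ and is logically equivalent to $u$ modulo the last row of $G$. Divisibility of the block rows is immediate. Rows $(E_1,E_1,0)$ have weight $2\,\wt(E_1)\equiv0\pmod{2^r}$, and rows $(0,0,E_2)$ are divisible by $2^r$ by hypothesis. Schur products of $s$ vectors drawn from these blocks and from $\mathbf{1}$ split blockwise. This reduces them to a doubled Schur product in $Q_1$, which gains a factor of $2$, plus a Schur product in $Q_2$. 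The divisibility by $2^{r-s+1}$ should then follow from the corresponding divisibility inside $Q_1$ and $Q_2$, which I would derive from weight-divisibility via the standard identity $\wt(a+b)=\wt(a)+\wt(b)-2\wt(a\ast b)$, applied inductively. The main obstacle is the last row $(\mathbf{0},\mathbf{1_{n_1}},\mathbf{1_{n_2}})$, whose weight $n_1+n_2$ must be $\equiv0\pmod{2^r}$. Similar constraints arise for its Schur products with the other rows. I would first try to deduce these from the divisibility of $Q_1$ and $Q_2$ with odd all-ones logicals, using the MacWilliams/Ward-type fact that $\langle E_i,\mathbf{1}\rangle$ then has lengths constrained modulo powers of $2$. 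If that fails, I would add $n_1+n_2\equiv0\pmod{2^r}$ as a hypothesis verified inductively for the recursive family.
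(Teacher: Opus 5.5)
Your dimension count and your $Z$-distance argument (the case split on the parity of $\wt(x+y)$, with attainment by $(x,\mathbf{0},\mathbf{0})$ and $(e_i,e_i,w)$) are correct, and they are more explicit than the paper, which only cites the standard doubling argument. On the $X$-side you do what the paper does: you rely on $C_1\subseteq C_2^\perp$. This holds once $E_1$ is even and $E_2$ is self-orthogonal, for example under the divisibility hypotheses with $r\ge 2$.

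The genuine gap is in the transversal-gate part. Applying one uniform power of $R_Z(\frac{\pi}{2^{r-1}})$ via Theorem~\ref{T:NandS for logical} forces the last row $(\mathbf{0}_{n_1},\mathbf{1}_{n_1},\mathbf{1}_{n_2})$ to satisfy $n_1+n_2\equiv 0 \pmod{2^r}$. This does not follow from the hypotheses, so no Ward-type argument can supply it. For a counterexample, take $r=2$, let $Q_1$ be the $[\![5,1,1]\!]$ code whose $X$-stabilizers are all even-weight vectors of length $5$, and let $Q_2=[\![1,1,1]\!]$. The hypotheses hold, but the doubled code has an $X$-stabilizer $w$ of weight $6$. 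A uniform $S^j$ multiplies the component $\ket{w}$ of $\ket{\bar 0}=\sum_{w\in C_2}\ket{w}$ by $e^{i\pi j\,\wt(w)/2}=(-1)^j$. So $j$ must be even, and then the operator acts as a logical Pauli. Adding $n_1+n_2\equiv 0\pmod{2^r}$ as a hypothesis therefore proves only a special case. That is the case noted in the remark after the paper's proof, and it does hold for the recursive families by Lemma~\ref{lem: divisibility rec}.

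The missing idea is to use different powers on the three blocks, together with the weighted criterion of Theorem~\ref{T:logical con}. Since $n_1$ and $n_2$ are odd, choose $s,t\in\Z_{2^r}$ with $n_1s\equiv-1$ and $n_2t\equiv 1\pmod{2^r}$. Use the exponent vector $(-s\mathbf{1}_{n_1},\,s\mathbf{1}_{n_1},\,t\mathbf{1}_{n_2})$. The checks then go as follows:
\begin{itemize}
\item The all-ones logical picks up $tn_2\equiv 1$.
\item Each row $(a,a,\mathbf{0})$ gives $(-s+s)\wt(a)=0$ exactly.
\item Each row $(\mathbf{0},\mathbf{0},b)$ gives $t\,\wt(b)\equiv 0$.
\item The problematic last row gives $sn_1+tn_2\equiv -1+1=0$.
\end{itemize}
For the Schur products, those mixing the first two row blocks vanish. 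Pure first-block products cancel, and pure second-block products are controlled by the $2^r$-divisibility of $Q_2$. The only genuinely new terms are
\[
(a_1,a_1,\mathbf{0})\ast\cdots\ast(a_m,a_m,\mathbf{0})\ast(\mathbf{0},\mathbf{1},\mathbf{1})=(\mathbf{0},a_1\ast\cdots\ast a_m,\mathbf{0}),
\]
with weighted sum $s\,\wt(a_1\ast\cdots\ast a_m)$. This is divisible by $2^{r-m}$, by the $2^{r-1}$-divisibility of $Q_1$ and your identity $\wt(a+b)=\wt(a)+\wt(b)-2\wt(a\ast b)$; this is exactly where the weaker hypothesis on $Q_1$ enters. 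In the example above, the construction gives $S$ on the first block, $S^\dagger$ on the second block and $S$ on the last qubit, which does implement logical $S$.
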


The proof follows the same ideas of the doubling construction of triorthogonal codes presented in \cite{bravyi2015doubled,jain2025transversal, berardini2025asymptotically}, and \cite[Section 5.2]{hu2022designing}. 
A sketch of the proof is provided in Appendix \ref{A:closed formula}. 
 
Through simple linear algebra arithmetic, one one can show that the matrix $G$ of the above theorem has an equivalent (but optimal) representation of  
\begin{equation}\label{E:doubling matrix2}
G = \begin{bmatrix}
        E_1 & E_1&\bf{0_{n_2}}\\
        \bf{0_{n_1}}&\bf{0_{n_1}}&E_2\\
        \bf{0_{n_1}}&\bf{1_{n_1}}&\bf{v}
    \end{bmatrix},
\end{equation}
where $v$ is in correspondence to a minimum weight logical operator of the quantum code $Q_2$, and $E_1$ and $E_2$ have minimum weight stabilizers. 

As the first application, we substitute $Q_1$ and $Q_2$ in the previous theorem with the all-even-code of size $d+2$ and a quantum self-orthogonal code of an odd distance $d$. 
In other words, the next theorem enables to increase the distance of a quantum self-orthogonal code from $d$ to $d+2$ at a cost of $2(d+2)$ extra data qubits. 

\begin{corollary}\label{T:Doubling-selforthogonal}
Let $Q$ be a binary self-orthogonal quantum CSS code of odd length $n$ corresponding to $C_2 \subseteq C_1$, where $C_1=C_2\oplus\Span\{\bf{1_{n}}\}$, and with the parameters $[\![n,1,d]\!]$. Then 
\begin{enumerate}
    \item there exists a quantum self-orthogonal code $Q'$ with parameters
\[
[\![n+2(d+2), 1, d+2]\!].\]
\item the X- (or Z-)stabilizers of this code can be chosen to 
\begin{itemize}
    \item $d+1$ of them have weight four, 
    \item one with weight $2d+2$, 
    \item and the rest have the same weight as a generator set for $X$-stabilizers of the code $Q$.
    \end{itemize}
\end{enumerate}
\end{corollary}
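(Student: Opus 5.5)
We apply Theorem~\ref{T:Doubling-general} with $Q_1$ taken to be the ``all-even'' code of length $d+2$ and $Q_2=Q$. Here $Q_1$ is the $[\![d+2,1,d+2]\!]$ CSS code whose $X$-stabilizer space $C_2'$ is the even-weight code of length $d+2$, and whose $C_1'=C_2'\oplus\Span\{\bf{1_{d+2}}\}=\F_2^{d+2}$. Its $Z$-stabilizers are trivial, since $(C_1')^\perp=0$, and the only nontrivial $Z$-logical is $\bf{1_{d+2}}$. Hence $(d_1)_z=d+2$. The length $d+2$ is odd because $d$ is odd, and $\bf{1_{d+2}}$ is an $X$-logical. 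Likewise $Q$ has odd length $n$ and $\bf{1_n}$ as $X$-logical, because $C_1=C_2\oplus\Span\{\bf 1_n\}$. Theorem~\ref{T:Doubling-general} then produces a code $Q'$ of length $2(d+2)+n$ with one logical qubit and distance at least $\min\{d+2,(d_2)_z+2\}=d+2$, where $(d_2)_z\ge d$. The distance is exactly $d+2$, as witnessed by the weight-$(d+2)$ $X$-logical $(\bf 1_{d+2},\bf 0,\bf 0)$ from part 3.

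The main work is showing that $Q'$ is \emph{self-orthogonal}, i.e.\ that its $X$-stabilizer code $C_2'' = \mathrm{rowspace}(G)$ satisfies $C_1''=C_2''\oplus\Span\{L\}=(C_2'')^\perp$, with $L=(\bf 1,\bf 0,\bf 0)$. First I would check that $G$ is self-orthogonal with $L$ orthogonal to every row, by computing the pairwise overlaps:
\begin{itemize}
\item $(E_1,E_1,0)$ rows overlap with one another in twice an integer, hence evenly.
\item $(0,0,E_2)$ rows overlap evenly because $C_2$ is self-orthogonal.
\item The row $(0,\bf 1,\bf 1)$ has weight $d+2+n$, which is even since both terms are odd.
\item Its overlap with $(E_1,E_1,0)$ is $\wt(E_1\text{-row})$, which is even; its overlap with $(0,0,E_2)$ is $\wt(E_2\text{-row})$, which is even because $C_2\subseteq C_2^\perp$.
\item $L$ overlaps the first block of rows evenly and is disjoint from the others.
\end{itemize}

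Then I would count dimensions. The rank of $G$ is $(d+1)+\dim C_2+1=d+2+(n-1)/2$, using that $\dim C_2=(n-1)/2$ for a self-orthogonal $[\![n,1]\!]$ code. This equals $(N-1)/2$ with $N=n+2(d+2)$. Since $C_2''\oplus\Span\{L\}\subseteq (C_2'')^\perp$ and both sides have dimension $(N+1)/2$, the two spaces are equal. This shows $Q'$ is self-orthogonal, and therefore its $Z$-stabilizers coincide with its $X$-stabilizers, which justifies the phrase ``X- (or Z-)''.

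For the weight statement, choose for $E_1$ the $d+1$ generators $e_i+e_{i+1}$ of the even-weight code. Each row $(E_1,E_1,0)$ then has weight $4$, which gives the $d+1$ weight-four generators. Next, use the equivalent form \eqref{E:doubling matrix2} to replace $(0,\bf 1,\bf 1)$ by $(0,\bf 1,v)$, where $v$ is a minimum-weight logical of $Q$ with $\wt(v)=d$. This row has weight $(d+2)+d=2d+2$. The replacement is legitimate because $v+\bf 1_n\in C_2$, so the two rows differ by a row combination coming from $(0,0,E_2)$. The remaining generators $(0,0,E_2)$ keep the weights of the generators of $Q$. I expect the main obstacle to be the bookkeeping in this last step, namely checking $v+\bf 1\in C_2$ rather than merely $C_1$. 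This follows because both $v$ and $\bf 1$ are odd-weight elements of $C_1=C_2\oplus\Span\{\bf 1\}$, while $C_2$ is even.
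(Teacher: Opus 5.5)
Your proposal is correct and follows the paper's route: apply Theorem~\ref{T:Doubling-general} with $E_1$ generating the even-weight code of length $d+2$ and $E_2$ generating $C_2$, then read orthogonality and generator weights off the form \eqref{E:doubling matrix2}. Your dimension count giving $C_1''=(C_2'')^\perp$ and your check that $v+\mathbf{1}_n\in C_2$ make explicit what the paper leaves to ``inspection of $G$''. There are two harmless slips: the all-even code has $X$-distance $1$, so it is not literally $[\![d+2,1,d+2]\!]$, though you only use its $Z$-distance $d+2$; and you assert that $d$ is odd without justification, but this is automatic, since every logical lies in $C_2+\mathbf{1}_n$ and so has odd weight.
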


\begin{proof}
The proof is an immediate application of Theorem \ref{T:Doubling-general}. 
We just note that such a new quantum code has $X$-stabilizer generators corresponding to the matrix $G$ of the form \eqref{E:doubling matrix2} by choosing $E_1$ to be the generator matrix of the all-even code and $E_2$ be the lowest weight generators of $C_2$. 
The orthogonality of the rows of $G$ and the weight of its stabilizers can be checked by inspecting $G$. 
$\hfill \square$ \end{proof}

It should be noted that if $d>2$ in Corollary \ref{T:Doubling-selforthogonal}, then the new quantum code $Q'$ obtained through the above result is degenerate to four (the minimum distance of corresponding classical code is four). 
It is because the first $d+1$ rows of $G$ are in corresponding to stabilizers with weight four. However, the minimum distance of $Q$ is at least five. 

The singly-even or doubly-even characteristic of the quantum code $Q'$ of Corollary \ref{T:Doubling-selforthogonal} depends only on the code $Q$. In particular, $Q'$ is doubly-even if and only if $Q$ is doubly-even. 
This is because stabilizers of $Q'$ are in correspondence to the matrix $G$ presented in (\ref{E:doubling matrix2}), and 
except the middle block, which is in correspondence to stabilizer generators of $Q$, all the other blocks have weights divisible by four.

Before stating our numerical examples, we would like to point out the smallest code that is conventionally $k$-orthogonal for each $k\ge 1$. 
As we will see later, this code plays a significant role in the structure of many important codes including $[\![7,1,3]\!]$ Steane code, $[\![15,1,1]\!]$ Reed-Muller code, and many other codes.

\begin{remark}[The totally orthogonal code]
The smallest $k$-ortohognal code for each $k\ge 1$ is the $[\![1,1,1]\!]$ CSS code that has no $X$ and $Z$ stabilizers. The corresponding CSS code is $C_2\subseteq C_1$, where $C_2=\{(0)\}$ and $C_{1}=\{(0),(1)\}$. This code will be called the {\em totally orthogonal} code.    
\end{remark}

\begin{figure}
    \centering
\scalebox{0.35}{
\begin{tikzpicture}[
    var node/.style={
        circle,
        draw=black,
        line width=1.5pt,
        minimum size=9mm,
        inner sep=0pt,
        font=\sffamily\bfseries\Large
    },
    red node/.style={
        var node,
        fill=red!85!black,
        draw=none,
        text=white
    },
    check node/.style={
        rectangle,
        rounded corners=3mm,
        fill=cyan!50!white, 
        minimum size=10mm,
        inner sep=0pt
    },
    edge/.style={
        draw=black,
        line width=1.5pt
    }
]    
    \node[red node] (n1) at (-2.5, 4.5) {1};
    \node[red node] (n5) at (0, 4.5)    {5};
    \node[red node] (n4) at (2.5, 4.5)  {4};
    \node[check node,olive!55] (b1) at (-1.25, 2.25) {};
    \node[check node, purple!55] (b2) at (1.25, 2.25)  {};
    \node[var node] (n3) at (-2.5, 0) {3};
    \node[var node] (n7) at (0, 0)    {7};
    \node[var node] (n6) at (2.5, 0)  {6};
    \node[check node,blue!55] (b3) at (0, -2.25) {};
    \node[var node] (n2) at (0, -4.5) {2};
    \path[edge] (n1) -- (b1);
    \path[edge] (n5) -- (b1);
    \path[edge] (n3) -- (b1);
    \path[edge] (n7) -- (b1);
    \path[edge] (n5) -- (b2);
    \path[edge] (n4) -- (b2);
    \path[edge] (n7) -- (b2);
    \path[edge] (n6) -- (b2);
    \path[edge] (n3) -- (b3);
    \path[edge] (n7) -- (b3);
    \path[edge] (n6) -- (b3);
    \path[edge] (n2) -- (b3);

\end{tikzpicture}}
    \includegraphics[width=0.5\linewidth]{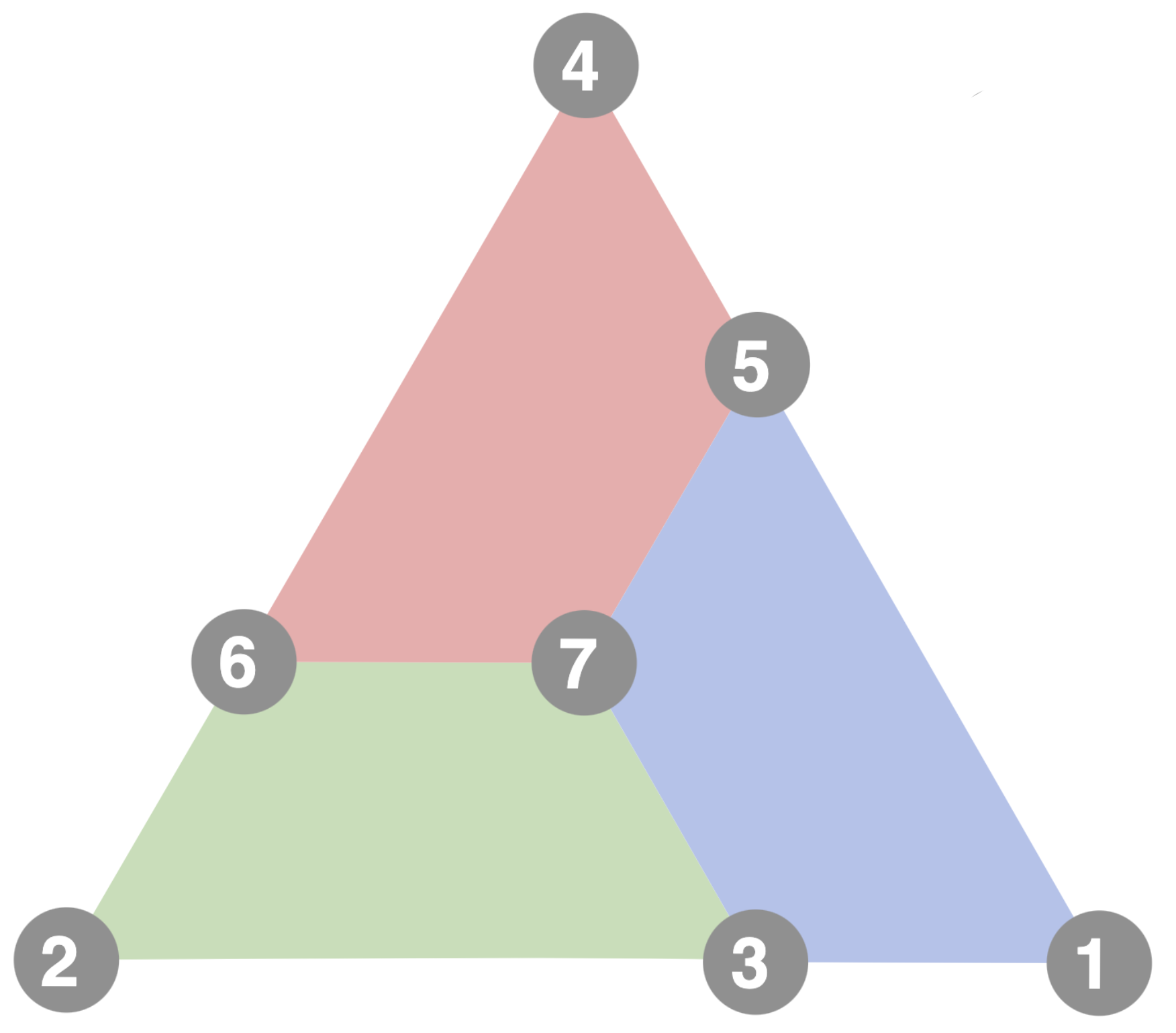}
    \caption{The regular representation of seven qubits color/Steane code (right) and the one obtained from Corollary \ref{T:Doubling-selforthogonal} (left). On the left figure, qubits 1,5,4 (respectively 3,7,6) form all-even  code, and qubit 2 is the totally orthogonal code. Here the logical X operator is $X_{1}X_{5}X_{4}$ (top row of qubits).}
    \label{fig:SteaneCode}
\end{figure}

Through the following example, we show that the $[\![7,1,3]\!]$ Steane and the $[\![15,1,3]\!]$ Reed-Muller codes can be  constructed as an application of Corollary \ref{T:Doubling-selforthogonal}. 

\begin{example}\label{example: higherd=3}
(I) Let $Q$ be the $[\![1,1,1]\!]$ totally orthogonal code and $E_1$ be the generator matrix of all-even  code of length $3$. Applying the result of Corollary \ref{T:Doubling-selforthogonal} gives the $[\![7,1,3]\!]$ Steane code (see Figure \ref{fig:SteaneCode}). 

(II) Applying the result of Theorem \ref{T:Doubling-general} to Steane code (doubly even) and the quantum totally orthogonal code implies the quantum  $[\![15,1,3]\!]$ Reed-Muller code. 

(III) Repeating the same process
to $[\![15,1,3]\!]$ and $[\![1,1,1]\!]$ gives the $[\![31,1,3]\!]$ quantum code which is $4$-orthogonal. 
Consequently redoing the above process produces the smallest minimum distance three quantum codes with transversal logical $\frac{\pi}{2^{r-1}}$-$Z$ rotation gate, as discussed in \cite{koutsioumpas2022smallest}, namely quantum simplex (or Hamming) $[\![2^r-1,1,3]\!]$ code for each $r \ge 3$.
\end{example}

The above example showed how to build quantum codes with transversal logical operators belonging to {\em level $r\ge 2$ of the Clifford hierarchy} via using the result of Corollary \ref{T:Doubling-selforthogonal} once, and then applying Theorem \ref{T:Doubling-general} to the resulting code (see Figure \ref{fig:CHd3}). 
Later we extend this process to reach any other Clifford level for each desired odd minimum distance.    

The next example shows another application in constructing a self-orthogonal code with parameters $[\![41,1,9]\!]$ which is also doubly even and has less number of data qubits than the doubly even $[\![45,1,9]\!]$ code with distance $9$ discussed in \cite{jain2025transversal} or \cite[Table 1]{ouyang2025measurement}.  

\begin{example}
Let $n=23$ and $C_2$ be the doubly even linear cyclic code with the generator polynomial $x^{12} + x^{10} + x^7 + x^4 + x^3 + x^2 + x + 1
$. The code $C_2$ is self-orthogonal, i.e., $C_2 \subseteq C_1=C_2^\bot$, and has parameters $[23,11,8]$ and the corresponding quantum CSS code of $C_2\subseteq C_1$ is the non-degenerate $[\![23,1,7]\!]$. 
Let $C_2'$ be the all-even  code of length $9$ which has parameters $[9,8,2]$. Applying the doubling construction of Corollary \ref{T:Doubling-selforthogonal} implies the existence of a degenerate quantum code $Q$ with parameters $[\![41,1,9]\!]$ (degenerate to 4). 
Note also that the $X$ (or $Z$) stabilizers of the latter quantum code are in correspondence to the matrix $G$ presented in (\ref{E:doubling matrix2}), and since $C_2$ is doubly even, one can conclude that all the rows of $G$ have weight divisible by four. Therefore, the code generated by $G$ is also doubly even. 

Another round of the doubling method based on the quantum code $Q$ above and the triorthogonal (triply even) code $[\![95,1,7]\!]$ implies a triorthogonal (triply even) code with parameters $[\![177,1,9]\!]$. 
The latter code has lower length (data qubits)  than the distance $9$ triorthogonal code presented in \cite[Table I]{jain2025transversal}. 
Using the resulting code and repeating the above argument, one can improve the parameters of several other triorthogonal codes in \cite[Table I]{jain2025transversal}. In particular, the distance $9$ -- $19$ triorthogonal codes with the improved lengths are 
$[\![ 177, 1, 9 ]\!],
[\![ 271, 1, 11 ]\!],
[\![ 409, 1, 13 ]\!],
[\![ 567, 1, 15 ]\!]
$, $[\![ 769, 1, 17 ]\!],
$ and $[\![ 975, 1, 19 ]\!]$. 
\end{example}

\begin{figure}
    \centering
\scalebox{0.6}{
\begin{tikzpicture}[>=stealth, scale=1.2]
    \draw[->, thick] (0,0) -- (7,0) node[right] {$d$ (odd)};
    \draw[->, thick] (0,0) -- (0,6) node[above] {$\mathcal{C}^{(k)}$};
    \foreach \x/\label in {2/1, 4/3, 6/5}
        \draw (\x, 0.1) -- (\x, -0.1) node[below] {\label};
    \foreach \y/\label in {1/2, 2.5/3, 4/4, 5.5/k}
        \draw (0.1, \y) -- (-0.1, \y) node[left] {$\mathcal{C}^{(\label)}$};
    \draw[dashed, gray!50] (4,0) -- (4,6);
    \draw[->, ultra thick, red] (4,1) -- (4,2.5) node[midway, left] {Example \ref{example: higherd=3}};
    \filldraw[black] (4,1) circle (2.5pt) node[right, xshift=3pt] {$\mathbf{(3, \mathcal{C}^{(2)})}$};
    \filldraw[black] (4,2.5) circle (2.5pt) node[right, xshift=3pt] {$\mathbf{(3, \mathcal{C}^{(3)})}$};
    \filldraw[black] (4,4) circle (2.5pt) node[right, xshift=3pt] {$\mathbf{(3, \mathcal{C}^{(4)})}$};
    
    \filldraw[black] (4,4.5) circle (1pt);
    \filldraw[black] (4,4.75) circle (1pt);
    \filldraw[black] (4,5.0) circle (1pt);
    \filldraw[black] (-0.5,4.5) circle (1pt);
    \filldraw[black] (-0.5,4.75) circle (1pt);
    \filldraw[black] (-0.5,5.0) circle (1pt);
    \filldraw[black] (4,5.5) circle (2.5pt) node[right, xshift=3pt] {$\mathbf{(3, \mathcal{C}^{(k)})}$};

\end{tikzpicture}
}
\caption{Distance three quantum codes with logical gates from an arbitrary Clifford hierarchy through the process of Example \ref{example: higherd=3}. }
\label{fig:CHd3}
\end{figure}

For now, we restrict our attention only to the all-even  code, and use the result of Corollary \ref{T:Doubling-selforthogonal} recursively. 
This way one gets the following family of self-orthogonal codes (belonging to the second level of the Clifford hierarchy) that can have any desired odd distance. 
This family has a 2D (triangular) representation, and one can verify that they admit a three-valent three-colorable tiling geometry. 
Thus they belong to the family of 2D color codes. The qubit connectivity can be easily viewed through a recursive pattern (see Figure \ref{fig:recursive}).

\begin{theorem}\label{T:Doubling color code1}
Let $k$ be a positive integer. 
There exists a family of  self-orthogonal (2D color) codes with parameters
\[
[\![2k^2-1,1,2k-1]\!].
\]
If $k\ge 2$, this quantum code is degenerate, and has a generator set of $X$ (or $Z$) stabilizers, where 
\begin{itemize}
    \item $k^2-k-3$ of the generators have weight four,  
    \item and $k-2$ of them that have weights
\[8,12,16,\ldots,4k-4.\]
\end{itemize}
In particular, all the codes inside this family are doubly even.\\ 
This quantum code has a 2D triangular   representation, where all the data qubits are partitioned in
\begin{itemize}
    \item 2 layers of $2s+1$ data qubits, for each $s=1,\ldots,k-1$, in a descending order, and one single data qubit at the bottom,
    \item each such two layers contain $s-1$ check qubits of weight four,
    \item the bottom layer of $2s+1$ and top layer of $2s-1$ data qubits are all connected through a check qubit,
    \item the highest layer (containing $2k-1$ data qubits) represents the location of a minimum weight logical operator.  
\end{itemize} 
Finally, the desired logical $S$ gate can be realized by the physical action of $S$ on the odd number layers of qubits, and $S^\dagger$ on the even number layers of qubits.  
\end{theorem}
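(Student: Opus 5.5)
The plan is to argue by induction on $k$, applying Corollary \ref{T:Doubling-selforthogonal} once per step with $E_1$ the all-even code. For $k=1$ I take the $[\![1,1,1]\!]$ totally orthogonal code. Suppose we already have a self-orthogonal code $Q_{k-1}$ with parameters $[\![2(k-1)^2-1,1,2k-3]\!]$, of the form $C_2\subseteq C_1=C_2\oplus\Span\{\mathbf{1}\}$. Corollary \ref{T:Doubling-selforthogonal} with $d=2k-3$ and all-even code length $d+2=2k-1$ then gives a code of length
\[
2(k-1)^2-1+2(2k-1)=2k^2-1
\]
and distance $2k-1$. Its $X$-stabilizer matrix has the form \eqref{E:doubling matrix2}. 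Since the corollary's construction puts the all-ones vector in $C_1\setminus C_2$ at the next stage, the hypotheses are restored and the induction closes. Example \ref{example: higherd=3}(I) is exactly the case $k=2$, and serves as a sanity check.

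For the stabilizer weights, I would read them off \eqref{E:doubling matrix2}. Take the all-even generators $e_i+e_{i+1}$, for $i=1,\dots,2k-2$; in the block $[E_1\,|\,E_1\,|\,0]$ each has weight $4$. The row $(\mathbf{0},\mathbf{1}_{2k-1},v)$, with $v$ a minimum weight logical of $Q_{k-1}$ of weight $2k-3$, has weight $4k-4$. All remaining rows are inherited from $Q_{k-1}$. Unrolling the recursion, step $j$ contributes $2j-2$ weight-four generators and one generator of weight $4j-4$; at $j=2$ that generator also has weight $4$. Summing gives $k^2-k+1$ weight-four generators plus the weights $8,12,\dots,4k-4$. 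The total is $k^2-1=(n-1)/2$, as it must be for a self-orthogonal code encoding one qubit. I would carry out this bookkeeping carefully and reconcile it with the count of weight-four generators stated in the theorem, since the two may differ by convention. Doubly-evenness then follows at once, because every generator weight is divisible by $4$ and $C_2$ is self-orthogonal. Degeneracy for $k\ge 2$ holds because weight-four stabilizers exist while $d\ge 3$.

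For the triangular picture, I would place the two copies of the new all-even block as two new layers of $2k-1$ qubits, stacked on top of the previous triangle. The weight-four checks $e_i+e_{i+1}$ then join two adjacent qubits in each of the two layers. The long row joins the lower new layer to the previous top layer, which carries $v$, through one check. The top layer is $(\mathbf{1}_{2k-1},\mathbf{0},\mathbf{0})$, the minimum-weight logical given by Theorem \ref{T:Doubling-general}(3). Three-colorability and trivalence I would check inductively from this recursive picture, as in Figure \ref{fig:recursive}.

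The main obstacle is the last claim: that applying $S$ on odd layers and $S^\dagger$ on even layers implements logical $S$. My approach is the general phase condition in Appendix \ref{Sec: colandlogics}. Transversal $\bigotimes_j S^{\epsilon_j}$ acts on $\ket{x}$ with phase $i^{\sum_j\epsilon_j x_j}$, so it suffices to show two things. First, every $X$-stabilizer $s$ has signed weight $\sum_j \epsilon_j s_j\equiv 0\pmod 4$. Second, every logical codeword coset behaves consistently, meaning pairwise overlaps are even, which holds by self-orthogonality. A weight-four check meets two paired layers in $2+2$ qubits, so its signed weight is $0$. The long check meets $2k-1$ qubits in one layer and $2k-3$ in the adjacent one, so its signed weight is $\pm 2$. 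This must be fixed by choosing the row representative more cleverly, or by including the inherited layer structure in the sum, and that is the step I expect to need the most care. The top layer has signed weight $\pm(2k-1)$, which is odd, and this yields logical $S$ or $S^\dagger$; the sign is fixed by the orientation of the layers.
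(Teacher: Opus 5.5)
Your induction (Corollary~\ref{T:Doubling-selforthogonal} with the all-even code of length $2k-1$ stacked on $Q_{k-1}$) and your weight bookkeeping match the paper's proof. Your count is the correct one: $k^2-k+1$ generators of weight four plus one each of weights $8,12,\dots,4k-4$, for a total of $k^2-1=\dim C_2$. Do not try to reconcile this with the stated $k^2-k-3$ ``by convention''. That number is simply wrong: it gives $-1$ for the Steane code, and the paper's own inductive step is inconsistent here too. Likewise, two rows of length $2s+1$ are joined by $2s$ weight-four checks, not $s-1$.

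Your degeneracy argument fails at $k=2$. A weight-four stabilizer witnesses degeneracy only when $d\ge 5$. The $[\![7,1,3]\!]$ Steane code has minimum stabilizer weight $4>3$, so it is non-degenerate, and the claim holds only for $k\ge 3$.

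The genuine gap is the final claim, and it cannot be closed because, read literally, it is false. You correctly found that alternating row signs give each connecting check signed weight $\pm2$, but neither of your proposed repairs can help. For $\epsilon_j\in\{\pm1\}$, the exponent $q(x)=\sum_j\epsilon_jx_j \bmod 4$ satisfies $q(x+y)\equiv q(x)+q(y)-2\,\wt(x\ast y)$. Hence $q$ is additive on the self-orthogonal code $C_2$, and the operator is logical if and only if $q$ vanishes on all of $C_2$. The choice of generators is therefore irrelevant, and the connecting check is itself an element of $C_2$.

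Concretely, in Figure~\ref{fig:SteaneCode}, put $S$ on $\{1,5,4\}$ and $\{2\}$ and $S^\dagger$ on $\{3,7,6\}$. This gives $q=-2$ on $X_3X_7X_6X_2$, so $|\bar 0\rangle$ is mapped outside the code space, in whichever direction you number the rows.

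The correct statement comes out of your own computation. Since $2\sigma+2\sigma'\equiv0 \pmod 4$ for any signs, the weight-four checks impose no condition on row-uniform signs. A connecting check joining rows of sizes $2s+1$ and $2s-1$ has $q=(2s+1)\sigma+(2s-1)\sigma'$, which vanishes mod $4$ if and only if $\sigma=\sigma'$. So a row-wise pattern works exactly when it is constant across every connecting check, including the one to the bottom qubit. It then realizes logical $S^{\sigma(2k-1)}\in\{S,S^\dagger\}$, where $\sigma$ is the sign on the top row.

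Uniform transversal $S$, which follows immediately from the doubly-even property you proved, is the simplest valid pattern. Alternation by rows is precisely what is excluded. Since $S^\dagger=SZ$, the alternating-row operator is correct only up to a fixed Pauli $Z$ correction, which the statement does not allow. The paper's proof does not check this clause; it only defers to Corollary~\ref{T:Doubling-selforthogonal}.
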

A complete proof of this theorem is given in Appendix \ref{A:closed formula}.

\begin{figure}
    \centering
\scalebox{0.3}{
\begin{tikzpicture}[
    white node/.style={
        circle,
        draw=black,
        line width=1.2pt,
        minimum size=6mm,
        fill=white,
        inner sep=0pt
    },
    red node/.style={
        circle,
        fill=red,
        minimum size=6mm,
        inner sep=0pt
    },
    check node/.style={
        rectangle,
        rounded corners=1.8mm,
        fill=cyan!60!white,
        minimum size=6.5mm,
        inner sep=0pt
    },
    edge/.style={
        draw=black,
        line width=1.1pt
    },
    label style/.style={
        font=\sffamily\itshape\LARGE,
        anchor=west
    }
]
    \node[white node] (p_bot) at (0, -4.5) {};
    \node[check node,blue!55] (p_c3)  at (0, -2.75) {};
    
    \node[white node] (p_w3)  at (-1.5, -1.25) {};
    \node[white node] (p_w7)  at (0, -1.25) {};
    \node[white node] (p_w6)  at (1.5, -1.25) {};
    
    \node[check node,olive!55] (p_c1)  at (-0.75, 0.25) {};
    \node[check node,purple!55] (p_c2)  at (0.75, 0.25) {};
    
    \node[white node] (p_t3)  at (-1.5, 1.75) {};
    \node[white node] (p_t7)  at (0, 1.75) {};
    \node[white node] (p_t6)  at (1.5, 1.75) {};
    \path[edge] (p_bot) -- (p_c3);
    \path[edge] (p_c3)  -- (p_w3);
    \path[edge] (p_c3)  -- (p_w7);
    \path[edge] (p_c3)  -- (p_w6);
    \path[edge] (p_w3)  -- (p_c1);
    \path[edge] (p_w7)  -- (p_c1);
    \path[edge] (p_w7)  -- (p_c2);
    \path[edge] (p_w6)  -- (p_c2);
    \path[edge] (p_c1)  -- (p_t3);
    \path[edge] (p_c1)  -- (p_t7);
    \path[edge] (p_c2)  -- (p_t7);
    \path[edge] (p_c2)  -- (p_t6);
    \node[check node,blue!55] (mid_check) at (0, 3.5) {};

    \path[edge] (p_t3) -- (mid_check);
    \path[edge] (p_t7) -- (mid_check);
    \path[edge] (p_t6) -- (mid_check);

    \node[white node] (u_w1) at (-3.0, 5.25) {};
    \node[white node] (u_w2) at (-1.5, 5.25) {};
    \node[white node] (u_w3) at (0, 5.25)    {};
    \node[white node] (u_w4) at (1.5, 5.25)  {};
    \node[white node] (u_w5) at (3.0, 5.25)  {};

    \node[check node,purple!55] (u_c1) at (-2.25, 6.75) {};
    \node[check node,olive!55] (u_c2) at (-0.75, 6.75) {};
    \node[check node,purple!55] (u_c3) at (0.75, 6.75)  {};
    \node[check node,olive!55] (u_c4) at (2.25, 6.75)  {};

    \node[red node] (u_r1) at (-3.0, 8.25) {};
    \node[red node] (u_r2) at (-1.5, 8.25) {};
    \node[red node] (u_r3) at (0, 8.25)    {};
    \node[red node] (u_r4) at (1.5, 8.25)  {};
    \node[red node] (u_r5) at (3.0, 8.25)  {};

    \path[edge] (mid_check) -- (u_w1);
    \path[edge] (mid_check) -- (u_w2);
    \path[edge] (mid_check) -- (u_w3);
    \path[edge] (mid_check) -- (u_w4);
    \path[edge] (mid_check) -- (u_w5);
    
    \path[edge] (u_w1) -- (u_c1);
    \path[edge] (u_w2) -- (u_c1);
    \path[edge] (u_w2) -- (u_c2);
    \path[edge] (u_w3) -- (u_c2);
    \path[edge] (u_w3) -- (u_c3);
    \path[edge] (u_w4) -- (u_c3);
    \path[edge] (u_w4) -- (u_c4);
    \path[edge] (u_w5) -- (u_c4);
    
    \path[edge] (u_c1) -- (u_r1);
    \path[edge] (u_c1) -- (u_r2);
    \path[edge] (u_c2) -- (u_r2);
    \path[edge] (u_c2) -- (u_r3);
    \path[edge] (u_c3) -- (u_r3);
    \path[edge] (u_c3) -- (u_r4);
    \path[edge] (u_c4) -- (u_r4);
    \path[edge] (u_c4) -- (u_r5);
    \draw[line width=1pt] (3.6, 8.4) -- (3.8, 8.4) -- (3.8, 4.8) -- (3.6, 4.8);
    \node[label style] at (4.1, 6.6) {new all even};
    \draw[line width=1pt] (3.6, 3.9) -- (3.8, 3.9) -- (3.8, 3.1) -- (3.6, 3.1);
    \node[label style] at (4.1, 3.5) {new connecting check};

    \draw[line width=1pt] (3.6, 2.1) -- (3.8, 2.1) -- (3.8, -5.1) -- (3.6, -5.1);
    \node[label style] at (4.1, -1.5) {previous block};
\end{tikzpicture}
\begin{tikzpicture}[
    white node/.style={
        circle,
        draw=black,
        line width=1.2pt,
        minimum size=6mm,
        fill=white,
        inner sep=0pt
    },
    red node/.style={
        circle,
        fill=red,
        minimum size=6mm,
        inner sep=0pt
    },
    check node/.style={
        rectangle,
        rounded corners=1.8mm,
        fill=cyan!60!white,
        minimum size=6.5mm,
        inner sep=0pt
    },
    edge/.style={
        draw=black,
        line width=1.1pt
    }
]
    \node[white node] (p_bot) at (0, -4.5) {};
    \node[check node,blue!55] (p_c3)  at (0, -2.75) {};
    
    \node[white node] (p_w3)  at (-1.5, -1.25) {};
    \node[white node] (p_w7)  at (0, -1.25) {};
    \node[white node] (p_w6)  at (1.5, -1.25) {};
    
    \node[check node,olive!55] (p_c1)  at (-0.75, 0.25) {};
    \node[check node,purple!55] (p_c2)  at (0.75, 0.25) {};
    
    \node[white node] (p_t3)  at (-1.5, 1.75) {};
    \node[white node] (p_t7)  at (0, 1.75) {};
    \node[white node] (p_t6)  at (1.5, 1.75) {};

    \path[edge] (p_bot) -- (p_c3);
    \path[edge] (p_c3)  -- (p_w3);
    \path[edge] (p_c3)  -- (p_w7);
    \path[edge] (p_c3)  -- (p_w6);
    
    \path[edge] (p_w3)  -- (p_c1);
    \path[edge] (p_w7)  -- (p_c1);
    \path[edge] (p_w7)  -- (p_c2);
    \path[edge] (p_w6)  -- (p_c2);
    
    \path[edge] (p_c1)  -- (p_t3);
    \path[edge] (p_c1)  -- (p_t7);
    \path[edge] (p_c2)  -- (p_t7);
    \path[edge] (p_c2)  -- (p_t6);
    
    \node[check node,blue!55] (mid_check) at (0, 3.5) {};
    \path[edge] (p_t3) -- (mid_check);
    \path[edge] (p_t7) -- (mid_check);
    \path[edge] (p_t6) -- (mid_check);
    \node[white node] (m_w1) at (-3.0, 5.25) {};
    \node[white node] (m_w2) at (-1.5, 5.25) {};
    \node[white node] (m_w3) at (0, 5.25)    {};
    \node[white node] (m_w4) at (1.5, 5.25)  {};
    \node[white node] (m_w5) at (3.0, 5.25)  {};

    \path[edge] (mid_check) -- (m_w1);
    \path[edge] (mid_check) -- (m_w2);
    \path[edge] (mid_check) -- (m_w3);
    \path[edge] (mid_check) -- (m_w4);
    \path[edge] (mid_check) -- (m_w5);

    \node[check node,olive!55]  (m_c1) at (-2.25, 6.75) {};
    \node[check node,purple!55] (m_c2) at (-0.75, 6.75) {};
    \node[check node,olive!55]  (m_c3) at (0.75, 6.75)  {};
    \node[check node,purple!55] (m_c4) at (2.25, 6.75)  {};

    \path[edge] (m_w1) -- (m_c1);
    \path[edge] (m_w2) -- (m_c1);
    \path[edge] (m_w2) -- (m_c2);
    \path[edge] (m_w3) -- (m_c2);
    \path[edge] (m_w3) -- (m_c3);
    \path[edge] (m_w4) -- (m_c3);
    \path[edge] (m_w4) -- (m_c4);
    \path[edge] (m_w5) -- (m_c4);

    \node[white node] (m_t1) at (-3.0, 8.25) {};
    \node[white node] (m_t2) at (-1.5, 8.25) {};
    \node[white node] (m_t3) at (0, 8.25)    {};
    \node[white node] (m_t4) at (1.5, 8.25)  {};
    \node[white node] (m_t5) at (3.0, 8.25)  {};
    \path[edge] (m_c1) -- (m_t1);
    \path[edge] (m_c1) -- (m_t2);
    \path[edge] (m_c2) -- (m_t2);
    \path[edge] (m_c2) -- (m_t3);
    \path[edge] (m_c3) -- (m_t3);
    \path[edge] (m_c3) -- (m_t4);
    \path[edge] (m_c4) -- (m_t4);
    \path[edge] (m_c4) -- (m_t5);
    
    \node[check node,blue!55] (high_check) at (0, 10.0) {};

    \path[edge] (m_t1) -- (high_check);
    \path[edge] (m_t2) -- (high_check);
    \path[edge] (m_t3) -- (high_check);
    \path[edge] (m_t4) -- (high_check);
    \path[edge] (m_t5) -- (high_check);

    \node[white node] (u_w1) at (-4.5, 11.75) {};
    \node[white node] (u_w2) at (-3.0, 11.75) {};
    \node[white node] (u_w3) at (-1.5, 11.75) {};
    \node[white node] (u_w4) at (0, 11.75)    {};
    \node[white node] (u_w5) at (1.5, 11.75)  {};
    \node[white node] (u_w6) at (3.0, 11.75)  {};
    \node[white node] (u_w7) at (4.5, 11.75)  {};

    \path[edge] (high_check) -- (u_w1);
    \path[edge] (high_check) -- (u_w2);
    \path[edge] (high_check) -- (u_w3);
    \path[edge] (high_check) -- (u_w4);
    \path[edge] (high_check) -- (u_w5);
    \path[edge] (high_check) -- (u_w6);
    \path[edge] (high_check) -- (u_w7);
    
    \node[check node,olive!55]  (u_c1) at (-3.75, 13.25) {};
    \node[check node,purple!55] (u_c2) at (-2.25, 13.25) {};
    \node[check node,olive!55]  (u_c3) at (-0.75, 13.25) {};
    \node[check node,purple!55] (u_c4) at (0.75, 13.25)  {};
    \node[check node,olive!55]  (u_c5) at (2.25, 13.25)  {};
    \node[check node,purple!55] (u_c6) at (3.75, 13.25)  {};
    
    \path[edge] (u_w1) -- (u_c1);
    \path[edge] (u_w2) -- (u_c1);
    \path[edge] (u_w2) -- (u_c2);
    \path[edge] (u_w3) -- (u_c2);
    \path[edge] (u_w3) -- (u_c3);
    \path[edge] (u_w4) -- (u_c3);
    \path[edge] (u_w4) -- (u_c4);
    \path[edge] (u_w5) -- (u_c4);
    \path[edge] (u_w5) -- (u_c5);
    \path[edge] (u_w6) -- (u_c5);
    \path[edge] (u_w6) -- (u_c6);
    \path[edge] (u_w7) -- (u_c6);
    
    \node[red node] (u_r1) at (-4.5, 14.75) {};
    \node[red node] (u_r2) at (-3.0, 14.75) {};
    \node[red node] (u_r3) at (-1.5, 14.75) {};
    \node[red node] (u_r4) at (0, 14.75)    {};
    \node[red node] (u_r5) at (1.5, 14.75)  {};
    \node[red node] (u_r6) at (3.0, 14.75)  {};
    \node[red node] (u_r7) at (4.5, 14.75)  {};
    
    \path[edge] (u_c1) -- (u_r1);
    \path[edge] (u_c1) -- (u_r2);
    \path[edge] (u_c2) -- (u_r2);
    \path[edge] (u_c2) -- (u_r3);
    \path[edge] (u_c3) -- (u_r3);
    \path[edge] (u_c3) -- (u_r4);
    \path[edge] (u_c4) -- (u_r4);
    \path[edge] (u_c4) -- (u_r5);
    \path[edge] (u_c5) -- (u_r5);
    \path[edge] (u_c5) -- (u_r6);
    \path[edge] (u_c6) -- (u_r6);
    \path[edge] (u_c6) -- (u_r7);

\end{tikzpicture}

}
    \caption{Left: recursive construction of $[\![17,1,5]\!]$ from $[\![7,1,3]\!]$ code of Figure \ref{fig:SteaneCode} (appear as the previous block). 
    Here three colored squares represent check qubits ($X$ and $Z$), white circles data qubits, and highlighted red circles a minimum weight logical operator ($X$ and $Z$).
    Right: recursive construction of $[\![31,1,7]\!]$ from $[\![17,1,5]\!]$ code.}
    \label{fig:recursive}
\end{figure}

An instance of such recursive pattern is depicted in Figure \ref{fig:recursive}, where it describes the quantum $[\![17,1,5]\!]$ self-orthogonal code. 
The 2D structure is trivial and the 3-coloring of vertices is shown in the figure. 
The locations of a minimum weight logical operator is shown in red. 
The picture also gives useful information about the weight of the stabilizer generators. 

The first two example of codes inside the family of Theorem \ref{T:Doubling color code1} are the well-known $[\![7,1,3]\!]$ and $[\![17,1,5]\!]$ 2D color codes \cite{bombin2006topological,eczoo_2d_color}.

The result of this theorem allows to construct codes that realize logical transversal gates for a generating Clifford set and for any desired odd minimum distance. 
This result allows us to realize certain points in the 2D  distance-Clifford hierarchy lattice (see Figure \ref{fig:CH2}). 

\begin{figure}[h]
\scalebox{0.6}{
\begin{tikzpicture}[>=stealth, scale=1.2]
    \draw[->, thick] (0,0) -- (8.5,0) node[right] {$d$ (odd)};
    \draw[->, thick] (0,0) -- (0,4) node[above] {$\mathcal{C}^{(k)}$};

    \foreach \x/\label in {1/3, 3/5, 5/7, 7.5/2k+1}
        \draw (\x, 0.1) -- (\x, -0.1) node[below] {$\mathbf{\label}$};

    \foreach \y/\label in {1/2, 2.5/3}
        \draw (0.1, \y) -- (-0.1, \y) node[left] {$\mathcal{C}^{(\label)}$};
    \draw[dashed, gray!50] (0,1) -- (8,1);
    \draw[->, ultra thick, blue] (1,1) -- (3,1) node[midway, above] {Theorem \ref{T:Doubling color code1}};
    \filldraw[black] (1,1) circle (2.5pt);
    \filldraw[black] (3,1) circle (2.5pt);
    \filldraw[black] (5,1) circle (2.5pt);
    \filldraw[black] (5.8,1) circle (1pt);
    \filldraw[black] (6.2,1) circle (1pt);
    \filldraw[black] (6.6,1) circle (1pt);
    \filldraw[black] (7.5,1) circle (2.5pt);

\end{tikzpicture}
}
\caption{The construction described in Theorem \ref{T:Doubling color code1} that allows to increase the distance of self-orthogonal quantum code with the aid of all-even codes.}
\label{fig:CH2}
\end{figure}

The exact transversal operator for realizing logical $S$ gate for such codes can be obtained using the proof of Theorem \ref{T:Doubling-general} given in Appendix \ref{A:closed formula}. 
For more general discussions regarding the realization of Clifford gates in the case of self-orthogonal codes with one or more more logical qubits, one can consult \cite{tansuwannont2025clifford}. 

\begin{figure*}[t]
\centering
\begin{tikzpicture}
\scalebox{0.8}{\begin{groupplot}[
    group style={
        group size=2 by 1,
        horizontal sep=5cm,
    },
    width=0.45\linewidth,
    height=0.35\linewidth,
    grid=major,
    domain=3:20,
    samples=300,
    xlabel={$d$},
    ylabel={$n$},
    scaled y ticks=false,
    y tick label style={/pgf/number format/fixed},
]

\nextgroupplot[
    title={2D Families},
    legend pos=north east,
    legend cell align=left,
    legend style={font=\scriptsize, fill=white, draw=none, at={(1.05,0.5)}, anchor=west},
]
\addplot[blue, thick, mark=*, mark repeat=10] {(x^2 + 2*x - 1)/2};
\addlegendentry{Our 2D}
\addplot[red, thick, dashed, mark=square*, mark repeat=10] {(3*(x^2-1)/4 + 1)};
\addlegendentry{Traditional 2D}
\addplot[green!60!black, thick, dotted, mark=triangle*, mark repeat=10] {(3*(x^2-1)/2 + 3)};
\addlegendentry{Capped 2D}

\nextgroupplot[
    title={3D Families},
    legend pos=north east,
    legend cell align=left,
    legend style={font=\scriptsize, fill=white, draw=none, at={(1.05,0.5)}, anchor=west},
]
\addplot[blue, thick, mark=*, mark repeat=10] {(x^3 + 6*x^2 + 5*x - 6)/6};
\addlegendentry{Our 3D}
\addplot[black, thick, dashed, mark=square*, mark repeat=10] {(x^3 + 5*x^2-x-9)/4};
\addlegendentry{Bravyi-Cross}
\addplot[red, thick, dashed, mark=square*, mark repeat=10] {(x^3 + x)/2};
\addlegendentry{Traditional 3D}
\addplot[green!60!black, thick, dotted, mark=triangle*, mark repeat=10] {(x^3 + 3*x^2 + 3*x -3)/4};
\addlegendentry{Recursive capped}
\addplot[purple, thick, dash pattern=on 3pt off 2pt, mark=diamond*, mark repeat=10] {(3*x^3 - 3*x^2 + x + 3)/4};
\addlegendentry{Stacked}
\end{groupplot}}
\end{tikzpicture}
\caption{Data qubit overhead $n$ (code length) as a function of code distance $d$ for 2D (left) and 3D (right) quantum families.}
\label{fig: comparison}
\end{figure*}

It is important to note that the codes constructed in Theorem \ref{T:Doubling color code1} can be substituted with other carefully chosen codes, and this can lead to other families of quantum codes that have transversal logical Clifford gates. 

%%%%%%%%%%%%%%%%%%%%%%%%%%%%%%%%%%%%%%%%%%%%%%
\subsection{Reaching finer logical Z-rotations}
%%%%%%%%%%%%%%%%%%%%%%%%%%%%%%%%%%%%%%%%%%%%%%

In this section, we extend the discussion presented in the previous section to build quantum codes with an arbitrary odd minimum distance that contain a transversal logical $Z$-rotation belonging to each Clifford hierarchy. 

Our proof is based on doubling the codes constructed from Theorem \ref{T:Doubling color code1} to 
first lift them to the Clifford level $\mathcal{C}^{(3)}$. Later we extend this result to each finer Z-rotation gate.

\begin{theorem}\label{T:CH3 family}
There exists a family of quantum (3D color) codes with parameters 
\[
[\![\frac{2}{3}k(k+1)(2k+1)-(2k+1),1,2k-1]\!]
\]
that realizes the logical $T$ gate transversally for each $k\ge 1$. Moreover, all $X$-stabilizers have weights divisible by $8$ (triply even).
\end{theorem}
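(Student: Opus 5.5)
Let $T_k$ denote the code of Theorem \ref{T:CH3 family} with distance $2k-1$, and $D_k$ the $[\![2k^2-1,1,2k-1]\!]$ doubly even self-orthogonal code of Theorem \ref{T:Doubling color code1}. The plan is induction on $k$, applying Theorem \ref{T:Doubling-general} at each step with $Q_1=D_{k+1}$ and $Q_2=T_k$.

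\emph{Base case.} For $k=1$ the formula gives $\frac{2}{3}\cdot 1\cdot 2\cdot 3-3=1$. So $T_1$ is the totally orthogonal $[\![1,1,1]\!]$ code. It has no $X$-stabilizers, so it is vacuously triply even, and its logical $X$ is the all ones vector of odd length.

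\emph{Inductive step.} Suppose $T_k$ is an $[\![n_k,1,2k-1]\!]$ code of odd length, triply even, with all ones as a logical $X$. Take $Q_1=D_{k+1}$. This code is doubly even, self-orthogonal with $C_1=C_2\oplus\Span\{\mathbf{1}\}$, has odd length $2(k+1)^2-1$, and has distance $2k+1$. Theorem \ref{T:Doubling-general} applied with $r=3$ then gives three things:
\begin{itemize}
\item the new length is $n_{k+1}=2(2(k+1)^2-1)+n_k$;
\item the $Z$-distance is $\min\{2k+1,(2k-1)+2\}=2k+1$;
\item a transversal $\frac{\pi}{4}$ rotation, i.e.\ a logical $T$, since $D_{k+1}$ is $4$-divisible and $T_k$ is $8$-divisible.
\end{itemize}

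The $X$-distance is no smaller, because the $X$-logicals of the doubled code contain the odd weight all ones vectors of the pieces. I would check this against the proof sketch in Appendix \ref{A:closed formula}, or argue it directly from the stabilizer matrix $G$.

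\emph{Triple evenness.} I would check the rows of $G$ directly:
\begin{itemize}
\item a row $(E_1,E_1,0)$ has weight $2\wt(E_1)\equiv 0 \pmod 8$;
\item a row $(0,0,E_2)$ has weight divisible by $8$ by induction;
\item the row $(0,\mathbf{1}_{n_1},\mathbf{1}_{n_2})$ has weight $n_1+n_k$.
\end{itemize}
So I need $n_1+n_k\equiv 0\pmod 8$, i.e.\ $2(k+1)^2-1+n_k\equiv 0 \pmod 8$. I would prove this as part of the induction.

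The code is not just triply even at the generator level: Theorem \ref{T:NandS for logical} also needs the pairwise and triple Schur product divisibility conditions. These follow from the block structure, as in the doubling proofs for triorthogonal codes. The key input is that $D_{k+1}$ is self-orthogonal and doubly even, which forces pairwise overlaps of its stabilizers to be even, and hence overlaps in the doubled copies to be divisible by $4$.

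For the logical operator, the all ones vector of length $n_{k+1}$ is a valid logical $X$: it has odd weight, and it differs from $(\mathbf{1},\mathbf{0},\mathbf{0})$ by a stabilizer combination. This lets the next step of the induction apply. Lengths stay odd, since $n_{k+1}=2n_1+n_k$ with $n_k$ odd.

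\emph{Closed form.} Unrolling the recursion,
\[
n_k=1+\sum_{j=2}^{k}2(2j^2-1)=1+4\sum_{j=2}^{k}j^2-2(k-1)=\frac{2}{3}k(k+1)(2k+1)-4-2k+3,
\]
which simplifies to $\frac{2}{3}k(k+1)(2k+1)-(2k+1)$, as claimed. The mod $8$ condition then reduces to $n_{k+1}\equiv 0 \pmod 8$ minus $n_1$ terms. Concretely, I would verify from the closed form that $2(k+1)^2-1+n_k\equiv 0\pmod 8$, checking residues of $k$ mod $4$; for example $k=1$ gives $7+1=8$, and $k=2$ gives $17+15=32$.

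\emph{Main obstacle and 3D color structure.} I expect this congruence to be the main obstacle: it is exactly what the construction needs, and it is not obviously automatic. If it fails for some residue class, I would replace the connecting row with $(0,\mathbf{1},v)$ as in \eqref{E:doubling matrix2}, or add the all ones vector of $Q_2$. The 3D color code claim then follows by stacking the 2D lattice of $D_{k+1}$ in two layers over the previous 3D code, in the style of the recursion of Figure \ref{fig:recursive}.
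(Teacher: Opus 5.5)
Your plan is the paper's proof: induction on $k$, doubling $Q_1=D_{k+1}$ with $Q_2=T_k$ via Theorem~\ref{T:Doubling-general}, and reading the length off the recursion. As written, though, it leaves open the one step that supports the triply-even claim and keeps the induction going, since at the next stage Theorem~\ref{T:Doubling-general} needs $T_{k+1}$ to be $8$-divisible. You never prove $n_1+n_k\equiv 0\pmod 8$, and you suggest it might fail. It cannot fail, and no case analysis on $k \bmod 4$ is needed. With $n_1=2(k+1)^2-1$ and $n_k=\frac{2}{3}k(k+1)(2k+1)-(2k+1)$,
\[
n_1+n_k=2k(k+1)+\tfrac{2}{3}k(k+1)(2k+1)=\tfrac{4}{3}k(k+1)(k+2).
\]
Because $6\mid k(k+1)(k+2)$, this is $4$ times the even integer $k(k+1)(k+2)/3$; this is exactly the paper's computation. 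Your fallback would not have helped anyway. Since $v+\mathbf{1}_{n_2}$ is an $X$-stabilizer of $Q_2$, the row $(\mathbf{0},\mathbf{1},v)$ differs from $(\mathbf{0},\mathbf{1},\mathbf{1})$ by a combination of the rows $(\mathbf{0},\mathbf{0},E_2)$. So it generates the same stabilizer space, with the same weights.

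Second, checking the rows of $G$ only controls the generators. The statement concerns every $X$-stabilizer, and that full-span divisibility is also the hypothesis you need on $Q_2$ at the next step. Instead of the Schur-product detour, use the paper's observation. Every element of the row space of $G$ is either $(v,v,w)$ or $(v,v+\mathbf{1}_{n_1},w+\mathbf{1}_{n_2})$ with $v\in\langle E_1\rangle$ and $w\in\langle E_2\rangle$. These have weight $2\wt(v)+\wt(w)$ and $n_1+n_k-\wt(w)$, respectively. The span $\langle E_1\rangle$ is doubly even by Theorem~\ref{T:Doubling color code1}, $\langle E_2\rangle$ is triply even by induction, and $n_1+n_k\equiv 0\pmod 8$ by the identity above. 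Hence both weights vanish mod $8$. With these two points filled in, your argument coincides with the paper's.
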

The proof is provided in Appendix \ref{A:closed formula}.

Before proceeding to our next result, which gives codes with a logical gate from each level of the Clifford hierarchy, we establish some remarks and notation that are essential for later. 

We will provide a single code for each Clifford level and odd distance in our proposed family below. 
In this family, the length of the code that realizes the logical $Z$-rotation gates of the Clifford hierarchy $\mathcal{C}^{(r)}$ and has distance $2k-1$ will be denoted by $S_r(k)$. 
Moreover, as we apply the result of Theorem \ref{T:Doubling-general}, the length of codes within the family satisfies the identity  
\begin{equation}\label{equ:first form}
S_r(k)=2\big(\sum_{i=1}^{k}S_{r-1}(i) \big)-1
\end{equation}
with $S_r(1)=1$ (the $[\![1,1,1]\!]$ code implies this).
This is because the doubling construction requires 
\begin{itemize}
    \item codes of length $S_{r-1}(k)$ (two of them) 
    \item and one code with the length $S_r(k-1)$ 
\end{itemize}
to form a code of length $S_r(k)$.
Later in Appendix \ref{A:closed formula}, we prove that 
\begin{equation}\label{equ: recurrence length}
S_r(k)=\sum_{i=0}^{r} 2^i \binom{i+k-2}{i}
\end{equation}
and also that 
\begin{equation}\label{equ:divisibility}
S_r(k-1)+S_{r-1}(k)= 2^r \binom{k+r-2}{r}.
\end{equation}
Using these properties one can prove the following result. 
We discuss the proof in Appendix \ref{A:closed formula}. 

\begin{theorem}\label{T:general family form}
There exists a family of quantum codes with parameters 
\[
[\![S_r(k),1,2k-1]\!]
\]
that realizes logical $\frac{\pi}{2^{r-1}}$-$Z$ rotation gate through a transversal action of $\frac{\pi}{2^{r-1}}$-$Z$ rotation gates on the data qubits. \\
The $X$-stabilizers of a $[\![S_r(k),1,2k-1]\!]$ code have weights divisible by $2^r$, for each $k\ge 1$ and $r\ge 1$.
\end{theorem}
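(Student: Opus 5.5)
We will argue by double induction on $(r,k)$, invoking Theorem \ref{T:Doubling-general} at each step. The code for $(r,k)$ is obtained by taking $Q_1$ to be the $[\![S_{r-1}(k),1,2k-1]\!]$ code and $Q_2$ to be the $[\![S_r(k-1),1,2k-3]\!]$ code. The doubled code then has length $2S_{r-1}(k)+S_r(k-1)$, which equals $S_r(k)$ by the recurrence \eqref{equ:first form}.

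\emph{Base cases.} For $k=1$ in every level we take the totally orthogonal $[\![1,1,1]\!]$ code. It has no $X$-stabilizers, so divisibility holds vacuously, and the all-ones vector $(1)$ is its logical $X$. For $r=1$ we use the all-even codes, which have length $S_1(k)=2k-1$: their $X$-stabilizers generate the even-weight subspace, and we take the weight-two generators so that every generator has weight divisible by $2$. For $r=2$ the family of Theorem \ref{T:Doubling color code1} also serves, and it is consistent with the same doubling step.

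\emph{Inductive step.} Fix $r\ge 2$ and $k\ge 2$. The lengths $S_{r-1}(k)$ and $S_r(k-1)$ are odd, since $S_r(k)=2(\cdots)-1$. By the inductive hypothesis both ingredient codes have the all-ones vector as a logical $X$-operator. I expect this to need checking: Theorem \ref{T:Doubling-general} only states $(\mathbf{1},\mathbf{0},\mathbf{0})$ as a logical. However, $(\mathbf{1},\mathbf{1},\mathbf{1})$ differs from it by the stabilizer row $(\mathbf{0},\mathbf{1},\mathbf{1})$, so the all-ones vector is again a logical, and the hypothesis propagates. The distance of the new code is $\min\{(d_1)_z,(d_2)_z+2\}=\min\{2k-1,(2k-3)+2\}=2k-1$, so we only need to track the $Z$-distances inductively.

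\emph{Divisibility.} The $X$-stabilizer generators are the rows of $G$. A row $(E_1,E_1,\mathbf{0})$ has weight $2\,\wt(E_1)$, and since $2^{r-1}$ divides $\wt(E_1)$, this weight is divisible by $2^r$. Rows $(\mathbf{0},\mathbf{0},E_2)$ are divisible by $2^r$ by induction. The key row is $(\mathbf{0},\mathbf{1},\mathbf{1})$, whose weight is $S_{r-1}(k)+S_r(k-1)=2^r\binom{k+r-2}{r}$ by identity \eqref{equ:divisibility}. This is where the closed form is essential, and I expect establishing \eqref{equ: recurrence length} and \eqref{equ:divisibility} to be the main obstacle. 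I would verify both by induction, using the hockey-stick identity $\sum_{i=1}^{k}\binom{j+i-2}{j}=\binom{j+k-1}{j+1}$ applied termwise to $2\sum_i S_{r-1}(i)-1$. The sum $S_{r-1}(k)+S_r(k-1)$ should then telescope via Pascal's rule, $\binom{i+k-2}{i}+\binom{i+k-3}{i}$ paired with the shifted powers of two, leaving the single top term.

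\emph{Remaining subtlety.} Divisibility of the generators does not by itself give divisibility of every stabilizer, which Theorem \ref{T:NandS for logical} requires through the Schur-product conditions. Rather than check those conditions directly, I will apply the final clause of Theorem \ref{T:Doubling-general}. That clause needs $2^{r-1}$ to divide the weights of all $X$-stabilizers of $Q_1$ and $2^r$ to divide those of $Q_2$, so the induction must carry the stronger invariant that the \emph{whole} $X$-stabilizer space is $2^r$-divisible. To propagate this invariant, I would show that a sum of a $(E_1,E_1,\mathbf{0})$-type element, an $E_2$-type element, and possibly $(\mathbf{0},\mathbf{1},\mathbf{1})$ has weight divisible by $2^r$. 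The argument is a weight expansion: the overlap of $(c,c,0)$ with $(0,\mathbf{1},\mathbf{1})$ equals $\wt(c)$. Once the invariant holds, the all-ones logical has odd weight $S_r(k)$, and Theorem \ref{T:Doubling-general} (equivalently the divisible-code remark after Theorem \ref{T:NandS for logical}) gives the transversal logical $R_Z(\frac{\pi}{2^{r-1}})$.
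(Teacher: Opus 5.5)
Your proposal is correct and follows the paper's proof. Like the paper, you run a double induction on $(r,k)$, doubling the $[\![S_{r-1}(k),1,2k-1]\!]$ and $[\![S_r(k-1),1,2k-3]\!]$ codes via Theorem~\ref{T:Doubling-general}, using \eqref{equ:first form} for the length and \eqref{equ:divisibility} to make the row $(\mathbf{0},\mathbf{1},\mathbf{1})$, and hence the whole $X$-stabilizer space, $2^r$-divisible, just as the paper does explicitly for $r=3$ in the proof of Theorem~\ref{T:CH3 family}. Your deviations are minor: you start at $r=1$ with the all-even codes rather than at $r=2,3$, which is fine if the $r=1$ level is read with $Z$-distance only (those codes have $X$-distance $1$), and you spell out the all-ones-logical and whole-space divisibility invariant that the paper's general-$r$ proof leaves implicit.
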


Table \ref{tab:parameters small} shows the parameters of the short length codes obtained from the above construction supporting a logical $Z$-rotation from $\mathcal{C}^{(2)}-\mathcal{C}^{(5)}$ of the Clifford hierarchy.

\begin{table*}[t]
    \centering  
    \renewcommand{\arraystretch}{1.4}
    \setlength{\tabcolsep}{6pt}
\[
\begin{array}{ l c c c c }
\hline
k \setminus r & 2 & 3 & 4 & 5 \\ \hline
2 & [\![7, 1, 3]\!] & [\![15, 1, 3]\!] & [\![31, 1, 3]\!] & [\![63, 1, 3]\!] \\ \hline
3 & [\![17, 1, 5]\!] & [\![49, 1, 5]\!] & [\![129, 1, 5]\!] & [\![321, 1, 5]\!] \\ \hline
4 & [\![31, 1, 7]\!] & [\![111, 1, 7]\!] & [\![351, 1, 7]\!] & [\![1023, 1, 7]\!] \\ \hline
5 & [\![49, 1, 9]\!] & [\![209, 1, 9]\!] & [\![769, 1, 9]\!] & [\![2561, 1, 9]\!] \\ \hline
\end{array}
\]
\caption{Parameters of small length codes from Theorem \ref{T:general family form} supporting logical $\frac{\pi}{2^{r-1}}$-$Z$ rotation. These codes are $2^r$ divisible.}
\label{tab:parameters small}
\end{table*}

Furthermore, a comparison between the code of Theorem \ref{T:general family form} for $r=2,3$ with some other known families of 2D and 3D color codes (with one logical qubit) has been provided in Figure \ref{fig: comparison}. 
We stress that, in this comparison, although our code families require a lower number of data qubits, they also have a few high weight stabilizer generators. 

%%%%%%%%%%%%%%%%%%%%%%%%%%%%%%%%%
\subsection{$Z$ Meta-checks in the doubled codes}
%%%%%%%%%%%%%%%%%%%%%%%%%%%%%%%%%%
In this section, we describe a few simple remarks about the structure of $Z$-checks in the quantum doubled codes that can enable single shot decoding of $Z$-syndromes.

As we mentioned in Section \ref{Sec:triorthogonal}, the $X$-stabilizers of a quantum code $Q$ obtained from applying the doubling construction has a matrix form 
\begin{equation*}
 \begin{bmatrix}
        E_1 & E_1&\bf{0_{n_2}}\\
        \bf{0_{n_1}}&\bf{0_{n_1}}&E_2\\
        \bf{0_{n_1}}&\bf{1_{n_1}}&\bf{1_{n_2}}
    \end{bmatrix},
\end{equation*}
where $E_1$ and $E_2$ are generators of the $X$-stabilizers of $Q_1$ and $Q_2$, respectively. 
Moreover, this code has a $X$-logical operator $({\bf{1_{n_1}}},{\bf{1_{n_1}}},{\bf{1_{n_2}}})$. 
A simple calculation shows that any $Z$-stabilizer of such a code has a binary representation as a linear combination of the following vectors:
\begin{enumerate}
    \item $(u_1,{\bf{0_{n_1}}},{\bf{0_{n_2}}})$, where $u_1$ forms a $Z$-stabilizer of $Q_1$. 
    \item $(w_1,w_1,{\bf{0_{n_2}}})$, where $w_1$ is an arbitrary even weight vector in $\F_2^{n_1}$,
    \item $({\bf{0_{n_1}}},{\bf{0_{n_1}}},u_2)$, where $u_2$ forms a $Z$-stabilizer of $Q_2$, and
    \item $({\bf{0_{n_1}}},{\bf{1_{n_1}}},{\bf{1_{n_2}}})$.
\end{enumerate}
Choosing $w_1$ belonging to a basis of the all-even code, i.e., 
\[
B=\{(1,1,0,\ldots,0), (0,1,1,\ldots,0),\ldots,(0,\ldots,0,1,1)\}
\]
produces a list of linearly independent vectors. 
The first takeaway of this representation is that all quantum codes with distance $d\ge5$ obtained from the doubling construction are degenerate because of the type $(2)$ vectors above. 
This includes, for example, the well-known 2D and 3D color codes of distance five, namely $[\![17,1,5]\!]$ and $[\![49,1,5]\!]$. 

Taking advantage of the above representations of $Z$-checks, one can apply the following simple tricks to detect and correct certain errors on the $Z$-syndromes. 

Let $C$ be an arbitrary binary linear code with parameters $[m,n_1-1,d]$ and with parity check matrix $H$. 
Then one can add $m-n_1+1$ new $Z$-checks (in total $m$ $Z$-checks instead of $n_1-1$), following the equations implied by $H$, to protect the type $(2)$ $Z$-syndromes. 
Indeed, one can
correct any $\lfloor \frac{d-1}{2} \rfloor$ (respectively detect any $d-1$) number of errors. 
The matrix $H$ is called a {\em meta-check} matrix for such syndromes.

Note that this approach is not necessary weight preserving as the new $Z$-checks can have larger weights. 
The following scenarios help to add new redundant $Z$-stabilizers of the same weight and protect the $Z$-syndromes. 
\begin{itemize}
    \item adding all elements of $B$ (defined above) gives a new $Z$-check of the form (2) that enables detection of each single error. 
    \item adding new $n_1-1$ redundant checks, obtained from the linear combination of each two type $(2)$ syndromes with consecutive elements of $B$, gives new weight four checks that can ``correct'' a single error on the type (2) checks. 
    \item for each  type (1) $Z$-check $a=(u_1,{\bf{0_{n_1}}},{\bf{0_{n_2}}})$  one can choose a new $Z$-check in the form of $b=(w,w,{\bf{0_{n_2}}})$ with $\wt(w)=2$ and support of $w$ subset of support of $u_1$. 
    Then the vector $a+b$ has the same weight as $a$ and adding enough $b$ and $a+b$ vectors helps to correct any single error on type (1) syndromes. 
    \item in many cases (such as $[\![17,1,5]\!]$ and $[\![49,1,5]\!]$) the code $Q_2$ itself is a doubled quantum code. Thus one can find redundant $Z$-checks (like the one mentioned above) for the type $(3)$ $Z$-syndromes. 
\end{itemize}

%%%%%%%%%%%%%%%%%%%%%%%%%%%%%%%%%%%%%%%%%%%%
\section{Rotated surface codes as another ingredient}\label{Sec: Surface as another}
%%%%%%%%%%%%%%%%%%%%%%%%%%%%%%%%%%%%%%%%%%%%

So far, we have focused exclusively on the doubling technique and color codes as its natural ingredient to construct larger quantum codes with improved minimum distance and desired logical $Z$-rotation gates.
The main purpose of this section is to emphasize ``how to move beyond color codes'' and incorporate other well-known families of codes into this framework.

Recall that the primary ingredient in the previous section was the all-even code, which enabled the construction of quantum codes admitting transversal Clifford gates. 
Recursively applying the same technique further allowed us to implement logical gates from higher levels of the Clifford hierarchy in a transversal manner.
Although the all-even  code is an optimal choice in terms of the number of data qubits, 
alternative candidates with other connectivity and geometry restrictions can be fed into the construction of Corollary
\ref{T:Doubling-selforthogonal}, or more generally Theorem \ref{T:Doubling-general}. 
In particular, one may apply the above construction to any odd-length quantum code whose 
$X$-stabilizers all have even weights to obtain the desired logical gates. 
Among these codes, the all-even code achieves the minimal length.

Before presenting our results, we highlight an important remark regarding the quantum codes discussed in this section.

\begin{remark}
The quantum codes constructed in this section possess strictly more $Z$-stabilizers than $X$-stabilizers. Equivalently, they are CSS codes specified by $C_2 \subseteq C_1$ with $C_1 \subsetneq C_2^\perp$. 
Hence, only the $X$-stabilizers constitute a self-orthogonal code. Accordingly, we refer to these codes as \emph{quantum $X$-self-orthogonal} codes, to distinguish them from fully self-orthogonal codes.
\end{remark}   

We use the following definition very frequently in our discussions. 

\begin{definition}\label{def:local geometr}
We say that a quantum code $Q$ has a local geometry of another quantum code $Q'$ in its geometry, if there exists a subset of data qubits $A$ such that:
\begin{itemize}
    \item \textbf{Puncturing rule}: restricting $X$-stabilizers and $X$-logical operators to $A$ (by discarding other parts) gives $X$-stabilizers and logical operator of $Q'$,
    \item \textbf{Shortening rule}: the set of $Z$-stabilizers and $Z$-logical operators with all supports on $A$ give $Z$-stabilizers and logical operator of $Q'$. 
\end{itemize}     
\end{definition}

We will restrict our attention to the quantum codes with a local geometry of rotated surface codes in their geometry.  

Recall that {\em rotated surface codes} are quantum CSS codes with parameters $[\![ d^2, 1, d ]\!]$. 
They have many interesting features including local geometry with nearest neighbor interactions, high circuit-level threshold, good decoders and well-understood logical operations \cite{demarti2024decoding,fowler2012surface}. Furthermore, these codes have been experimentally implemented in superconducting and neutral atom platforms \cite{Acharya2025,Bluvstein2024}. 
In a rotated surface code, physical qubits are positioned at the vertices of a $d \times d$ square grid for encoding one logical qubit. 
Its stabilizer generators consist of $X$-type and $Z$-type Pauli operators that are mapped to the faces (plaquettes) of the grid.  

Let $d \ge 3$ be a positive odd integer, and let $E$ denote the binary matrix whose rows correspond to the generators of the $X$-stabilizers of the rotated surface code of distance $d$. Then $E$ is a $\frac{d-1}{2} \times d^2$ binary matrix and its rows have even weights. Moreover, the vector $\mathbf{1}_{d^2}$ corresponds to the maximum-weight logical $X$ operator. 

A slight modification of Corollary~\ref{T:Doubling-selforthogonal}, via doubling the distance $d+2$ surface code and a distance $d$ self-orthogonal code, yields the following result. 
We omit the proof here.

\begin{proposition}\label{P:doubling-surface}
Let $Q$ be a binary self-orthogonal CSS code of odd length $n$, defined by $C_2 \subseteq C_1$, where $C_1 = C_2 \oplus \Span\{\mathbf{1}_n\}$, and with parameters $[\![n,1,d]\!]$, where $d$ is odd. 
Then one can construct a quantum $X$-self-orthogonal code $Q'$ with parameters
\[
[\![n + 2(d+2)^2, 1, d+2]\!],
\]
with a local geometry of rotated surface code.
\end{proposition}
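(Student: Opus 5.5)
We apply Theorem~\ref{T:Doubling-general} with $Q_1$ the distance-$(d+2)$ rotated surface code (length $n_1=(d+2)^2$, which is odd) and $Q_2=Q$. The $X$-stabilizer generator matrix of $Q'$ is then
\[
G=\begin{bmatrix} E & E & \mathbf{0}_{n}\\ \mathbf{0}_{n_1} & \mathbf{0}_{n_1} & E_2\\ \mathbf{0}_{n_1} & \mathbf{1}_{n_1} & \mathbf{1}_{n}\end{bmatrix},
\]
where $E$ generates the surface-code $X$-plaquettes and $E_2$ generates $C_2$. The length is $2(d+2)^2+n$ and there is one logical qubit. The logical $X$ can be taken as $(\mathbf{1}_{n_1},\mathbf{0},\mathbf{0})$ or, equivalently, as $(\mathbf{1},\mathbf{1},\mathbf{1})$. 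The hypotheses of Theorem~\ref{T:Doubling-general} hold because both codes have odd length and admit the all-ones vector as a logical $X$. For $Q$ this follows from $C_1=C_2\oplus\Span\{\mathbf{1}_n\}$. For the surface code, $\mathbf{1}_{n_1}$ is a product of a weight-$(d+2)$ logical row with $X$-stabilizers, so it is logical.

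First we check $X$-self-orthogonality row by row. Each row of $E$ has even weight and rows of $E$ overlap pairwise in $0$ or $2$ qubits, so the block $(E,E,0)$ has doubled, hence even, overlaps among its rows. The rows of $E_2$ are self-orthogonal by the hypothesis $C_2\subseteq C_2^\perp$. A row $(e,e,0)$ overlaps the last row in $\wt(e)$, which is even. A row $(0,0,c)$ overlaps the last row in $\wt(c)$, which is even. The last row has weight $n_1+n$, which is even since both are odd. So the $X$-stabilizer code is self-orthogonal.

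Next we handle the $Z$-side, which explains the ``$X$-self-orthogonal'' qualifier. The $Z$-checks are the dual of $\Span(G)+\Span\{\text{logical}\}$, described as in the meta-check subsection. They consist of type~(1) vectors built from surface-code $Z$-stabilizers on the first block, type~(2) vectors $(w,w,0)$ with $w$ even, type~(3) vectors from $Z$-stabilizers of $Q$, and the vector $(0,\mathbf{1},\mathbf{1})$. Because the surface code has more $Z$-checks than its $X$-dual would require ($C_1\subsetneq C_2^\perp$ for $Q_1$), $Q'$ inherits $C_1'\subsetneq C_2'^\perp$.

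The distance bound comes from Theorem~\ref{T:Doubling-general}: $d_z\ge\min\{d+2,d+2\}=d+2$. We must also confirm $d_x\ge d+2$. The slight modification relative to Corollary~\ref{T:Doubling-selforthogonal} is that $Q_1$ is no longer self-orthogonal, so the $X$-distance has to be argued separately. A logical $X$ restricted to the first two blocks gives a pair $(a,b)$ with $a+b$ in the span of $E$ plus possibly $\mathbf{1}$. Either $a$ or $b$ then contains a surface-code logical $X$, of weight $\ge d+2$. Otherwise the third block carries a logical of $Q$ of weight $\ge d$, and the parity constraint from the last row forces at least two more qubits. We expect this $X$-distance verification to be the main obstacle and would carry it out with exactly this case split.

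Finally, the local geometry follows from Definition~\ref{def:local geometr} by taking $A$ to be the first block. Puncturing $G$ and the logical $(\mathbf{1},\mathbf{0},\mathbf{0})$ to $A$ returns $E$ and $\mathbf{1}_{n_1}$. Every $Z$-stabilizer or $Z$-logical supported inside $A$ must be of type~(1), because types~(2)–(4) put weight outside the first block. These type~(1) vectors are exactly the surface code's $Z$-stabilizers, together with its $Z$-logical, which is orthogonal to the whole of $G$ when placed in the first block.
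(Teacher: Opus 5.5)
Your route is the paper's: double the distance-$(d+2)$ rotated surface code (as $Q_1$) with $Q$ (as $Q_2$) via Theorem~\ref{T:Doubling-general}. Your last row $(\mathbf 0,\mathbf 1,\mathbf 1_n)$ differs from the paper's $(\mathbf 0,\mathbf 1,\mathbf v)$ by an element of $C_2$. Everything you set out to prove is true, but several supporting claims are false or do not follow.

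First, rows of $E$ do not overlap in $0$ or $2$ qubits. Two $X$-checks of the rotated surface code never share an edge, so they meet in $0$ or $1$ qubits. Diagonally adjacent plaquettes, or a weight-two boundary check and its neighbouring bulk $X$-plaquette, meet in exactly one; the paper records $v_1\cdot v_2=v_2\cdot v_3=v_3\cdot v_4=1$ for $d=3$. So $\Span E$ is \emph{not} self-orthogonal, which is precisely why the first block has to be doubled. Your evenness check survives only because $(e,e,\mathbf 0)$ and $(e',e',\mathbf 0)$ overlap in $2|e\cap e'|$ positions.

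For the same reason, ``$C_1\subsetneq C_2^\perp$ for $Q_1$'' is false: for the surface code, $C_1\not\subseteq C_2^\perp$. Also, the surface code has equally many $X$- and $Z$-checks, so this cannot be why $Q'$ is only $X$-self-orthogonal. The correct argument has two steps. First note $(\mathbf 1,\mathbf 0,\mathbf 0)\perp G$, since surface $X$-checks have even weight; this gives $C_2'\subseteq C_1'\subseteq (C_2')^{\perp}$. Then count dimensions: $\dim (C_2')^{\perp}-\dim C_1'=n_1-1>0$. The type-(1) and type-(2) vectors contribute $\tfrac32(n_1-1)$ dimensions on the $Z$-side, against only $\tfrac12(n_1-1)$ from $(E,E,\mathbf 0)$ on the $X$-side. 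Likewise, the modification relative to Corollary~\ref{T:Doubling-selforthogonal} is not that $Q_1$ stops being self-orthogonal; the all-even code is not self-orthogonal either. It is that the surface code has nontrivial $Z$-stabilizers, which breaks $C_1'=(C_2')^{\perp}$.

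The $X$-distance is not an obstacle, and your case split does not follow as written. Let $\ell=(\mathbf 1,\mathbf 0,\mathbf 0)$, which has odd weight $n_1$. Since $C_2'\subseteq C_1'\subseteq (C_2')^{\perp}$, every $\ell+c$ with $c\in C_2'$ has inner product $1$ with $\ell$. Hence $C_1'\setminus C_2'\subseteq (C_2')^{\perp}\setminus (C_1')^{\perp}$ and $d_x\ge d_z$; this is how the paper disposes of $d_x$ in the proofs of Theorems~\ref{T:Doubling-general} and~\ref{T:Surface optimization}.

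Your argument instead infers from $a+b\in\Span E+\Span\{\mathbf 1\}$ that $a$ or $b$ contains a surface-code logical. That relation alone does not give this (take $a=b$ arbitrary). What is true is that every element of $C_1'\setminus C_2'$ equals $(\mathbf 1+e,\,e+\epsilon\mathbf 1,\,c+\epsilon\mathbf 1)$ with $e\in\Span E$, $c\in C_2$, $\epsilon\in\{0,1\}$. So its first block is always a surface-code $X$-logical of weight at least $d+2$, and your ``otherwise'' branch, which has the shape of the $Z$-distance argument, never occurs.

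Finally, for the shortening rule, do not argue that types (2)--(4) ``put weight outside $A$'', since combinations might a priori cancel. Check directly that $(x,\mathbf 0,\mathbf 0)\in (C_1')^{\perp}$ iff $x\in(\Span E)^{\perp}$ has even weight. Likewise, $(x,\mathbf 0,\mathbf 0)\in (C_2')^{\perp}\setminus (C_1')^{\perp}$ iff $x\in(\Span E)^{\perp}$ has odd weight. These are exactly the surface code's $Z$-stabilizers and $Z$-logicals. With these repairs the proof is complete.
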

We now highlight several important properties of the code $Q'$ appearing in the above proposition.
The $X$-stabilizers of $Q'$ are generated by
\begin{equation*}
G = \begin{bmatrix}
        E & E&\bf{0_{n}}\\
        \bf{0_{(d+2)^2}}&\bf{0_{(d+2)^2}}&G_2\\
        \bf{0_{(d+2)^2}}&\bf{1_{(d+2)^2}}&\bf{v}
    \end{bmatrix},
\end{equation*}
where $E$ and $G_2$ are the generator of $X$-stabilizers of rotated surface code with distance $d+2$ and the code $Q$, respectively, and $\bf{v}$ is a logical $X$-operator of $Q$ (of the minimum weight $d$). 
A logical $X$-operator of $Q'$, not necessarily of minimum weight, can be chosen as $(\bf{1_{(d+2)^2}},\bf{0_{(d+2)^2}},\bf{0_n})$. 
Moreover, if $\bf{u}$ is a logical $Z$-operator of the distance $d+2$ surface code, then the code $Q'$ has a logical $Z$-operator of the form $(\bf{u},\bf{0_{(d+2)^2}},\bf{0_n})$ (of minimum weight $d+2$).

The main feature of this construction is that it allows us to preserve the local geometry of the rotated surface code within a larger quantum code that has the transversal implementation of the logical $S$ gate. 
As we will see later, such a local geometry is essential in our FT code switching protocol for realizing logical $S$, $T$, or other finer $Z$-rotation gates.

Next, we recursively apply the result of Proposition~\ref{P:doubling-surface} to construct an infinite family of $X$-self-orthogonal codes based on rotated surface codes. 
We summarize the parameters of these codes for any desired odd minimum distance in the following theorem.

\begin{theorem}\label{T:doubling-surface}
There exists a family of 3D quantum $X$-self-orthogonal codes with the local geometry of rotated surface codes, in which the logical $S$ gate is realized transversally, and with parameters
\[
[\![ \frac{d(d+1)(d+2)}{3}-1 , 1, d]\!],
\]
for odd $d \ge 1$. \\
Moreover, when $d \ge 3$, the set of all $X$-stabilizers can be decomposed as $C = C_e \oplus \Span(\{v\})$, where $C_e$ contains vectors of weights divisible by four and $ \wt(v)\equiv 2 \pmod 4$.
\end{theorem}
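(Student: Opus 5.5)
The plan is induction on odd $d$, applying Proposition~\ref{P:doubling-surface} repeatedly starting from the $[\![1,1,1]\!]$ totally orthogonal code. The length recursion is $n(d+2)=n(d)+2(d+2)^2$ with $n(1)=1$. The closed form $f(d)=\frac{d(d+1)(d+2)}{3}-1$ satisfies $f(1)=1$, and
\[
f(d+2)-f(d)=\frac{(d+2)\big[(d+3)(d+4)-d(d+1)\big]}{3}=2(d+2)^2 ,
\]
so the parameter count is immediate once each step is justified. The distance step uses part 1 of Theorem~\ref{T:Doubling-general}, which gives $\min\{d+2,\ d_z(Q)+2\}=d+2$. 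Here the first block is the distance-$(d+2)$ rotated surface code, and $(\mathbf{u},\mathbf{0},\mathbf{0})$ is a $Z$-logical of weight $d+2$. The local geometry is witnessed by taking $A$ to be the first block. Puncturing the $X$-generators $G$ to $A$ gives $E$ and the all-ones logical. Shortening to $A$ gives the $Z$-checks of type~(1), i.e.\ the surface $Z$-stabilizers, together with $\mathbf{u}$.

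The real issue is that Proposition~\ref{P:doubling-surface} takes a \emph{self-orthogonal} input, while its output is only $X$-self-orthogonal. So I would strengthen the induction hypothesis to something preserved by the construction. The hypothesis $(\mathcal{H}_d)$ is: the code is a CSS code of odd length whose $X$-stabilizers are pairwise evenly overlapping, the all-ones vector is an $X$-logical, and there is an explicit sign pattern $\sigma\in\{\pm1\}^n$ with the following property. Applying $S$ on qubits with $\sigma_j=+1$ and $S^\dagger$ on qubits with $\sigma_j=-1$ implements logical $S$; equivalently, every $X$-stabilizer has signed weight $\sum_j\sigma_j x_j\equiv 0\pmod 4$, and some minimum-weight logical $v$ has signed weight $\equiv 1 \pmod 4$. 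This is the mixed-gate version of Theorem~\ref{T:NandS for logical} recalled in Appendix~\ref{Sec: colandlogics}. I would only use the parity and pairwise-overlap parts of Theorem~\ref{T:Doubling-general}, which go through verbatim for $X$-self-orthogonal inputs, since the $Z$-side only needs $C_1^\perp$.

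For the inductive step, take $\sigma'=(+\mathbf{1},-\mathbf{1},\sigma)$ on the three blocks, mirroring the $S/S^\dagger$ layer pattern of Theorem~\ref{T:Doubling color code1}. The rows of $G$ are checked one at a time.
\begin{itemize}
\item Rows $(e,e,\mathbf{0})$ have signed weight $0$.
\item Rows $(\mathbf{0},\mathbf{0},g)$ have signed weight $\equiv 0 \pmod 4$ by $(\mathcal{H}_d)$.
\item The row $(\mathbf{0},\mathbf{1},v)$ has signed weight $-(d+2)^2+1\equiv 0\pmod 4$, since $(d+2)^2\equiv 1 \pmod 8$.
\item The logical $(\mathbf{1},\mathbf{0},\mathbf{0})$ has signed weight $(d+2)^2\equiv 1\pmod 4$.
\end{itemize}
Pairwise overlaps are even for three reasons: surface-code $X$-plaquettes meet in $0$ or $2$ qubits, every row of $E$ has even weight (plaquettes of weight $4$, boundary checks of weight $2$), and $v$ overlaps $G_2$ evenly by $(\mathcal{H}_d)$. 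The all-ones vector stays an $X$-logical because $(\mathbf{1},\mathbf{1},\mathbf{1})-(\mathbf{1},\mathbf{0},\mathbf{0})-(\mathbf{0},\mathbf{1},v)=(\mathbf{0},\mathbf{0},\mathbf{1}+v)$, and $\mathbf{1}+v$ is an inner $X$-stabilizer. This proves $(\mathcal{H}_{d+2})$ and hence transversal logical $S$ by the cited theorem.

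For the ``moreover'' clause I would track unsigned weights mod $4$. Rows $(e,e,\mathbf{0})$ have weight $2\wt(e)\in\{4,8\}$. The inner rows satisfy the decomposition by induction, or are all $\equiv 0 \pmod 4$ in the base case. The new row has weight $(d+2)^2+\wt(v)\equiv 1+\wt(v) \pmod 4$. Since weight mod $4$ is additive modulo $2|x\cap y|$ and all overlaps are even, $x\mapsto \wt(x)/2 \bmod 2$ is a linear functional on the even code $C$. Its kernel is $C_e$, which has codimension at most $1$, so $C=C_e\oplus\Span\{v'\}$ for any $v'$ with weight $\equiv 2\pmod 4$. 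It remains to show the functional is nonzero for $d\ge3$. At $d=3$ this is witnessed by $(\mathbf{0},\mathbf{1},\mathbf{1})$ of weight $10$, and the witness is inherited by induction.

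I expect the main obstacle to be precisely this strengthened invariant. The exact sign-pattern version of Theorem~\ref{T:NandS for logical} must be set up correctly, including the requirement that the inner logical $v$ used in the last row is the one with signed weight $\equiv 1 \pmod 4$. If it turns out that $v$ must have minimum weight, I would first try re-choosing $\sigma$ on the inner block so that this holds.
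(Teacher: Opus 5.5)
Your proposal is correct. It shares the paper's skeleton: induction on odd $d$ by repeatedly applying Proposition~\ref{P:doubling-surface} starting from $[\![1,1,1]\!]$, with the same length recursion $n(d+2)=n(d)+2(d+2)^2$. Where you differ is in the sub-arguments.

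\textbf{Transversal $S$.} The paper only remarks that this ``follows immediately as the $X$-stabilizers are orthogonal.'' It does not address two points:
\begin{itemize}
\item from the second step on, the recursion feeds Proposition~\ref{P:doubling-surface}, which is stated for fully self-orthogonal inputs, a code that is merely $X$-self-orthogonal;
\item the gate clause of Theorem~\ref{T:Doubling-general} stops applying at $d=5$, because the inner code then has an $X$-stabilizer of weight $\equiv 2 \pmod 4$.
\end{itemize}
Your strengthened invariant with the explicit pattern $\sigma'=(+\mathbf{1},-\mathbf{1},\sigma)$ handles exactly this. Via Theorem~\ref{T:logical con} it also gives a concrete, correction-free physical operator. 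You additionally check the distance and the puncturing/shortening conditions of Definition~\ref{def:local geometr} directly; the paper takes these for granted.

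\textbf{The ``moreover'' clause.} The paper splits into cases on $n+d^2 \bmod 4$ and exhibits generators of $C_e$. It either keeps the last doubling row or replaces it by its sum with the inner weight-$2 \bmod 4$ row. Your route is different: $x\mapsto \wt(x)/2 \bmod 2$ is linear on an even code with pairwise even overlaps. That gives the decomposition immediately from a single witness, $(\mathbf{0}_9,\mathbf{1}_9,1)$, inherited upward. The paper's version yields an explicit basis of $C_e$. Yours is shorter and does not depend on which weight-$2 \bmod 4$ vector is split off.

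\textbf{Two small corrections.}
\begin{itemize}
\item Your claim that ``surface-code $X$-plaquettes meet in $0$ or $2$ qubits'' is false. Diagonally adjacent $X$-plaquettes share exactly one qubit; the paper notes $v_1\cdot v_2=1$ at $d=3$. This is precisely why $E$ has to be doubled. What you actually need is only that $(e,e,\mathbf{0})$ and $(e',e',\mathbf{0})$ overlap in $2|e\cap e'|$ positions, which is automatic.
\item Your closing worry about which inner logical $v$ to use is unnecessary. If $v,v'$ are two representatives, then $v+v'$ is an inner $X$-stabilizer. So $(\mathbf{0},\mathbf{1},v)$ and $(\mathbf{0},\mathbf{1},v')$ differ by an element of the span of $(\mathbf{0},\mathbf{0},G_2)$, and the code generated by $G$ is the same. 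Moreover, since all overlaps are even and the logical lies in $C_2^\perp$, every representative of the logical coset has the same signed weight mod $4$.
\end{itemize}
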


The proof is given in Appendix \ref{A:supp-surface}.
The 3D structure and the recursive construction of such codes is depicted in Figure \ref{fig: Surface}.
In particular, when $d=3$ and $d=5$, one obtains quantum $X$-self-orthogonal codes with parameters $[\![19,1,3]\!]$ and $[\![69,1,5]\!]$, respectively, that enable the transversal realization of the logical $S$ gate (see Figure \ref{fig: Surface}). 
We believe that these codes are highly likely to support logical $T$ gate realizable in a fold-transversal manner. 
However, we postpone the careful study of such gates to future research.

We would also like to remark that the codes constructed from Theorem \ref{T:doubling-surface} have a symmetric geometry consisting of different layers of rotated surface codes, starting from the distance $1$ surface code ($[\![1,1,1]\!]$ code) at the bottom and ending with the distance $d$ surface code at the top. 
Moreover, two representatives of minimum weight logical $X$- and $Z$-stabilizers can be specified by data qubits of the top layer surface code and its horizontal edge, respectively (see Figure \ref{fig: Surface}). 

\begin{figure}[h]
\centering
\begin{minipage}{0.20\textwidth}
\centering
\scalebox{0.60}{
\begin{tikzpicture}[
    x={(1cm,0cm)}, 
    y={(0.5cm,0.5cm)}, 
    z={(0cm,1cm)},
    plane/.style={fill opacity=0.2, thick}
]
  \coordinate (L1) at (0,0,0); \coordinate (L2) at (3,0,0);
  \coordinate (L3) at (3,3,0); \coordinate (L4) at (0,3,0);
  \coordinate (LC) at (1.5,1.5,0); % Center
  \coordinate (U1) at (0,0,2.5); \coordinate (U2) at (3,0,2.5);
  \coordinate (U3) at (3,3,2.5); \coordinate (U4) at (0,3,2.5);
  \coordinate (A) at (1.5, 1.5, -2.5);
\coordinate (AA) at (1.5, 1.5, -3.5);
  \draw[dotted, thick, black!30] (A) -- (L3);
  \draw[dotted, thick, black!30] (A) -- (L4);

  \filldraw[plane, fill=blue!30, draw=blue!70] (L1) -- (L2) -- (L3) -- (L4) -- cycle;
  \node[blue!80!black, anchor=east] at (L1) {$d=3$};
  \draw[thick, fill=blue!15] (1,1,0) -- (2,1,0) -- (2,2,0) -- (1,2,0) -- cycle;

  \draw[dotted, thick, black!80] (A) -- (L1);
  \draw[dotted, thick, black!80] (A) -- (L2);
  \draw[dotted, thick, black!80] (A) -- (LC);
  
  \fill[purple!90] (A) circle (3pt) node[right=5pt, font=\small\bfseries] {};

  \fill[blue] (AA) circle (3pt) node[right=5pt, font=\small\bfseries] {Ancilla qubit};

  \draw[dotted, thick, black!80] (A) -- (AA);

  \filldraw[plane, fill=blue!25, draw=blue!70] (U1) -- (U2) -- (U3) -- (U4) -- cycle;
  \node[red!80!black, anchor=east] at (U1) {$d=3$};
  \draw[red, ultra thick] (U3) -- (U4); 
  \draw[blue, ultra thick] (U4) -- (U1);
   \draw[blue, ultra thick] (L4) -- (L1);
  \draw[thick, fill=blue!15] (1,1,2.5) -- (2,1,2.5) -- (2,2,2.5) -- (1,2,2.5) -- cycle;
  \coordinate (M) at (1.5, 1.5, 1.25);
  \fill[purple!90] (M) circle (2.5pt) node[right=4pt, font=\small\bfseries] {Local Check};
  
  \foreach \i in {1,2}{
    \draw[dashed, black!30] (M) -- (1,\i,0); \draw[dashed, black!30] (M) -- (2,\i,0);
    \draw[dashed, black!30] (M) -- (1,\i,2.5); \draw[dashed, black!30] (M) -- (2,\i,2.5);
  }

\end{tikzpicture}
}
\end{minipage}
\hfill
\begin{minipage}{0.20\textwidth}
{\centering
\hspace{-1.5cm}\scalebox{0.60}{
\begin{tikzpicture}[
    x={(1cm,0cm)}, 
    y={(0.5cm,0.5cm)}, 
    z={(0cm,1.5cm)}, 
    plane/.style={fill opacity=0.15, thick}
]

  \filldraw[plane, fill=blue!30, draw=blue!70] (0,0,3.0) -- (3,0,3.0) -- (3,3,3.0) -- (0,3,3.0) -- cycle;
  \node[blue!80!black, anchor=east] at (0,0,3.0) {\small $d=5$};
  \draw[thick, fill=blue!10] (1,1,3.0) -- (2,1,3.0) -- (2,2,3.0) -- (1,2,3.0) -- cycle;
  \draw[blue, ultra thick] (0,0,3.0) -- (0,3,3.0) ;

  \filldraw[plane, fill=blue!25, draw=blue!70] (0,0,4.2) -- (3,0,4.2) -- (3,3,4.2) -- (0,3,4.2) -- cycle;
  \node[red!80!black, anchor=east] at (0,0,4.2) {\small $d=5$};
  \draw[thick, fill=blue!10] (1,1,4.2) -- (2,1,4.2) -- (2,2,4.2) -- (1,2,4.2) -- cycle;
  \draw[red, ultra thick] (3,3,4.2) -- (0,3,4.2);
  \draw[blue, ultra thick] (0,0,4.2) -- (0,3,4.2);
  \coordinate (M1) at (1.5, 1.5, 3.6);
  \fill[purple!90] (M1) circle (2.2pt) node[right=3pt, font=\bfseries\tiny] {Local Check};
  \foreach \z in {3.0, 4.2} \foreach \x/\y in {1/1, 2/1, 2/2, 1/2} \draw[dashed, black!40, thin] (M1) -- (\x,\y,\z);

  \coordinate (A1) at (1.5, 1.5, 2.0); 
  
  \foreach \p in {(0,0,3.0), (3,0,3.0), (3,3,3.0), (0,3,3.0), (1.5,1.5,3.0)}
    \draw[dotted, thick, black!50] (A1) -- \p;
      \foreach \p in {(0,0,1.0), (3,0,1.0), (3,3,1.0), (0,3,1.0), (1.5,1.5,1.0)}
    \draw[dotted, thick, black!50] (A1) -- \p;

  \fill[purple!90] (A1) circle (3pt) node[right=5pt, font=\small\bfseries] {};
  
  \filldraw[plane, fill=blue!30, draw=blue!70] (0,0,1.0) -- (3,0,1.0) -- (3,3,1.0) -- (0,3,1.0) -- cycle;
  \node[blue!80!black, anchor=east] at (0,0,1.0) {\small $d=3$};
  \draw[thick, fill=blue!10] (1,1,1.0) -- (2,1,1.0) -- (2,2,1.0) -- (1,2,1.0) -- cycle;
    \draw[blue, ultra thick] (0,0,1.0)-- (0,3,1.0);

  \filldraw[plane, fill=blue!30, draw=blue!70] (0,0,-0.2) -- (3,0,-0.2) -- (3,3,-0.2) -- (0,3,-0.2) -- cycle;
  \node[blue!80!black, anchor=east] at (0,0,-0.2) {\small $d=3$};
  \draw[thick, fill=blue!10] (1,1,-0.2) -- (2,1,-0.2) -- (2,2,-0.2) -- (1,2,-0.2) -- cycle;
 \draw[blue, ultra thick] (0,0,-0.2)-- (0,3,-0.2) ;

  \coordinate (M2) at (1.5, 1.5, 0.4);
  \fill[purple!90] (M2) circle (2.2pt) node[right=3pt, font=\bfseries\tiny] {Local Check};
  \foreach \z in {1.0, -0.2} \foreach \x/\y in {1/1, 2/1, 2/2, 1/2} \draw[dashed, black!40, thin] (M2) -- (\x,\y,\z);

  \coordinate (A2) at (1.5, 1.5, -1.4);
  \coordinate (AA) at (1.5, 1.5, -2.0);
  \foreach \p in {(0,0,-0.2), (3,0,-0.2), (3,3,-0.2), (0,3,-0.2), (1.5,1.5,-0.2)}
    \draw[dotted, thick, black!50] (A2) -- \p;

  \fill[purple!90] (A2) circle (3pt) node[right=5pt, font=\small\bfseries] {};
 \fill[blue] (AA) circle (3pt) node[right=5pt, font=\small\bfseries] {Ancilla qubit};

  \draw[dotted, thick, black!80] (A2) -- (AA);
\end{tikzpicture}
}}
\end{minipage}
\caption{Example of codes constructed from Theorem \ref{T:doubling-surface}.
(Left) The geometry of $[\![19,1,3]\!]$ $X$-self-orthogonal code. 
The two planes represent two surface codes of distance 3, and the red dots show their X checks, where local checks are as many as the number of X-check of the surface codes.   
Such local checks are connected to data qubits of the two surrounding copies of surface codes. 
The lowest check is connected to all the data qubits of its top surface code, and the single data qubit of totally orthogonal code (named Ancilla qubit). 
The tick blue and red qubits (on the planes or the bottom qubit) specify the location of data qubits corresponding to minimum weight logical $X$ and $Z$ operators of the code, respectively. (Right) Recursive construction of $[\![69,1,5]\!]$.}
\label{fig: Surface}
\end{figure}

Note also that one can optimize the qubit overhead using Proposition \ref{P:doubling-surface}.  
For instance, applying this proposition to $[\![7,1,3]\!]$ Steane code and two copies of $d=5$ rotated surface code results in $[\![57,1,5]\!]$ quantum $X$-self-orthogonal code 
(while, Theorem \ref{T:doubling-surface} gives the $[\![69,1,5]\!]$ code shown in Figure \ref{fig: Surface}). 
We omit such optimization discussions in this section.

This process can be continued to construct quantum codes that are $r$-orthogonal and possess the local geometry of rotated surface codes, for each $r,k \ge 2$. 
Let 
\begin{equation}\label{equ:surface length}
S_r(k) = 2^{r+2} \binom{k+r-1}{r+1} + \sum_{i=0}^{r-1} 2^i \binom{k+i-2}{i}.
\end{equation}

\begin{theorem}\label{T:surface-hierarchy}
There exists a family of quantum codes with parameters 
\[
[\![S_r(k),1,2k-1]\!]
\]
that are $r$-orthogonal and with the local geometry of rotated surface codes of minimum distance $2k-1$. 
\end{theorem}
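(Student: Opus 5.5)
The plan is a recursive doubling construction, in the spirit of Theorem~\ref{T:general family form}, using the rotated surface codes as the base layer instead of the totally orthogonal code. Write $T_r(k)$ for the right-hand side of \eqref{equ:surface length}. For $r=1$ we take $Q^{(1)}_k$ to be the rotated surface code of distance $2k-1$. Its length $(2k-1)^2=8\binom{k}{2}+1$ agrees with $T_1(k)$, and it is trivially $1$-orthogonal: its $X$-stabilizers are even and the all-ones vector is a logical $X$ operator. For $r\ge 2$ we set $Q^{(r)}_1=[\![1,1,1]\!]$ and define $Q^{(r)}_k$ by applying Theorem~\ref{T:Doubling-general} with
\begin{itemize}
\item $Q_1=Q^{(r-1)}_k$, of length $T_{r-1}(k)$ and distance $2k-1$;
\item $Q_2=Q^{(r)}_{k-1}$, of length $T_r(k-1)$ and distance $2k-3$.
\end{itemize}
As a check, $r=2,k=2$ gives $2\cdot 9+1=19$, recovering the $[\![19,1,3]\!]$ code of Theorem~\ref{T:doubling-surface}.

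For the length, the construction gives the recurrence $T_r(k)=2T_{r-1}(k)+T_r(k-1)$ with $T_r(1)=1$, and I verify the closed form by induction, splitting \eqref{equ:surface length} into two parts.
\begin{itemize}
\item The part $A_r(k)=2^{r+2}\binom{k+r-1}{r+1}$ satisfies the recurrence by Pascal's rule: $2A_{r-1}(k)+A_r(k-1)=2^{r+2}\bigl[\binom{k+r-2}{r}+\binom{k+r-2}{r+1}\bigr]=A_r(k)$.
\item The part $B_r(k)=\sum_{i=0}^{r-1}2^i\binom{k+i-2}{i}$ is exactly the color-code length $S_{r-1}(k)$ of \eqref{equ: recurrence length}. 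From \eqref{equ:first form}, $S_{r-1}(k)-S_{r-1}(k-1)=2S_{r-2}(k)$, which is the same recurrence. For $r=1$, $B_1(k)=1$ is constant, consistent with the fact that the color family at level $0$ has length $1$.
\item The initial values match: $A_r(1)=0$ and $B_r(1)=1$.
\end{itemize}

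The distance follows from item 1 of Theorem~\ref{T:Doubling-general}: $\min\{2k-1,(2k-3)+2\}=2k-1$, using that the $Z$-distances of the ingredients equal their distances. The local geometry comes from iterating Definition~\ref{def:local geometr}. The first block of the doubled code is a copy of $Q_1$. Restricting the $X$-stabilizers $[E_1\ E_1\ 0]$ to it gives $E_1$, and $Z$-stabilizers supported there include those of $Q_1$, by the type~(1) vectors listed in the meta-check discussion. Descending recursively through first blocks, we reach a distance-$(2k-1)$ surface code layer. I also need that the logical $X$ restricts correctly there; item 3 of Theorem~\ref{T:Doubling-general} supplies $(\mathbf 1,\mathbf 0,\mathbf 0)$.

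The main obstacle is $r$-orthogonality, since the last part of Theorem~\ref{T:Doubling-general} needs divisibility, and surface codes are not divisible. I will instead prove a doubling lemma directly: if $Q_1$ is $(r-1)$-orthogonal and $Q_2$ is $r$-orthogonal, with the all-ones logical included among the vectors in both, then the doubled code is $r$-orthogonal. To prove it, take a Schur product of $s\le r$ rows of $G$, together possibly with the logical $(\mathbf 1,\mathbf 1,\mathbf 1)$, and split into cases.
\begin{itemize}
\item If the row $[\mathbf 0\ \mathbf 1\ \mathbf v]$ is absent, the first two blocks of the product coincide and cancel mod $2$. The third block is a product of at most $s$ rows of $E_2$ and the all-ones vector, so it is even by $r$-orthogonality of $Q_2$.
\item If that row is present, the first block vanishes. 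The second block becomes a product of at most $s-1$ vectors from $E_1\cup\{\mathbf 1\}$, which is even by $(r-1)$-orthogonality of $Q_1$. The third block is a product of $\mathbf v$ with rows of $E_2$, which is even by $r$-orthogonality of $Q_2$.
\end{itemize}
The delicate point is what happens when the logical vector itself is among the factors. For this I will use the representation \eqref{E:general doubling matrix}, with $\mathbf v=\mathbf 1$, to avoid parity issues with a minimum-weight $\mathbf v$. Care is needed because the overall parity must come out odd for the logical alone and even otherwise.
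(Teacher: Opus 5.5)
Your proposal is correct and follows the paper's own route. It uses the same recursion (doubling $Q^{(r-1)}_k$ with $Q^{(r)}_{k-1}$, with surface codes at $r=1$ and $[\![1,1,1]\!]$ at $k=1$) and the same Pascal-plus-color-recurrence check of the closed form. It also spells out the $r$-orthogonality verification that the paper only asserts, and your logical $(\mathbf{1},\mathbf{1},\mathbf{1})$ is an equivalent choice to the paper's $(\mathbf{1},\mathbf{0},\mathbf{0})$.

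One small repair is needed in your second case, when $[\mathbf{0}\ \mathbf{1}\ \mathbf{v}]$ is the only stabilizer factor. There the second and third blocks are $\mathbf{1}_{n_1}$ and $\mathbf{v}$, and each has odd weight. The product is still even, but because $n_1$ and $\mathrm{wt}(\mathbf{v})$ are both odd, not because each block is even.
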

See Appendix \ref{A:supp-surface} for a proof. 
In particular, when $r=3$ one gets triorthogonal codes than can realize logical $T$ through the transversal application of physical $T$ gates, up to Clifford correction. 
Indeed, fixing $r=3$ simplifies \eqref{equ:surface length} to
\[
S_3(k) = \frac{4 k^4 + 8 k^3 + 2 k^2 - 8 k - 3}{3}.
\]

\begin{table*}[t]
\centering
\small
\renewcommand{\arraystretch}{1.4}
\setlength{\tabcolsep}{6pt}
\begin{tabular}{l c}
\hline
Family & Parameters $[\![n,k,d]\!]$ \\
\hline

$X$-self-orthogonal codes &
$[\![\frac{(2k-1)(2k)(2k+1)}{3}-1,\,1,\,2k-1]\!]$ \\

Triorthogonal codes &
$[\![\frac{4k^4+8k^3+2k^2-8k-3}{3},\,1,\,2k-1]\!]$ \\

$r$-orthogonal codes $r,k\ge 2$ &
$[\![2^{r+2}\binom{k+r-1}{r+1}
+\sum_{i=0}^{r-1}2^i\binom{k+i-2}{i},\,1,\,2k-1]\!]$ \\

\hline
\end{tabular}
\caption{Parameters of families of quantum codes with the local geometry of rotated surface codes discussed in this section.}
\label{tab:surface}
\end{table*}

\begin{corollary}\label{cor: tri-surface}
There exists a family of quantum triorthogonal codes with parameters
\[
[\![\frac{4 k^4 + 8 k^3 + 2 k^2 - 8 k - 3}{3},1,2k-1]\!]
\]
and with the local geometry of rotated surface codes.
\end{corollary}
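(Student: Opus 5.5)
This corollary is the case $r=3$ of Theorem \ref{T:surface-hierarchy}. The plan is to invoke that theorem, check the closed form of the length, and then say why $3$-orthogonality gives the triorthogonal property in the sense of Theorem \ref{T: triorthogonal codes}.

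\textbf{Step 1: specialize the theorem.} Apply Theorem \ref{T:surface-hierarchy} with $r=3$ and any $k\ge 2$. This yields a $3$-orthogonal code $[\![S_3(k),1,2k-1]\!]$ carrying the local geometry of the distance $2k-1$ rotated surface code. For $k=1$ take the $[\![1,1,1]\!]$ totally orthogonal code; the formula gives $(4+8+2-8-3)/3=1$, so it matches.

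\textbf{Step 2: evaluate the length.} Substitute $r=3$ into \eqref{equ:surface length}:
\[
S_3(k)=32\binom{k+2}{4}+1+2(k-1)+4\binom{k}{2}.
\]
Expanding gives:
\begin{itemize}
\item $32\binom{k+2}{4}=\tfrac{4}{3}(k+2)(k+1)k(k-1)=\tfrac{4}{3}(k^4+2k^3-k^2-2k)$;
\item $1+2(k-1)+4\binom{k}{2}=2k^2-1$.
\end{itemize}
Adding these,
\[
S_3(k)=\frac{4k^4+8k^3-4k^2-8k+6k^2-3}{3}=\frac{4k^4+8k^3+2k^2-8k-3}{3},
\]
which is the claimed length.

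\textbf{Step 3: identify triorthogonality.} We need $3$-orthogonality of the $X$-stabilizers together with the odd-weight logical row to match the triorthogonal matrix condition of Theorem \ref{T: triorthogonal codes}. Take $G=[G_1,G_0]^T$, with $G_0$ the $X$-stabilizer generators and $G_1$ a single odd-weight logical $X$ representative. Then we need all pairwise and triple overlaps among rows of $G$ to be even.

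The main obstacle is that the surface-code blocks are only $X$-self-orthogonal, so $C_1\subsetneq C_2^\perp$. We therefore have to confirm that the $r$-orthogonality produced in Theorem \ref{T:surface-hierarchy} already covers triples that include the logical row, and not only stabilizer triples.

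I would verify this from the doubling form of the generator matrix. The overlaps split blockwise. On the two doubled surface blocks $[E\;E]$, every Schur product is duplicated, so it contributes an even weight. The remaining block $G_2$ is triorthogonal by induction. The connecting row $(\mathbf{0},\mathbf{1},\mathbf{v})$ meets the doubled block in one copy of a product of surface-code $X$-checks. That product must be handled using the pairwise even overlap of surface plaquettes and the even weights of Schur products of at most two plaquettes with $\mathbf{1}$. This is the point to check carefully, and it is exactly what the proof of Theorem \ref{T:surface-hierarchy} in Appendix \ref{A:supp-surface} provides.

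Once triorthogonality holds, Theorem \ref{T: triorthogonal codes} gives transversal logical $T$ up to Clifford/Pauli correction, which completes the corollary.
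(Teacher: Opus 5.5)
Your Steps 1 and 2 are exactly the paper's argument. The corollary is the $r=3$ case of Theorem \ref{T:surface-hierarchy}, and your expansion of \eqref{equ:surface length} to $\frac{4k^4+8k^3+2k^2-8k-3}{3}$ is correct, as are your sanity checks giving $1$ and $39$ for $k=1,2$.

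Step 3 is unnecessary once you cite the theorem. However, the verification you sketch rests on two false premises and would fail if carried out as written.
First, at level $r=3$ the doubled block $[E\;E]$ is not the rotated surface code. The recursion is $S_3(k)=2S_2(k)+S_3(k-1)$, so $E$ is the $X$-check matrix of the $[\![S_2(k),1,2k-1]\!]$ code of Theorem \ref{T:doubling-surface}.
Second, surface-code $X$-plaquettes do \emph{not} have pairwise even overlaps: diagonally adjacent plaquettes share exactly one qubit. The paper records $v_1\cdot v_2=v_2\cdot v_3=v_3\cdot v_4=1$ for $d=3$.
This is precisely why a single doubling of the surface code (Proposition \ref{P:doubling-surface}) gives only an $X$-self-orthogonal code, not a triorthogonal one. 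For example, in the $[\![19,1,3]\!]$ code the product $(e,e,0)\ast(e',e',0)\ast(\mathbf{0},\mathbf{1},1)=(\mathbf{0},e\ast e',0)$ has weight $1$ for two diagonal plaquettes $e,e'$.

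The correct check uses the self-orthogonality of the $r=2$ matrix $E$:
\begin{itemize}
\item A triple of two doubled rows with either the connecting row $(\mathbf{0},\mathbf{1},\mathbf{1})$ or the logical row $(\mathbf{1},\mathbf{0},\mathbf{0})$ has weight $|e\ast e'|$, which is even.
\item A pair of one of these rows with a doubled row has weight $\wt(e)$, which is even.
\item Products of doubled rows alone are duplicated, hence even.
\item Mixed products pairing a doubled row with a third-block row vanish.
\item Products meeting only the third block are even by the induction hypothesis on $G_2$; triples with the connecting row reduce to $|g\ast g'|$.
\end{itemize}

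Also, the ``obstacle'' $C_1\subsetneq C_2^\perp$ is not an obstacle. The triorthogonality condition of Theorem \ref{T: triorthogonal codes} concerns only $G=[G_1,G_0]^T$, and standard triorthogonal codes such as $[\![15,1,3]\!]$ have far more $Z$- than $X$-stabilizers.
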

For instance, when $k=2$, one gets 
a triorthogonal code with parameters $[\![39,1,3]\!]$ (see Figure \ref{fig:4D surface}). Table \ref{tab:surface} summarizes the codes constructed in this section.

\begin{figure}[h!]
{ \centering
\scalebox{0.6}{
\begin{tikzpicture}[
    x={(1cm,0cm)}, 
    y={(0.4cm,0.2cm)}, 
    z={(0cm,1.2cm)},
    plane/.style={fill opacity=0.1, thick}
]
  \def\xone{0.6}
  \def\xtwo{2.2}
  \def\xanc{-1.8}
  \def\xmid{1.4}
  \def\zmerged{2.5}

  \coordinate (A_merged) at (\xanc, 0, \zmerged);
  \fill[purple!90] (A_merged) circle (3.5pt);

  \coordinate (A11) at (\xanc - 1, 0, 4.0);
  \fill[black] (A11) circle (3.5pt);
  \draw[dotted, black!40] (A_merged) -- (A11);

  \coordinate (A21) at (\xanc - 1, 0, 1.0);
  \fill[black] (A21) circle (3.5pt);
  \draw[dotted, black!40] (A_merged) -- (A21);

  \foreach \x in {\xone, \xtwo}{
    \filldraw[plane, fill=gray!20, draw=black!70] (\x,-1,3.0) -- (\x,1,3.0) -- (\x,1,5.0) -- (\x,-1,5.0) -- cycle;
    \filldraw[purple!40, opacity=0.7, draw=purple, thick] (\x,-0.4,3.6) -- (\x,0.4,3.6) -- (\x,0.4,4.4) -- (\x,-0.4,4.4) -- cycle;
  }
  
  \foreach \y/\z in {-1/3.0, 1/3.0, 1/5.0, -1/5.0} 
    \draw[dashed, black!40] (A_merged) -- (\xone,\y,\z);

  \coordinate (M_bridge) at (\xmid, 0, 2.5);
  \fill[purple!90] (M_bridge) circle (3.5pt);

  \draw[dashed, purple!80, thick] (M_bridge) -- (\xone, 0, 3.6); 
  \draw[dashed, purple!80, thick] (M_bridge) -- (\xtwo, 0, 3.6);
  \draw[dashed, purple!80, thick] (M_bridge) -- (\xone, 0, 1.4);
  \draw[dashed, purple!80, thick] (M_bridge) -- (\xtwo, 0, 1.4);

  \foreach \x in {\xone, \xtwo}{
    \filldraw[plane, fill=gray!20, draw=black!70] (\x,-1,0) -- (\x,1,0) -- (\x,1,2.0) -- (\x,-1,2.0) -- cycle;
    \filldraw[purple!40, opacity=0.7, draw=purple, thick] (\x,-0.4,0.6) -- (\x,0.4,0.6) -- (\x,0.4,1.4) -- (\x,-0.4,1.4) -- cycle;
  }
  
  \foreach \y/\z in {-1/0, 1/0, 1/2.0, -1/2.0} 
    \draw[dashed, black!40] (A_merged) -- (\xone,\y,\z);

  \coordinate (M_final) at (\xmid, 0, -1.2);
  \fill[purple!90] (M_final) circle (3.5pt);
  
  \foreach \x in {\xone, \xtwo} {
      \foreach \y/\z in {-1/0, 1/0, 1/2.0, -1/2.0} {
          \draw[dashed, purple!40, thick] (M_final) -- (\x, \y, \z);
      }
  }

  \coordinate (Q) at (\xmid, 0, -1.8);
  \fill[black] (Q) circle (5pt);
  \draw[dashed, purple!90, thick] (M_final) -- (Q);
  \draw[dashed, purple!40, thick] (M_final) -- (A21);
\end{tikzpicture}
}}
\caption{Geometric representation of $[\![39,1,3]\!]$ triorthogonal code.}
\label{fig:4D surface}
\end{figure}

%%%%%%%%%%%%%%%%%%%%%%%%%%%%%%%%%%%
\subsection{A Qubit Overhead Optimization Technique}
%%%%%%%%%%%%%%%%%%%%%%%%%%%%%%%%%%%

In this section, we discuss a technique to reduce the qubit overhead of the codes constructed in the previous section, while preserving the local geometry of the rotated surface code in their representation. 
In certain cases, this also reduces the required qubit connectivity.

Recall that we used two copies of rotated surface codes in Theorem \ref{T:doubling-surface} to obtain families of quantum $X$-self-orthogonal codes with the local geometry of rotated surface codes, and then we extended this approach to $r$-orthogonal codes for each $r\ge3$. 
In this section, we propose a different approach that combines one rotated surface code with a punctured rotated surface code instead of using two copies of the surface codes. 
 
We begin with a preliminary example that illustrates the key ideas, and subsequently present the main result. 
We construct a quantum $X$-self-orthogonal code with parameters $[\![15,1,3]\!]$, obtained by horizontally concatenating
\begin{itemize}
    \item 
the $X$-check matrix of a distance-three rotated surface code, and 
\item that of an all-even code of length five (which is a punctured surface code).
\end{itemize}
The distance three rotated surface code has parameters $[\![9,1,3]\!]$ consisting of $\frac{9-1}{2}$ X-check generators $v_1$, $v_2$, $v_3$ and $v_4$ that form an even binary code space. 
One can easily verify, by inspecting Figure \ref{fig:surface-even} and labeling the bottom-to-top X-checks as $v_1$–$v_4$, that the non-zero inner products of these basis vectors occur only in the following cases:
\begin{itemize}
    \item $v_1 \cdot v_2=v_2 \cdot v_3=v_3 \cdot v_4=1$.
\end{itemize}
On the other hand, the all-even code of length five has basis vectors $u_i$ each having non-zero coordinates in positions $i$ and $i+1$, for each $1\le i \le 4$, and the non-zero inner products of the basis vectors occur only in the following cases: 
\begin{itemize}
    \item $u_1 \cdot u_2=u_2 \cdot u_3=u_3 \cdot u_4=1$.
\end{itemize}
Now one can define a new X-check matrix of size $9+5+1$ in the form: 
\begin{equation*}
 G=\begin{bmatrix}
        E_1 & E_2&0\\
        \bf{0_{9}}&\bf{0_{5}}&0\\
        \bf{0_{9}}&\bf{1_{5}}&1
    \end{bmatrix},
\end{equation*}
where $E_1$ and $E_2$ have rows $v_1-v_4$ and $u_1-u_4$, respectively. 
Using the above inner products, one can show that the rows of $G$ are mutually orthogonal and form a self-orthogonal binary linear code $C_2$. 
Define
\[
C_1 = C_2 \oplus \Span\{(\mathbf{1}_9,\mathbf{0}_5,0)\}.
\]
Then the CSS code associated with $C_2 \subset C_1$ is an $X$-self-orthogonal code with parameters $[\![15,1,3]\!]$. 
Moreover, linear combinations of the X-checks restricted to the first nine columns yield the X-checks of the $9$-qubit rotated surface code. 

This process is shown in Figure \ref{fig:surface-even} 
by counting each red data as two qubits, one for the surface code and one for the all-even code. 
The last step of the doubling construction is to add a new data qubit that is connected to all five qubits of all-even code through a new $X$ check. 
The $X$-stabilizers of the $[\![15,1,3]\!]$ self-orthogonal code along with some other interesting features of them are presented in Appendix \ref{A:15-1-3 stabilizers}. 

Applying another round of doubling, as in \eqref{E:general doubling matrix}, to two copies of the $[\![15,1,3]\!]$code and the $[\![1,1,1]\!]$ totally orthogonal code produces a $[\![31,1,3]\!]$ triorthogonal code with degenerate $X$ and $Z$ distances. 
Such a code contains a local geometry of the rotated surface code in its geometry.

\begin{figure}[h!]
    \centering
\includegraphics[width=0.9\linewidth]{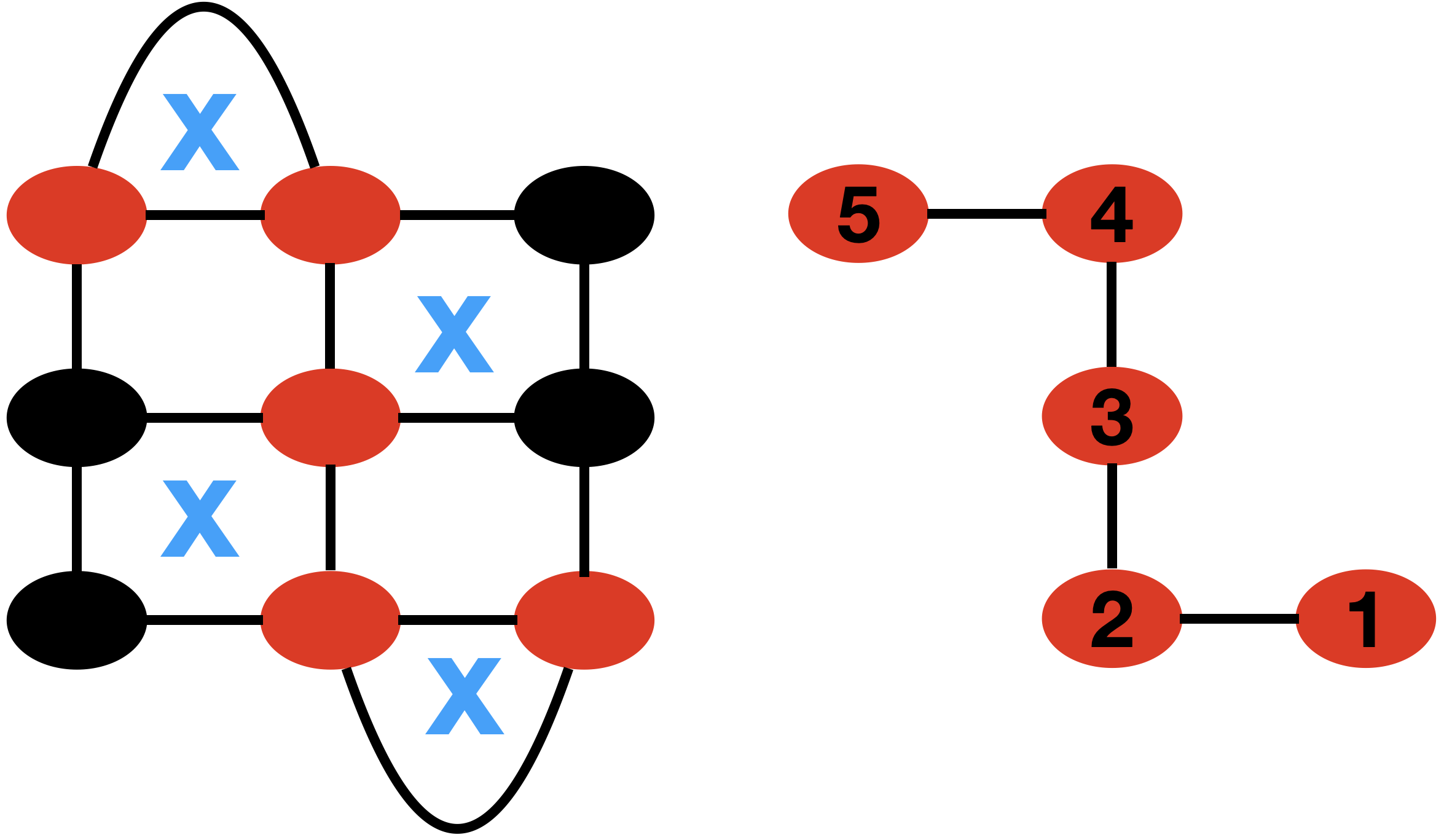}
    \caption{Tanner graph of all-even  code (right). A 14 qubit $X$-self-orthogonal code, with the same $X$ checks as surface codes, but counting each red dot as two data qubits (one for surface and one for all-even).
    Adding an additional data qubit and connecting it through an additional $X$ check to all five qubits of the all-even  code implies the quantum $X$-self-orthogonal $[\![15,1,3]\!]$.}
    \label{fig:surface-even}
\end{figure}

Next we generalize the idea for any odd distance rotated surface code. 
\begin{theorem}\label{T:Surface optimization}
Let $Q_1$ be a rotated surface code with parameters $[\![d^2,1,d]\!]$ for an odd $d \ge 3$ and $X$-stabilizer generator $E$. 
\begin{enumerate}
\item Then there exists a self-orthogonal binary linear code of length $\frac{3d^2+1}{2}$ with the generator matrix $[E \ E']$, where $E'$ is obtained after puncturing of certain columns of $E$.
\item If $Q_2$ is an $X$-self-orthogonal code with parameters $[\![n,1,d-2]\!]$ and $X$-stabilizer generator $G$, then there exists an $X$-self-orthogonal quantum code $Q$ with parameters 
\[
[\![n+2(d^2-d+1),1,d]\!]
\]
containing a local geometry of $Q_1$ in its geometry.
\end{enumerate}
\end{theorem}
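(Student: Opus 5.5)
The plan rests on one observation. A matrix $[E\ E']$ generates a self-orthogonal binary code exactly when two things hold: every row of $E'$ has even weight (the rows of $E$ are already even), and $E'E'^T=EE^T$ over $\F_2$. If $E'$ is $E$ with the columns of a set $B$ deleted (so $E'=E|_A$ with $A$ the complement of $B$), then $EE^T+E'E'^T=E|_B(E|_B)^T$. So part 1 reduces to a combinatorial task: choose $B\subseteq[d^2]$ with $|B|=\frac{d^2-1}{2}$ such that every $X$-check of the rotated surface code meets $B$ in an even number of qubits, and every two overlapping checks share an even number of qubits inside $B$. Then $|A|=\frac{d^2+1}{2}$, and the total length is $d^2+\frac{d^2+1}{2}=\frac{3d^2+1}{2}$.

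I would build $B$ row by row on the $d\times d$ grid, guided by the $d=3$ case. There, the recipe should reproduce the example before the theorem, where $E'$ is the all-even code of length $5$ and the Gram matrices are both the path $v_1-v_2-v_3-v_4$. The natural candidate is to take $B$ as a union of horizontal pairs of adjacent qubits, chosen so that each weight-four plaquette contains zero or two such pairs and each weight-two boundary check is either contained in a pair or disjoint from $B$. The two parity conditions then follow from a local case check (plaquette/plaquette, plaquette/boundary, and boundary/boundary, which never overlap). The even-weight and Gram conditions are then verified by inspection, as in the $d=3$ computation.

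For part 2, I would mimic Theorem \ref{T:Doubling-general} and Corollary \ref{T:Doubling-selforthogonal}, replacing the pair $(E_1,E_1)$ by $(E,E')$:
\[
G'=\begin{bmatrix} E & E' & \mathbf{0_n}\\ \mathbf{0} & \mathbf{0} & G\\ \mathbf{0} & \mathbf{1} & \mathbf{v}\end{bmatrix},
\]
with $\mathbf v$ a minimum-weight $X$-logical of $Q_2$. The prescribed length $n+2(d^2-d+1)$ means $E'$ must have $m=d^2-2d+2$ columns, which is odd. For $d=3$ this gives $m=5$, consistent with part 1, but for $d\ge5$ it exceeds $\frac{d^2+1}{2}$. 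So in part 2 I would puncture fewer columns, namely $|B|=2d-2$, for instance one boundary strip of pairs, while keeping the same two parity conditions.

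Orthogonality of the rows of $G'$ then follows from three facts:
\begin{itemize}
\item part 1 handles the first block of rows among themselves;
\item $Q_2$ is $X$-self-orthogonal, and $\mathbf v$ is orthogonal to $G$;
\item each row of $E'$ is even, so it is orthogonal to $\mathbf 1_m$, and the last row has weight $m+\wt(\mathbf v)=d^2-2d+2+(d-2)$, which is even.
\end{itemize}
Take $C_1=C_2\oplus\Span\{(\mathbf 1_{d^2},\mathbf 0,\mathbf 0)\}$. The local-geometry claim is immediate: puncturing to the first $d^2$ coordinates returns the rows of $E$ and $\mathbf 1_{d^2}$. For shortening, the $Z$-checks and $Z$-logicals supported on the first block are exactly the vectors orthogonal to $E$ and to the logical, i.e. those of $Q_1$.

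The main obstacle is the distance bound $d$, in both $X$ and $Z$, particularly since $E'$ is no longer a copy of $E$. For $d_z$ I would follow the proof of Theorem \ref{T:Doubling-general}. Write a $Z$-logical as $(a,b,c)$. Anticommutation with $(\mathbf 1_{d^2},\mathbf 0,\mathbf 0)$ forces $a$ to have odd overlap with $\mathbf 1_{d^2}$, and commutation with $[E\ E']$ makes $a+\iota(b)$ orthogonal to $E$, where $\iota$ embeds the $A$-coordinates. Either $b$ is small, in which case $a$ is nearly a surface-code logical and has weight at least $d$, or $b$ is odd on $\mathbf 1_m$. In the latter case $c$ must anticommute with $\mathbf v$ and commute with $G$, so $c$ is a $Q_2$-logical of weight at least $d-2$, and $b$ contributes at least $2$ more. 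The delicate point is the case where $a$ and $b$ trade weight through the punctured columns. Here I would first try to show that any $Z$-operator that is a logical of the surface code restricted to $A$ still has weight at least $d$, using that the deleted strip only shortens logical strings parallel to it by a bounded amount. If that fails, I would adjust which $2d-2$ columns are removed.
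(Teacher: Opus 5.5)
\paragraph{Part 1 cannot work as planned.} Your reduction is right: with $E'=E|_A$, the matrix $[E\ E']$ is self-orthogonal iff $E|_B(E|_B)^T=0$. The trouble is that this combinatorial problem has no solution with $|B|=\frac{d^2-1}{2}$ once $d\ge5$, so the ``local case check'' you expect cannot go through.
\begin{itemize}
\item Any two $X$-checks of the rotated surface code share at most one qubit. Diagonal $X$-plaquettes meet in a corner, and a weight-two $X$-check meets each neighbouring $X$-plaquette in one qubit.
\item Hence the $(i,j)$ entry of $E|_B(E|_B)^T$ is $1$ exactly when the common qubit of checks $i$ and $j$ lies in $B$. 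So $B$ may contain no qubit that lies in two $X$-checks.
\item The only qubits lying in a single $X$-check are the $2d$ qubits on the two boundaries carrying weight-two $Z$-checks.
\item The two corners among them that sit in a weight-two $X$-check must also be dropped, since that check would meet $B$ once.
\end{itemize}
Thus $|B|\le 2d-2<\frac{d^2-1}{2}$ for every $d\ge5$. Only $d=3$ works, which is why the $[\![15,1,3]\!]$ example fits your recipe.

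The missing idea is the paper's: do not insist on a literal column deletion. The span $V_1$ of $E$ is alternating, since all weights are even. It is also non-degenerate for the Euclidean form: a radical vector would be an even-weight vector orthogonal to all $X$-checks, hence a $Z$-stabilizer, and the $X$- and $Z$-stabilizer spaces of the surface code meet trivially. The even-weight code $V_2$ of odd length $\frac{d^2+1}{2}$ is likewise a non-degenerate alternating space of the same dimension $\frac{d^2-1}{2}$. Pairing symplectic bases $e_i\leftrightarrow e_i'$, $f_i\leftrightarrow f_i'$ gives an isometry $\phi\colon V_1\to V_2$. Then $[E\ \phi(E)]$ generates a self-orthogonal code of length $\frac{3d^2+1}{2}$. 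Your own count shows that $\phi(E)$ can be a genuine puncturing of $E$ only when $d=3$.

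\paragraph{Part 2.} Your matrix, logical operator, orthogonality checks and local-geometry argument essentially agree with the paper's, but two points need completing.

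First, a single boundary strip cannot supply $2d-2$ admissible columns. By the count above, the only admissible set of that size is exactly the paper's: the qubits lying in exactly one weight-four $X$-check and in no weight-two $X$-check. That is, one row along each weight-two-$Z$ boundary, minus one corner each.

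Second, the distance case you call delicate is not, and your fallback is unnecessary. Read as a claim about restrictions of $Z$-logicals of $Q_1$ to $A$, it is even false, since straight $Z$-logical strings end exactly on the two punctured boundaries. Instead, for $(a,b,c)\in C_2^\perp\setminus C_1^\perp$ one has $\wt(a)$ odd, and you split on the parity of $\wt(b)$.
\begin{itemize}
\item If $\wt(b)$ is even, then $a+\iota(b)$ has odd weight and is orthogonal to $E$, so it is a $Z$-logical of $Q_1$. The triangle inequality gives $\wt(a)+\wt(b)\ge\wt(a+\iota(b))\ge d$, so trading weight between blocks cannot help.
\item If $\wt(b)$ is odd, orthogonality to the last row forces $c\cdot\mathbf v=1$, so $c$ is a $Z$-logical of $Q_2$ with $\wt(c)\ge d-2$. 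Together with $\wt(a),\wt(b)\ge1$ this gives total weight at least $d$.
\end{itemize}
The $X$-distance needs no separate treatment. We have $C_1\subseteq C_2^\perp$, and every element of $C_1\setminus C_2$ has odd overlap with $(\mathbf 1_{d^2},\mathbf 0,\mathbf 0)\in C_1$. Hence $C_1\setminus C_2\subseteq C_2^\perp\setminus C_1^\perp$, so $d_x\ge d_z$.
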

The proof is given in Appendix \ref{A:symplectic}. 
Notice that the claimed quantum code $Q$ has $X$-check generator
\begin{equation*}
 G=\begin{bmatrix}
        E & E''&\bf{0_{n}}\\
        \bf{0_{d^2}}&\bf{0_{d^2-2d+2}}&G\\
        \bf{0_{d^2}}&\bf{1_{d^2-2d+2}}&\bf{1_{n}}
    \end{bmatrix},
\end{equation*}
where $E''$ is obtained from puncturing the $2d-2$ columns of matrix $E$ ($X$-generator of rotated surface code). Such removed columns correspond to data qubits that 
\begin{itemize}
    \item appear only once in all weight four $X$-checks, 
    \item and are not part of a weight two $X$-check (see Figure \ref{fig:puncture surface}).
\end{itemize} 

\begin{figure}
    \centering
\begin{tikzpicture}[scale=1.2]
    \definecolor{xcolor}{RGB}{255, 182, 193}
    \definecolor{xborder}{RGB}{200, 50, 50}

    \fill[xcolor] (1,4) -- (2,4) -- (1.5,4.5) -- cycle;
    \fill[xcolor] (3,4) -- (4,4) -- (3.5,4.5) -- cycle;

    \fill[xcolor] (0,0) -- (1,0) -- (0.5,-0.5) -- cycle;
    \fill[xcolor] (2,0) -- (3,0) -- (2.5,-0.5) -- cycle;

    \fill[xcolor] (0,3) -- (1,3) -- (1,4) -- (0,4) -- cycle;
    \fill[xcolor] (2,3) -- (3,3) -- (3,4) -- (2,4) -- cycle;
    
    \fill[xcolor] (1,2) -- (2,2) -- (2,3) -- (1,3) -- cycle;
    \fill[xcolor] (3,2) -- (4,2) -- (4,3) -- (3,3) -- cycle;
    
    \fill[xcolor] (0,1) -- (1,1) -- (1,2) -- (0,2) -- cycle;
    \fill[xcolor] (2,1) -- (3,1) -- (3,2) -- (2,2) -- cycle;
    
    \fill[xcolor] (1,0) -- (2,0) -- (2,1) -- (1,1) -- cycle;
    \fill[xcolor] (3,0) -- (4,0) -- (4,1) -- (3,1) -- cycle;

    \draw[gray!30, very thin] (0,0) grid (4,4);
    \foreach \x in {0,1,2,3,4} {
        \foreach \y in {0,1,2,3,4} {
            \node[draw, circle, fill=white, inner sep=1.8pt, minimum size=6pt] at (\x,\y) {};
        }
    }
    \foreach \x/\y in {
        4/4,1/4,
        1/3,1/2,1/1,1/0,
        2/0,
        2/1,2/2,2/3,2/4,
        3/4,
        3/3,3/2,3/1,3/0,
        0/0
    } {
        \node[draw, circle, fill=black, inner sep=1.8pt, minimum size=6pt] at (\x,\y) {};
    }

\end{tikzpicture}
\caption{Geometry of punctured surface codes. 
    The data qubits of the new code are represented by the black dots and these black dots share the same X-check structure as the rotated surface code.  
    For any two chosen red stabilizers, the inner product of their corresponding binary vectors is the same in both the punctured code and the rotated surface code.}
    \label{fig:puncture surface}
\end{figure}

Note also that replacing the block $[E \ E'']$ with $[E \ E']$ from part (1) may yield a different quantum $X$-self-orthogonal code of smaller size; however, the corresponding claim about the distance does not necessarily hold.

The codes obtained from Theorem \ref{T:Surface optimization} (2) will be called {\em DSPS (doubled surface and punctured surface)} codes. 
The distance three DSPS, as discussed above, has parameters $[\![15,1,3]\!]$. 

\begin{example}
Here we construct quantum $X$-self-orthogonal and triorthogonal codes of distance five using rotated surface codes. The codes presented in this section require less number of data qubits compared to the codes presented in the previous section.

First recall that, as we mentioned in the previous section, there exist $X$-self-orthogonal codes with parameters $[\![69,1,5]\!]$ and $[\![57,1,5]\!]$ and local geometry of rotated surface codes. 
Also Corollary \ref{cor: tri-surface} implies the existence of a $[\![177,1,5]\!]$ triorthogonal code with a local surface code geometry. 

First we apply the result of Theorem \ref{T:Surface optimization} (2) to the $[\![25,1,5]\!]$ rotated surface code and the $[\![7,1,3]\!]$ Steane code. 
This gives a $[\![49,1,5]\!]$ quantum $X$-self-orthogonal code with a local geometry of rotated surface codes. 

Now applying the Quantum Doubling construction to the latter code and $[\![15,1,3]\!]$ implies a $[\![113,1,5]\!]$ quantum triorthogonal code with a local geometry of rotated surface codes (which requires much less qubits than the $[\![177,1,5]\!]$ code). 
\end{example}

Using a similar process, one can obtain the list of DSPS and triorthogonal codes presented in Table \ref{tab: DSPS parameters small}. 

\begin{table*}[t]
\centering 
\renewcommand{\arraystretch}{1.4}
\setlength{\tabcolsep}{6pt}
\[
\begin{array}{l c c c c}
\hline
\text{Surface code}& [\![9, 1, 3]\!] & [\![25, 1, 5]\!] & [\![49, 1, 7]\!] & [\![81, 1, 9]\!] \\ \hline
\text{DSPS ($X$-self-orthogonal) code}& [\![15, 1, 3]\!] & [\![49, 1, 5]\!] & [\![103, 1, 7]\!] & [\![169, 1, 9]\!] \\ \hline
\text{Triorthogonal (doubled DSPS) code}& [\![31, 1, 3]\!] & [\![113, 1, 5]\!] & [\![255, 1, 7]\!] & [\![433, 1, 9]\!] \\ \hline
\end{array}
\]
\caption{Parameters of small length DSPS and triorthogonal quantum codes with a local geometry of surface codes.}  
\label{tab: DSPS parameters small}
\end{table*}

%%%%%%%%%%%%%%%%%%%%%%%%%%%%%%%%%%%%%
\section{General Transversal Code Switching}\label{Sec: code switching}
%%%%%%%%%%%%%%%%%%%%%%%%%%%%%%%%%%%%%%

Code switching originated as a gauge-fixing technique designed to alternate between two distinct topological color codes sharing a common underlying subsystem code \cite{bombin2015gauge,kubica2015universal}. 
By dynamically altering which gauge operators are measured, 
the system can be projected into either a 2D or 3D color code configuration. 
This framework provides a path to universal fault-tolerant quantum computing without magic state distillation. 
A non-Clifford gate can be executed transversally while the system is fixed in the 3D configuration, after which a gauge switch safely returns the state to the 2D configuration for standard Clifford processing and lower-overhead error correction.  
The protocol was subsequently refined to achieve improved fidelity and has since been validated through both numerical simulations and experimental verification \cite{heussen2025efficient,daguerre2025code,daguerre2025experimental} for the fault-tolerant implementation of the $T$ gate. 
The main scheme is based on a qubit $Z$-teleportation considered in \cite{zhou2000methodology}.   

In this section, we restrict our attention to the code switching protocol based on the transversal CNOT between the two complementary sets of gates. 
In particular, 
we generalize the protocol for the fault-tolerant implementation of an arbitrary fine $Z$-rotation gate, and encompass all quantum codes described in the previous sections into this protocol.
A notable feature of the proposed scheme is that it is not restricted to any specific code family.
More precisely, the main requirement for such code switching protocol is a quantum code that has the geometry of another quantum code in its overall geometry (see Definition \ref{def:local geometr}). 
In particular, we have the following result.

\begin{theorem}\label{T: Magic teleportation}
Let $Q_1$ and $Q_2$ be quantum codes of lengths $n_1$ and $n_2$, with $n_2<n_1$, each encoding a single logical qubit such that $Q_1$ has a local geometry of $Q_2$ in its first $n_2$ positions. 
Then 
\begin{enumerate}
    \item there exists a transversal CNOT gate between $Q_1$ and $Q_2$ in the form 
    \[
\overline{\CNOT}=\bigotimes_{i=1}^{n_2}\CNOT_i.
    \]
    \item If $Q_1$ realizes the logical $R_Z(\theta)$ gate through the physical operation $O$, then the outcome of the following circuit is the $\ket{R_Z(\theta)}$ on $Q_2$:
    \[
\scalebox{0.8}{
\quad \quad\quad\quad\quad\quad\Qcircuit @C=0.001em @R=.1em  @!{\lstick{Q_1\:\:\:\:|{+}\rangle} &\gate{O}   &\ctrl{1}   & \measuretab{M_{X_L}} & \control \cw & & & \\
\lstick{Q_2\:\:\:\:|{0}\rangle} &\qw  &\targ & \qw & \gate{Z_L} \cwx & \qw &&\lstick{\ket{R_Z(\theta)}}
}}
\]
\end{enumerate}
\end{theorem}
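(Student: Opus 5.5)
The plan is to work directly in the stabilizer formalism, using the puncturing and shortening rules of Definition \ref{def:local geometr}. Write $Q_1$ as the CSS code of $C_2\subseteq C_1$ with $X$-stabilizers $S_X^{(1)}$, $Z$-stabilizers $S_Z^{(1)}$ and logicals $\bar X_1,\bar Z_1$. Write $Q_2$ likewise with $S_X^{(2)},S_Z^{(2)},\bar X_2,\bar Z_2$. Let $\overline{\CNOT}=\bigotimes_{i=1}^{n_2}\CNOT_i$, with control on qubit $i$ of $Q_1$ and target on qubit $i$ of $Q_2$, for $i\le n_2$.

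For part (1), I will use the conjugation rules $X_c\mapsto X_cX_t$ and $Z_t\mapsto Z_cZ_t$, with $X_t$ and $Z_c$ fixed. I must show that the stabilizer group of $Q_1\otimes Q_2$ maps into itself and that the logicals map to the logical CNOT.
\begin{itemize}
\item An $X$-stabilizer $s$ of $Q_1$ maps to $s\otimes s|_{[n_2]}$. By the puncturing rule, $s|_{[n_2]}$ lies in the $X$-stabilizer group of $Q_2$.
\item Similarly $\bar X_1\mapsto \bar X_1\otimes \bar X_1|_{[n_2]}$. By the puncturing rule, $\bar X_1|_{[n_2]}$ is $\bar X_2$ up to $X$-stabilizers of $Q_2$ (no restricted $X$-operator can be trivial, as otherwise the logical would be lost).
\item $X$-stabilizers and $\bar X_2$ of $Q_2$ are untouched, as are $Z$-operators of $Q_1$.
\item A $Z$-stabilizer $t$ of $Q_2$ maps to $t'\otimes t$, where $t'$ is $t$ placed on the first $n_2$ qubits of $Q_1$. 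By the shortening rule, $t'$ is a $Z$-stabilizer of $Q_1$ supported on $A$.
\item Likewise $\bar Z_2\mapsto \bar Z_1'\otimes\bar Z_2$, where $\bar Z_1'$ is a representative of $\bar Z_1$ supported on the first $n_2$ positions.
\end{itemize}
So stabilizers go to stabilizers, and the logical action is $\bar X_1\mapsto\bar X_1\bar X_2$, $\bar Z_2\mapsto\bar Z_1\bar Z_2$, with $\bar X_2$ and $\bar Z_1$ fixed. This is exactly the logical CNOT.

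\textbf{Main obstacle.} The delicate point is the correspondence of supports. The shortening rule must be read as saying that every $Z$-stabilizer and $Z$-logical of $Q_2$, padded by zeros, is one of $Q_1$; this is the direction used above. The puncturing rule must give membership in the stabilizer group of $Q_2$ rather than only in its normalizer. I will state these as the precise content of Definition \ref{def:local geometr}. As a sanity check, commutation of the images follows automatically because conjugation is a group automorphism.

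For part (2), I will compute the circuit at the logical level. $O$ implements the logical $R_Z(\theta)$, so $Q_1$ holds $R_Z(\theta)\ket{+}=(\ket0+e^{i\theta}\ket1)/\sqrt2$. The logical CNOT from part (1) gives $(\ket{00}+e^{i\theta}\ket{11})/\sqrt2$. Measuring $\bar X_1$ with outcome $(-1)^m$ leaves $Q_2$ in $(\ket0+(-1)^m e^{i\theta}\ket1)/\sqrt2$. Applying $Z_L^m$ yields $\ket{R_Z(\theta)}=R_Z(\theta)\ket{+}$ on $Q_2$.

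The logical measurement is realized by measuring all qubits of $Q_1$ in the $X$ basis and decoding with the classical code $C_1$, since $\bar X_1$ is an $X$-type operator. I will note this only briefly, because fault tolerance is not claimed in the statement.
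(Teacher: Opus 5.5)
Your proposal is correct and follows essentially the same route as the paper's proof. You conjugate the $X$- and $Z$-type stabilizers and logicals of the two-block code under $\overline{\CNOT}$ via the puncturing and shortening rules (the paper likewise reads the shortening rule as ``every $Z$-stabilizer/logical of $Q_2$, padded by zeros, is one of $Q_1$''), and then do the same logical-level computation with the $M_{X_L}$ measurement and the $Z_L$ correction.

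The direction you flag as the main obstacle need not be assumed, because $(C|_A)^\perp=\{c|_A : c\in C^\perp,\ \mathrm{supp}(c)\subseteq A\}$ for any binary linear code $C$. By this identity, the puncturing rule alone makes every zero-padded $Z$-stabilizer of $Q_2$ orthogonal to all $X$-stabilizers and to $\bar X_1$ of $Q_1$, hence a $Z$-stabilizer of $Q_1$. It also makes the padded $\bar Z_2$ commute with all $X$-stabilizers of $Q_1$ while anticommuting with $\bar X_1$, since $\bar X_1|_A\in \bar X_2+\langle S_X^{(2)}\rangle$.
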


\begin{proof}
(1) Let $Q_1$ and $Q_2$ be two quantum codes with the given conditions. We have 
\begin{enumerate}
    \item for each $X$-stabilizer (or logical operator) $O=\otimes_{i=1}^{n_1}O_i$ of $Q_1$, the operator $O_{11}=\otimes_{i=1}^{n_2}O_i$ is an $X$-stabilizer (or logical operator) of $Q_2$.
    \item for each $Z$ stabilizer or logical operator $O=\otimes_{i=1}^{n_2}O_i$ of $Q_2$,  $O$ is a $Z$-stabilizer (or logical operator) of $Q_1$.

 \item The $X$- and $Z$-logical operators of both $Q_1$ and $Q_2$ are based on applying transversal $X$ and $Z$ operators acting on  the first $n_2$ qubits, respectively.
\end{enumerate}
Now, consider the direct sum of the $Q_1$ and $Q_2$ codes, which is a new quantum (CSS) code $S$ with two logical qubits stabilized by the stabilizers of $Q_1$ and $Q_2$. 
The set of all stabilizers of $S$ are in the form of $O_1\otimes O_2$ where $O_1$ is a stabilizer of $Q_1$ and $O_2$ is a stabilizer of $Q_2$.

The transversal CNOT gate is defined by
\begin{equation}\label{equ: cnot gate}
\overline{\CNOT}=\bigotimes_{i=1}^{n_2}\CNOT_i,
\end{equation}
where the gate $\CNOT_i$ has its control on the $i$-th qubit of $Q_1$ and its target is on the $i$-th qubit of $Q_2$, for each $1\le i \le n_2$. 

For any $X$-stabilizer of $S$, namely $O_1\otimes O_2$ with $O_1=O_{11}\otimes O_{12}$, where $O_{11}$ is restricted to the first $n_2$ locations of the data qubits of $Q_1$, we have 
\[
\overline{\CNOT} (O_1\otimes O_2) \overline{\CNOT}=O_1\otimes (O_{11}O_2).
\]
By the argument (1) above $O_{11}O_2$ is an $X$-stabilizer of $Q_2$. This implies that $O_1\otimes (O_{11}O_2)$ is a valid $X$-stabilizer of $S$.    
Next, for any $Z$-stabilizer of $S$ in the form $O_1\otimes O_2$ with $O_1=O_{11}\otimes O_{12}$, we have 
\[
\overline{\CNOT} (O_1\otimes O_2) \overline{\CNOT}=(O_{11}O_2\otimes O_{12} ) \otimes (O_2).
\]
By argument (2) above, $O_{11}O_2\otimes O_{12}$ is a $Z$-stabilizer of $Q_1$. 
Hence the conjugation by $\overline{\CNOT}$ preserves the set of all $X$- and $Z$-stabilizers of $S$, which implies that  $\overline{\CNOT}$ is a logical operator of $S$. 
Using the property (3) above, the code $S$ has 
logical operators 
\begin{itemize}
    \item $X_LI_L=({\bf X}\otimes  I) \otimes I$ (${\bf X}$ is the transversal $X$ operator on the first $n_1$ qubits of $Q_1$),
    \item $Z_LI_L=({\bf Z}\otimes I) \otimes I$,
    \item $I_LX_L=(I\otimes  I) \otimes {\bf X}$ (${\bf X}$ is the transversal $X$ operator on the $n_2$ qubits of $Q_2$), 
    \item $I_LZ_L=(I\otimes  I) \otimes {\bf Z}$.
\end{itemize}
The conjugation of logical operators of $S$ under the action of $\overline{\CNOT}$ gives
\begin{itemize}
    \item $\overline{\CNOT} (X_LI_L) \overline{\CNOT}=X_LX_L$,
    \item $\overline{\CNOT} (I_LX_L) \overline{\CNOT}=I_LX_L$,
    \item $\overline{\CNOT} (Z_LI_L) \overline{\CNOT}=Z_LI_L$,
    \item $\overline{\CNOT} (I_LZ_L) \overline{\CNOT}=Z_LZ_L$.
\end{itemize}
This implies that $\overline{\CNOT}$ acts as a  logical CNOT operator on $S$. 

(2) At the beginning of the circuit, the direct sum of $Q_1$ and $Q_2$ stabilizes two logical qubits that can be associated to the logical qubits of $Q_1$ and $Q_2$. 
The logical state of this system at the beginning is $\ket{+}_{Q_1} \ket{0}_{Q_2}$.
The logical state right before the middle CNOT is
\[
(\ket{0}_{Q_1}+e^{\frac{i\pi}{\theta}}\ket{1}_{Q_1})\ket{0}_{Q_2},
\]
where the first state is the logical magic state $ \ket{R_Z(\theta)}_{Q_1}$.
The joint state after applying the transversal CNOT is
\begin{equation}\label{eq:bell-like}
\ket{0}_{Q_1}\ket{0}_{Q_2}+e^{\frac{i\pi}{\theta}}\ket{1}_{Q_1}\ket{1}_{Q_2}
\end{equation}
Changing the $\{\ket{0},\ket{1}\}$ basis of $Q_1$ to $\{\ket {+} ,\ket {-}\}$ 
exchanges \eqref{eq:bell-like} to
\begin{equation}
\frac{
\ket{{+}}_{Q_1}(\ket{0}_{Q_2}+e^{\frac{i\pi}{\theta}}\ket{1}_{Q_2})
\;+\;
\ket{{-}}_{Q_1}(\ket{0}_{Q_2}-e^{\frac{i\pi}{\theta}}\ket{1}_{Q_2})
}{\sqrt{2}}.
\label{eq:x-decomp}
\end{equation}
Applying a logical measurement on $Q_1$ in the $X$-basis, presented in the circuit by $M_{X_L}$, projects the state of the second logical qubit of \eqref{eq:x-decomp} into one of:
\[
\ket{0}_{Q_2}+e^{\frac{i\pi}{\theta}}\ket{1}_{Q_2}=\ket{R_Z(\theta)}_{Q_2} \ (\text{Measurement +1}),
\] where no correction is needed. 
The other potential outcome state is 
\[ 
\ket{0}_{Q_2}-e^{\frac{i\pi}{\theta}}\ket{1}_{Q_2}=Z_L\ket{R_Z(\theta)}_{Q_2}\ (\text{Measurement -1}),
\] where a correction is needed. Applying $Z_L$, which is the  logical $Z$ operator of $Q_2$, will correct the state into
\[
\ket{0}_{Q_2}+e^{\frac{i\pi}{\theta}}\ket{1}_{Q_2}=\ket{R_Z(\theta)}_{Q_2}. 
\]
$\hfill \square$    
\end{proof}
It should be stressed that our transversal code switching protocol is valid for a long range of code families including surface codes of an arbitrary minimum 
distance.  
In particular, the smallest instances for surface codes are based on switching between
\begin{itemize}
    \item $[\![15,1,3]\!]$ DSPS code and $[\![9,1,3]\!]$ code for realizing logical $S$ gate, and
    \item $[\![31,1,3]\!]$ triorthogonal code discussed in the previous section and the $[\![9,1,3]\!]$ rotated surface code  for realizing $T$ gate. 
\end{itemize}
One can see Appendix \ref{A:15-1-3 stabilizers} for more information about the structure of these two codes.

Furthermore, one can extend the application to any other $Z$-rotation magic state $\ket{R_Z(\theta)}$. 
For instance, the conditions of the code switching protocol are satisfied by   
\begin{itemize}
    \item the codes generated in Theorem \ref{T:general family form}, including the numerical examples of Table \ref{tab:parameters small}, to implement any $Z$-rotation gate for color codes,
    \item the codes generated in Corollary \ref{cor: tri-surface}, including the numerical examples of Table \ref{tab: DSPS parameters small}, to implement
    any $Z$-rotation gate for rotated surface codes.
\end{itemize}
This extends the set of available logical gates for certain families of quantum codes, such as color codes and surface codes, from Clifford+$T$ to Clifford+$R_Z(\theta)$ gate set.
Such an alternative gate set has a potential resource optimization application in FT implementation of quantum algorithms that rely on multi-controlled gates, including Grover’s search \cite{grover1996fast} and Shor’s factoring \cite{shor1994algorithms} algorithm, as well as quantum chemistry simulations, such as those for the FeMoCo molecule \cite{reiher2017elucidating}.

The proposed code switching scheme can be further improved by analyzing the propagation of errors through the transversal CNOT gates and by updating the logical operator based on the error-location information obtained from some syndrome extraction cycles. 
Such a correction procedure is analogous to the scheme described in \cite[Section~2.2.1]{heussen2025efficient}, and we postpone such details for a future research.

Note also that in \cite{jiao2025low}, a transversal code switching protocol is proposed that is different from what is described above. 
For instance, it relies on the existence of an initial triorthogonal code, and the code switching happens between two codes of the same length. 

%%%%%%%%%%%%%%%%%%%%%%%%%%%%%%%
\subsection{Numerical simulation}
%%%%%%%%%%%%%%%%%%%%%%%%%%%%%%%
This section evaluates the application of the quantum codes introduced in the previous section for preparing non-Clifford magic states. 
Here, we perform an initial benchmarking simulation to illustrate their core properties, and we postpone a comprehensive simulation to future work.

The first step to evaluate the performance of our code-switching protocol, for realizing the non-Clifford magic states, is to evaluate the state preparation performance of our newly designed quantum codes. 
In particular, we are interested in the smallest distance, $d=3$, quantum codes inside of our families of quantum codes with a logical $T$ gate and the local geometry of rotated surface code. 
This includes the  
triorthogonal codes $[\![31,1,3]\!]$ and $[\![39,1,3]\!]$. 

First, we check the operational resilience of the logical state preparation circuits in the mentioned codes. For this purpose, we conduct circuit-level Monte Carlo simulations utilizing the \texttt{qLDPC} framework \cite{perlin2023qldpc} backed by the \texttt{Stim} and \texttt{Sinter} packages \cite{gidney2021stim}. 
Our numerical pipeline assesses the intrinsic vulnerability of each designed encoding circuit $U_{\text{prep}}$ by subjecting it to a circuit-level depolarizing noise model, followed by a noiseless diagnostic verification sequence.

\textbf{Noise model.}
Let $C$ denote a chosen quantum code for the experiment. 
A logical state preparation circuit $U_{\text{prep}}$ for $C$ is a circuit that consists of state initializations, a structured sequence of single- and two-qubit Clifford operations, and auxiliary measurements designed to prepare a targeted logical state $\ket{\psi_L} \in C$. 

We model hardware imperfections via a circuit-level depolarizing noise model parameterized by a uniform physical error rate $p$. The noise channels are injected discretely into the execution timeline as follows:
\begin{enumerate}
    \item \textbf{State initialization:} Every data qubit initialized in the $\ket{0}$ state is ssubjected to a bit-flip noise with probability $p$.
    \item \textbf{Single-qubit and two-qubit gates:} Following each single-qubit (respectively two-qubit) gate, a symmetric single-qubit (respectively two-qubit) depolarizing channel is applied to the target qubit (respectively the control and target qubits).
    \item \textbf{Measurement assignment:} Each measurement is followed by a classical bit-flip error with probability $p$.
\end{enumerate}

To calculate the logical fidelity of the state prepared by the noisy execution of $U_{\text{prep}}$, the software framework appends a mathematically ideal, completely noiseless diagnostic tail $\mathcal{M}_{\text{diag}}$ before parsing the statistical registers. 
In particular, the simulation blocks per Monte Carlo shot can be described as a noisy execution of the circuit $U_{\text{prep}}$ followed by
a noiseless execution directive $\mathcal{M}_{\text{diag}}$ containing:
\begin{itemize}
    \item so called ``flag detectors'' will be added to mid-circuit flag ancilla measurements that were left unaddressed in the circuit.
    \item ideal stabilizer and logical Pauli observable measurements (syndrome extractions) required to stabilize the desired logical state. 
\end{itemize}

\textbf{Data processing and error rate determination.}
The sampler generates a batch of $N_{\text{total}}$ shots containing binary vectors for all detectors and logical observables. 
Since we are incorporating fault-tolerant verification using flag qubits, a post-selection step is applied.
In particular, we consider a trial acceptable if and only if all flag detectors are zero. 
Trials exhibiting non-trivial flag signatures, not all zeros, will be discarded. 
The \textit{Discard Rate} is evaluated as:
\[
    R_{\text{discard}} = \frac{N_{\text{discarded}}}{N_{\text{total}}}
\]
For each accepted shot, the extracted (noiseless) stabilizer syndromes are forwarded to a circuit-level decoder for a follow-up error correction circuit.
Then, the decoder predicts a logical correction operator. 
A \textit{Logical Error Event} is recorded if the decoder's prediction fails, i.e., after performing the correction we end up with a wrong logical state. 
The conditional \textit{Logical Error Rate} is computed strictly over the surviving population as:
\[
    P_{\text{log}} = 1-\frac{N_{\text{success}}}{N_{\text{accepted}}} 
\]
where $N_{\text{accepted}}$ shows the number of shots with a trivial flag signature, and $N_{\text{success}}$ represents the number of shots with trivial flag signatures and correct decoder prediction. 
Finally, simulations at each physical error rate $p$ continue dynamically until reaching either a maximum sample roof of $10^7$ shots or an accumulated threshold of $10^5$ logical failure events.

\subsection{Code-switching between $[\![31,1,3]\!]$ and distance three surface code}

In this section, we discuss the steps required to implement the code-switching protocol described in Theorem~\ref{T: Magic teleportation}, and demonstrate its application to distance $3$ codes, namely the triorthogonal $[\![31,1,3]\!]$ code and the $[\![9,1,3]\!]$ surface code.
In particular, we perform end-to-end circuit-level simulations for fault tolerant magic state preparation (after replacing $T\ket{+}$ with $S\ket{+}$), demonstrating logical error suppression in the low-noise regime using only 45 physical qubits. 
The details of our main steps are reflected below.

\textbf{$\ket{+}$ state preparation.} 
The first step to prepare a logical $\ket{T}$ magic state in a distance three surface code, using the code-switching circuit of Theorem \ref{T: Magic teleportation}, is to prepare a logical $\ket{+}$ 
in a triorthogonal code with a local geometry of rotated surface code. 
Here we take advantage of the $[\![31,1,3]\!]$ and $[\![39,1,3]\!]$ triorthogonal codes that we constructed in Section \ref{Sec: Surface as another}. 
To evaluate the fidelity of the logical $\ket{+}$ state preparation, we first extract a $\ket{+}$ state preparation circuit for each code, and then we add ancilla qubits to make the circuit FT. 
We take advantage of the \texttt{MQT QECC} package \cite{mqt} to design such a circuit. 
In particular, we add three and six mid-circuit flag qubits to the encoding circuits of $[\![31,1,3]\!]$ and $[\![39,1,3]\!]$, respectively to detect certain logical errors. 

Since both of the mentioned triorthogonal codes have distance three, they are single error correcting codes and therefore we use a look-up table decoder.
The logical error rate of this state preparation step can be seen in Figure \ref{fig:plus_prep}. 

\begin{figure}[ht]
    \centering
\includegraphics[width=1\linewidth]{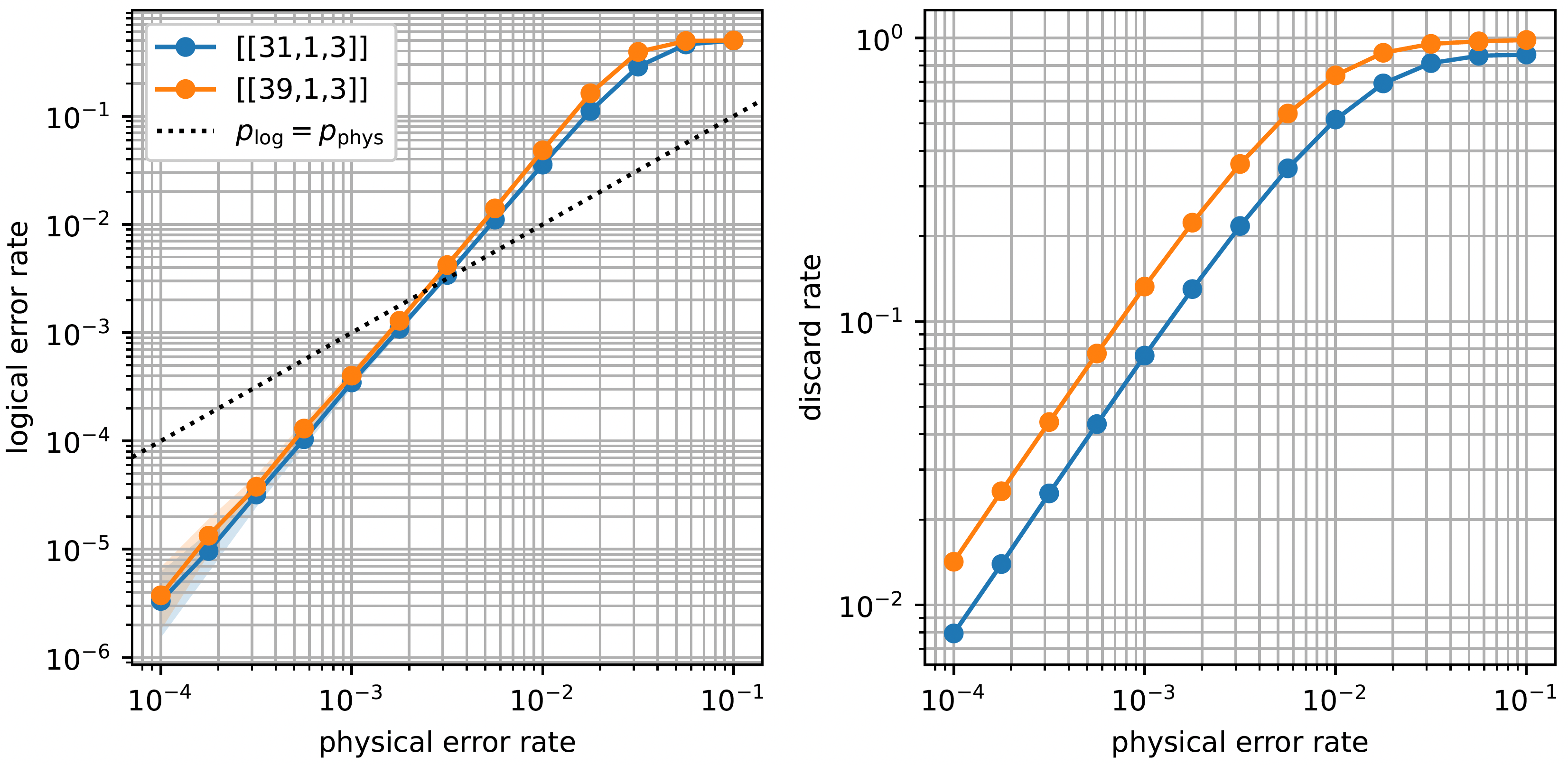}
    \caption{Logical $\ket{+}$ preparation in the codes $[\![31,1,3]\!]$ and $[\![39,1,3]\!]$ with the aid of three and six flag qubits.}
    \label{fig:plus_prep}
\end{figure}
We note that this work does not present an exhaustive search for depth-optimal encoding circuits or generalized FT strategies for such state preparation tasks. 
Discovering alternative encoding circuits with reduced depth and tailored FT protocols offers a clear path to boosting the state preparation performance of these codes. 

\textbf{$\ket{0}$ state preparation.} The second ingredient for executing the code-switching protocol of Theorem \ref{T: Magic teleportation} is to prepare a logical $\ket{0}$ state in the $d=3$ rotated surface code.  
Here again we use a preparation circuit equipped with an extra flag qubit in order to detect and discard certain logical errors. 
The logical error rate of our $\ket{0}$ state preparation circuit is given in Figure \ref{fig:plus_prep2}. 

\begin{figure}[ht]
    \centering
\includegraphics[width=1\linewidth]{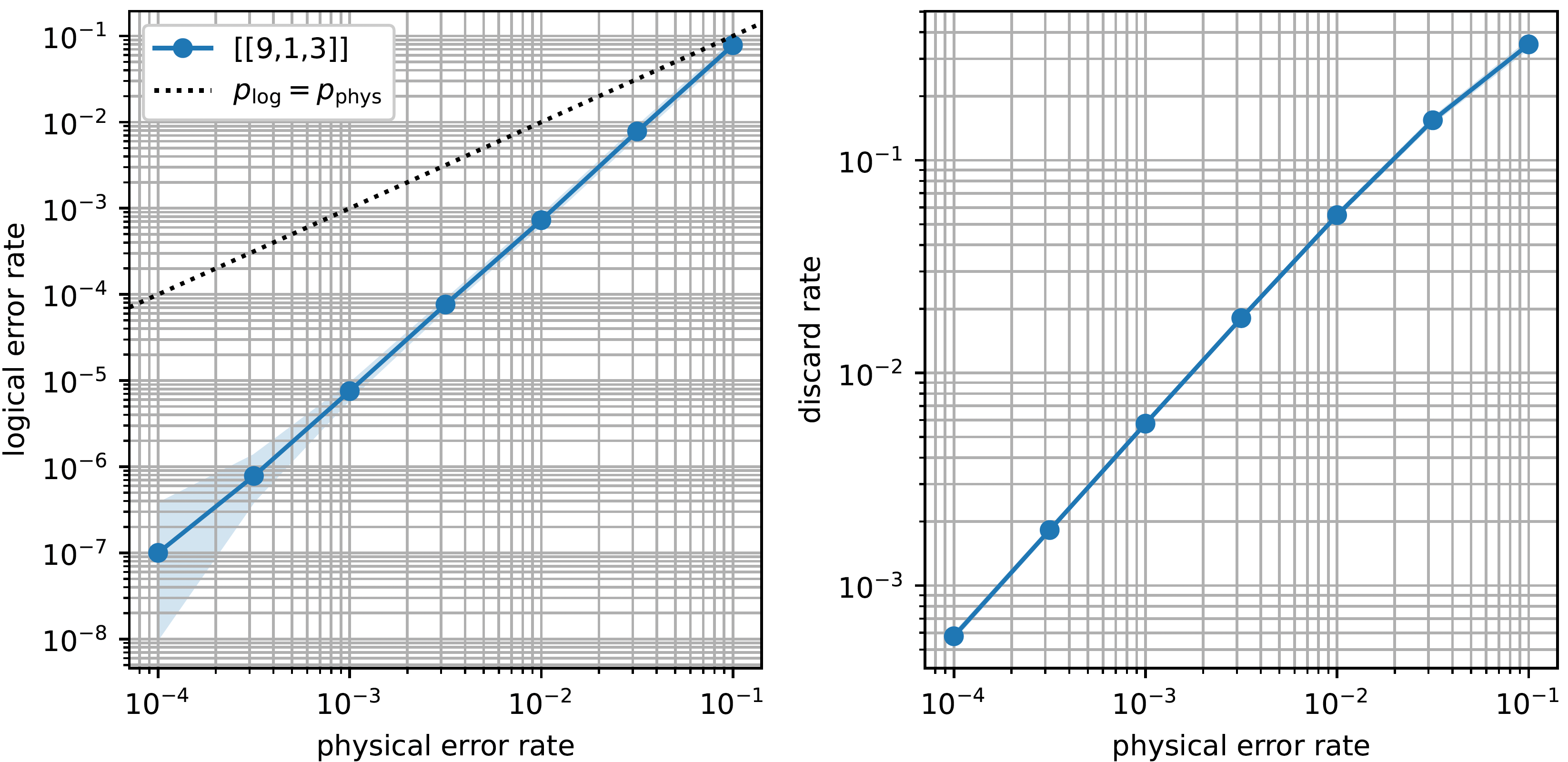}
    \caption{Logical $\ket{0}$ preparation in the $d=3$ rotated surface code with the aid of an ancilla flag qubit.}
    \label{fig:plus_prep2}
\end{figure}

\textbf{Magic state preparation.}
Next, we simulate the end-to-end execution of the code-switching protocol described in Theorem~\ref{T: Magic teleportation} after replacing $T\ket{+}$ with its Clifford proxy $S\ket{+}$.

We first define a $[\![40,2,3]\!]$ quantum code as the direct sum of the $[\![31,1,3]\!]$ code and the $[\![9,1,3]\!]$ rotated surface code. 
We then concatenate the logical $\ket{+}$ and $\ket{0}$ state preparation circuits for the $[\![31,1,3]\!]$ and $[\![9,1,3]\!]$ codes, respectively, to obtain a logical $\ket{+}\ket{0}$ state preparation circuit for the $[\![40,2,3]\!]$ code. 
This circuit consists of 40 data qubits together with four ancilla qubits: three for the FT $\ket{+}$ state preparation of the $[\![31,1,3]\!]$ code and one for the FT $\ket{0}$ state preparation of the $[\![9,1,3]\!]$ code. 
The complete circuit is depicted in Figure \ref{fig:magic_preparation circuit}.

Next, we apply a logical $S$ gate to the first logical qubit, followed by a transversal CNOT between the two logical qubits. 
Finally, we record the outcome of the logical $X$ measurement on the first logical qubit in an additional flag qubit, thereby enforcing the logical $X$ measurement to have a trivial syndrome.

The resulting logical error rate and discard rate of our $S\ket{+}$ state preparation are shown in Figure \ref{fig:plus_prep3}.
Again here we use the look-up table to determine whether the decoder has predicated a correct logical correction, or if a logical error event has happened. 
\begin{figure}[h]
    \centering
\includegraphics[width=1\linewidth]{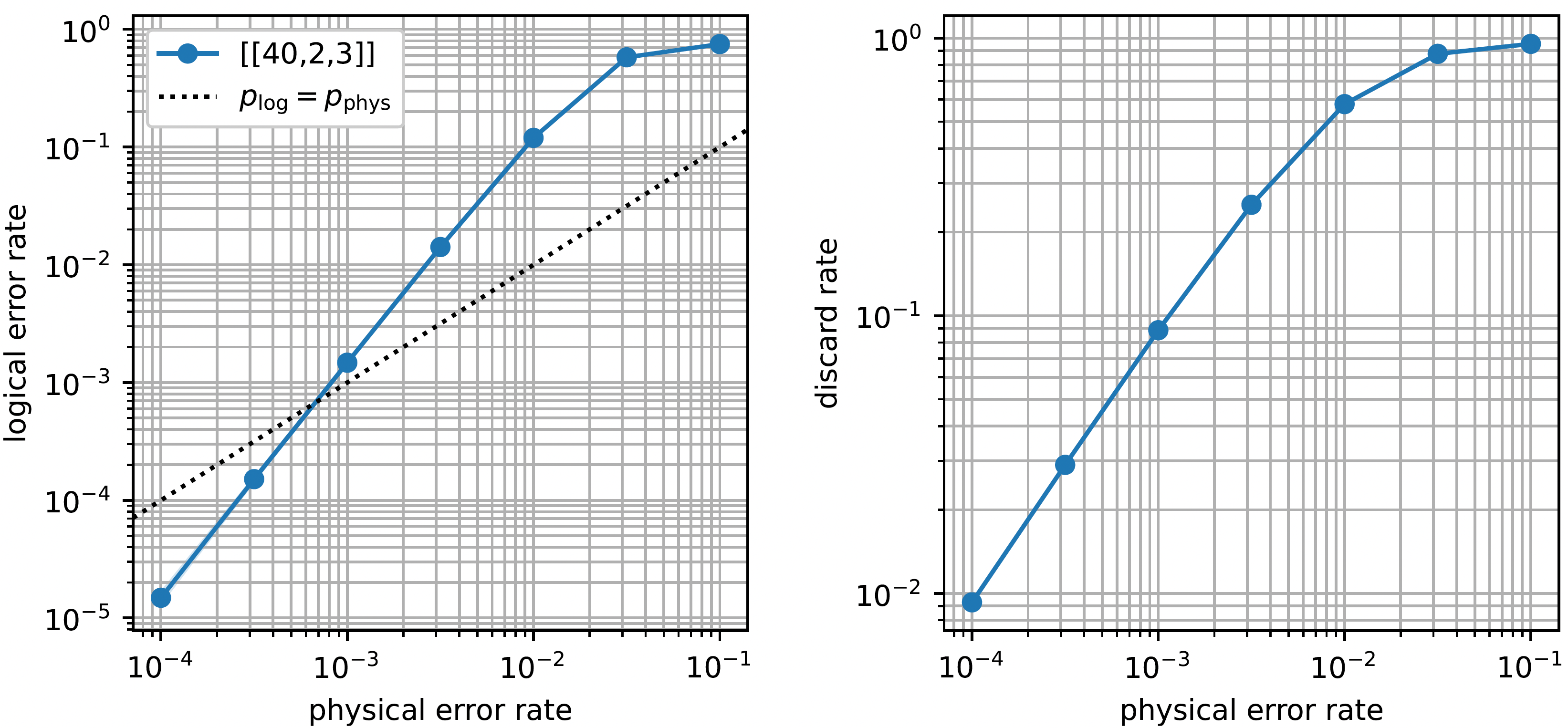}
    \caption{Logical $S\ket{+}$ state preparation as a Clifford proxy for $T\ket{+}$ state preparation in the $d=3$ rotated surface code using the code-switching protocol described in Theorem~\ref{T: Magic teleportation}. 
    The simulation uses a total of 45 qubits, including 5 ancilla qubits.}
    \label{fig:plus_prep3}
\end{figure}
The simulation shows a successful state preparation for physical error rate of below roughly $6\times 10^{-4}$.
At the physical error rate of $10^{-4}$, the protocol successfully prepares a verified logical $S\ket{+}$ state in approximately $99\%$ of attempts, with a conditional logical error rate of only  $\approx 10^{-5}$.

Overall, our simulation uses only 45 qubits,
a resource overhead drastically smaller than traditional surface code distillation factories requiring hundreds of qubits for an equivalent code distance. 

Lastly, although the protocol is demonstrated only for distance-three codes, the simulation validates the complete fault-tolerant implementation of the proposed code-switching scheme. Extending the simulation to larger code distances and evaluating the fidelity of the prepared magic states are natural directions for future work.

\section{Conclusion}

In this work, we developed a unified framework for constructing families of $r$-orthogonal and divisible quantum codes supporting transversal logical $Z$-rotation gates at arbitrary levels of the Clifford hierarchy. 
Beyond developing color-code constructions, our framework produces $r$-orthogonal codes compatible with the local geometry of rotated surface codes, therefore extending the applicability of fault-tolerant code switching well beyond previously known settings. 
Together with an overhead optimization procedure, these constructions provide practical candidates for implementing logical non-Clifford gates within architectures based on rotated surface codes.

Building on these constructions, we generalized the transversal CNOT 
code-switching protocol to this broader class of codes, enabling the fault-tolerant realization of logical $Z$-rotation gates while preserving the favorable hardware properties of rotated surface codes. 
As a proof of principle, we presented the first end-to-end circuit-level demonstration of fault-tolerant magic-state preparation in a distance three rotated surface code using only 45 physical qubits, validating the complete implementation of the proposed protocol.

Several interesting directions remain for future work. 
These include extending the simulations to larger code distances, investigating the resource overhead and error-suppression properties of the protocol, investigating more efficient decoding strategies tailored to the new code families, and characterizing the fidelity of the prepared logical resource states. 
More broadly, the framework introduced here opens the possibility of constructing additional families of locally constrained quantum codes with transversal higher-level Clifford-hierarchy gates, providing new avenues toward scalable fault-tolerant quantum computation.

\section*{Acknowledgment}
We thank other members of the Quantum Information Lab at Tecnun and the Hitachi Cambridge Laboratory for their support. This work was supported by the Spanish Ministry of Economy and Competitiveness through the MADDIE project (Grant No. PID2022-137099NBC44) and by the Basque Quantum (BasQ) strategy through the Decoders for quantum low-density parity-check codes (qLDPCDec) project. 
RD gratefully acknowledges the kind hospitality and fruitful discussions with Gilles Zémor, Elena Bardini, Eduardo Camps Moreno, and Virgile Guemard at the Institut de Mathématiques de Bordeaux, and David Feder, Mohsen Mehrani, and Ali Moradi at the Department of Physics and Astronomy at the University of Calgary. Parts of this research were developed during his visits to these institutions.

%%%%%%%%%%%%%%%%%%%%%%%%%%%%%%%%%%%%%%%
\bibliographystyle{unsrt}
\bibliography{Doubling.references}
%%%%%%%%%%%%%%%%%%%%%%%%%%%%%%%%%%%%%%%

%%%%%%%%%%%%%%%%%%%%%%%%%%%%%%%
\appendix
%%%%%%%%%%%%%%%%%%%%%%%%%%%%%%

\section{Color codes and logical gates}\label{Sec: colandlogics}
%%%%%%%%%%%%%%%%%%%%%%%

Here, we provide a more detailed definition of topological color codes using the language of cell complexes and the materials of \cite{kubica2015universal,Asym5,pesah2026quantum}.
Let $\mathcal{L}$ be a finite, $d$-dimensional cell complex, and let $G$ be its $1$-skeleton (the graph consisting of the vertices and edges of $\mathcal{L}$). We can define a $d$-dimensional color code on $\mathcal{L}$ if the following two conditions are met:
\begin{itemize}
    \item \textbf{Regularity}: The graph $G$ is $(d+1)$-regular, i.e., every vertex is incident to exactly $d+1$ edges.
\item \textbf{Colorability}: The $d$-cells (the highest-dimensional faces) of $\mathcal{L}$ admit a $(d+1)$-coloring such that any two $d$-cells sharing a $(d-1)$-dimensional face are assigned different colors.
\end{itemize}
Given such a complex, the quantum code is constructed as follows.  
A physical qubit is associated with each vertex (0-cell) of $\mathcal{L}$.
The stabilizer group is generated by $X$-type and $Z$-type operators. 
An $X$-type stabilizer is associated with each $x$-cell, and a $Z$-type operator is associated with each $z$-cell, acting non-trivially as Pauli $X$ or $Z$ on the qubits within those respective cells. 
The dimensions of these cells must satisfy 
\[
2 \le x + z \le d + 2.\]
The parameter $d$ defines the topological dimension of the color code. 

To encode logical information, we consider color codes defined on manifolds with boundaries. 
Following the convention of punctured closed lattices, these are constructed by selecting a vertex $v \in \mathcal{L}$ and removing it along with all incident $k$-cells for $1 \le k \le d$. 
Such puncturing creates a single connected boundary, yielding a code that encodes $k=1$ logical qubit.

\subsection{Transversal gates}
For the rest of this section, we discuss the conditions for realizing a logical phase gate through the transversal application of different phase gates on data qubits for arbitrary CSS codes. 
Here the condition is for realizing a logical gate without requiring a correction operator.
For more details and the proof, one can see \cite{webster2023transversal} and \cite{leitch2025transversal}.

\begin{theorem}\label{T:logical con}
Let $Q$ be a quantum CSS code of length $n$ with the $X$-stabilizer generator set $H_X$ and $X$-logical operator $L_X=\{u\}$. Then applying the transversal operator
\[
\bigotimes_{i=1}^n R_Z(\frac{\pi}{2^{r-1}})^{t_i}
\]
to the data qubits realizes a logical $R_Z(\frac{\pi}{2^{r-1}})^{t}$ gate if
\begin{itemize}
    \item $\sum_{i=1}^{n}t_iu_i\equiv t \pmod{2^r}$
    \item $\sum_{i=1}^{n}t_iv_i\equiv 0 \pmod{2^r}$ for each $v \in H_X$, and
    \item Schur product of $(t_1,t_2,\ldots,t_n)$ and any  $2\le s \le r$ number of vectors in $H_X \cup L_X$ has weight divisible by $2^{r-s+1}$.
\end{itemize}
\end{theorem}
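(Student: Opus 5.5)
The plan is to check how the diagonal transversal operator $U=\bigotimes_{i=1}^n R_Z(\frac{\pi}{2^{r-1}})^{t_i}$ acts on the logical computational basis states of $Q$. Since $Q$ is CSS with one logical qubit, its code states are
\[
\ket{\bar{a}}=\frac{1}{\sqrt{|C_2|}}\sum_{x\in C_2}\ket{x+a u}, \qquad a\in\{0,1\},
\]
where $C_2$ is spanned by $H_X$. The operator $U$ is diagonal, so $U\ket{y}=\omega^{\sum_i t_i y_i}\ket{y}$ with $\omega=e^{i\pi/2^{r-1}}$, a primitive $2^r$-th root of unity. It therefore suffices to show that the phase depends only on $a$:
\[
\sum_i t_i (x+au)_i \equiv a\,t \pmod{2^r} \quad \text{for every } x\in C_2,
\]
because then $U\ket{\bar a}=\omega^{at}\ket{\bar a}$, which is exactly the logical $R_Z(\frac{\pi}{2^{r-1}})^t$ with no correction needed.

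The key step is to express the integer weight of a binary sum through Schur products. I will use the inclusion--exclusion identity: for binary vectors $w_1,\dots,w_m$ and integer weights $t$,
\[
\sum_i t_i\,(w_1\oplus\cdots\oplus w_m)_i=\sum_{\emptyset\ne J\subseteq[m]}(-2)^{|J|-1}\sum_i t_i\prod_{j\in J}(w_j)_i .
\]
This follows from the one-variable identity $a\oplus b=a+b-2ab$ by induction on $m$. Next, write $x+au$ as an XOR of distinct generators from $H_X\cup L_X$; repeated generators cancel mod 2, so all of them can be taken distinct. I then bound each term separately:
\begin{itemize}
\item For $|J|=1$, the first two hypotheses give $t$ for $u$ and $0$ for each $v\in H_X$, modulo $2^r$.
\item For $2\le|J|=s\le r$, the third hypothesis, read as the $t$-weighted Schur sum, is divisible by $2^{r-s+1}$. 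Multiplying by the factor $2^{s-1}$ gives divisibility by $2^r$.
\item For $s>r$, the coefficient $2^{s-1}$ is already divisible by $2^r$.
\end{itemize}
Summing over $J$ yields $a\,t \pmod{2^r}$, as required.

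The main obstacle is interpreting the third hypothesis correctly. The identity requires the $t$-weighted sum $\sum_i t_i\prod_{j\in J}(w_j)_i$ to be divisible by $2^{r-s+1}$. Literally, however, the hypothesis says the Schur product of $(t_1,\dots,t_n)$ with the $w_j$ has weight divisible by $2^{r-s+1}$. When the $t_i$ are not $0/1$, "weight" must mean the integer sum of entries, not the Hamming weight. I will adopt that reading, which is the one used in \cite{webster2023transversal,leitch2025transversal}, and state it explicitly. A smaller point to verify is that the identity remains valid with arbitrary integer weights $t_i$; this holds because it is proved coordinatewise.
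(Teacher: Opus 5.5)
Your proof is correct. The paper gives no proof of this theorem; it only cites \cite{webster2023transversal,leitch2025transversal}. Your argument therefore supplies a self-contained proof, and each step holds:
\begin{itemize}
\item You evaluate the diagonal phase on $\ket{\bar a}$.
\item You expand the integer weight of a mod-$2$ sum as $\bigoplus_j w_j=\sum_{\emptyset\neq J}(-2)^{|J|-1}\prod_{j\in J}w_j$. This holds coordinatewise, so arbitrary integer $t_i$ are allowed.
\item You sort the terms into $|J|=1$, $2\le|J|\le r$ and $|J|>r$, and each case is handled correctly.
\item You only need Schur products of distinct generators, so you use no more than the hypotheses provide.
\end{itemize}

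The paper presents these conditions as the necessary-and-sufficient characterization from the cited works. Your route establishes only the ``if'' direction, but that is all the statement asserts.

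Your reading of ``weight'' as the integer sum $\sum_i t_i\prod_{j\in J}(w_j)_i$, rather than a Hamming weight, is the correct one. It matches how the paper applies the theorem in the proof of Theorem~\ref{T:Doubling-general}. There, for an exponent vector containing entries $-s$, it evaluates $u\ast t\equiv 1$ and $(\mathbf{0}_{n_1},\mathbf{1}_{n_1},\mathbf{1}_{n_2})\ast t\equiv 1-1\equiv 0 \pmod{2^r}$, which only makes sense as sums of entries.
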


%%%%%%%%%%%%%%%%%%%%%%%%%%%%%%
\section{Supplementary materials of Section \ref{Sec: general family}}\label{A:closed formula}
%%%%%%%%%%%%%%%%%%%%%%%%%%%%%%%%%%%

\begin{proof}[Proof of Theorem \ref{T:Doubling-general}]
Following similar steps as in other relevant proofs, for example see \cite{bravyi2015doubled,jain2025transversal,berardini2025asymptotically}, one gets a new quantum code with parameters 
$[\![2n_1+n_2,1,\min \{d_1,d_2+2\}]\!]$.
Moreover, the resulted quantum code corresponds to the CSS code of $C_2 \subseteq C_1$, where $C_2$ satisfies $C_2\subseteq C_1 \subseteq C_2^\bot$. 
This implies that $C_1 \setminus C_2 \subseteq C_2^\bot \setminus C_1^\bot$. Thus, $d_x\ge d_z$ for such code. 
That is why we ignored the $X$-distance in the theorem.

The new quantum code $Q$ has the $X$-stabilizer generators 
\begin{equation*}
 G=\begin{bmatrix}
        E_1 & E_1&\bf{0_{n_2}}\\
        \bf{0_{n_1}}&\bf{0_{n_1}}&E_2\\
        \bf{0_{n_1}}&\bf{1_{n_1}}&\bf{1_{n_2}}
    \end{bmatrix},
\end{equation*}
Note that by Theorem \ref{T:NandS for logical} the code $Q_2$ has a transversal logical $R_Z(\frac{\pi}{2^{r-1}})$ in the form of 
\[
O_2=\bigotimes_{i=1}^{n_2} R_Z(\frac{\pi}{2^{r-1}})^{t},
\]
where $n_2t\equiv 1 \pmod{2^r}$. 
Since $n_1$ is odd, we can choose an operator
\[
O_1=\bigotimes_{i=1}^{n_1} R_Z(\frac{\pi}{2^{r-1}})^{s},
\]
satisfying $n_1s_1\equiv -1 \pmod{2^r}$, for $s_i\in \Z_{2^r}$.
Now the operator 
\[
O=O_1^{-1}\otimes O_1\otimes O_2
\]
gives the desired logical action. 
This is because one can choose the logical $X$-operator to be the all $1$ vector $u$. Now the exponent vector $t$ has the form $(-s_{\bf n_1},s_{\bf n_1},t_{\bf n_2})$. One can check easily that all the Schur products satisfy the conditions of Theorem \ref{T:logical con}. 
For instance,
$u\ast t \equiv 1 \pmod{2^r}$, and the first two blocks generate $2^r$ divisible codes, the third row satisfies 
\[
(0_{n_1},1_{n_1},1_{n_2})\ast t \equiv 1-1=0 \pmod{2^r},
\]
and the rest of Schur products will be implied similarly.
Thus $O$ gives a logical $R_Z(\frac{\pi}{2^{r-1}})$. 
$\hfill \square$
\end{proof}
We just remark that, in the above proof, if $n_1+n_2 \equiv 0 \pmod{2^k}$, one can choose the physical operator to be 
\[
O=\bigotimes_{i=1}^{2n_1+n_2} M,
\]
where $M=(R_Z(\frac{\pi}{2^{k-1}}))^{n_1^{-1}}$.

\begin{proof}[Proof of Theorem \ref{T:Doubling color code1}]
The proof follows from applying the result of Corollary \ref{T:Doubling-selforthogonal} recursive using all-even codes of different lengths and the previously generated quantum code of smaller distance. 
More precisely, to build a quantum code with minimum distance $d=2k-1$, one needs three ingredients, namely: 
\begin{enumerate}
    \item the quantum code of distance $d=2k-3$ obtained from the previous step,
    \item two copies of all-even  code of distance $2k-1$, and
    \item a new check qubit connected to all data qubits of one pair of all-even code, and connected to all data qubits in the top layer of the previous block.
\end{enumerate}
As an example, one can see the structure of $[\![17,1,5]\!]$ in Figure \ref{fig:recursive}. 

We proceed with induction on $k$. The initial cases for $k=2$ and $3$ have already been discussed ($[\![7,1,3]\!]$ and $[\![17,1,5]\!]$ codes). 
So suppose that a 2D quantum self-orthogonal code $Q$ with parameters $[\![2k^2-1,1,2k-1]\!]$ containing $k^2-k-3$ $X$-stabilizer generators of weight four, and the rest of $X$-stabilizer generators having weights
$8,12,16,\ldots,4k-4$. We assume this code has the geometry discussed in the theorem (2D, three colorable, top layer representing a logical $X$ and $Z$).  

We modify the code $Q$ and two copies of the all-even code of length $2k+1$ using the result of Corollary \ref{T:Doubling-selforthogonal} to generate a new quantum code $Q'$ with the $X$ stabilizer generators of the form matrix $G$ shown in \eqref{E:doubling matrix2}. 
Since the top of layer of data qubits in $Q$ represent a minimum weight logical operator, the code $Q'$ has a Tanner graph, where 
\begin{itemize}
    \item 
the bottom part is the Tanner graph of $Q$, \item on the top there are 2 layers of $2k+1$ data qubits connected through $2k$ wight four checks, and 
\item in the middle there is a check qubit connecting a layer of $2k+1$ from the top with the logical operator of $Q$ through a check which has weight $2k+1+2k-1=4k$. 
\end{itemize}
Now Corollary \ref{T:Doubling-selforthogonal} implies that the top layer resembles a minimum weight logical operator, and hence the claimed triangular 2D representation is preserved. 
One can color the middle check with a color that is different from the two colors on the top layer of $Q$, and the new $2k$ checks of all-even codes by the remaining two colors. Hence it is three colorable. 
The self-orthogonality part and the claims about parameters and logical action are other direct consequence of Corollary \ref{T:Doubling-selforthogonal}. 
So we only prove the fact about the weight of stabilizer generators.

The code $Q$ has $k^2-k+3$ weight four generators. 
Adding the new $2k$ weight four generators give the total of
\[k^2+k-3=(k+1)^2-(k+1)-3
\]
weight four stabilizer generators for $Q'$.
Finally, the new wight $4k$ (connecting top layer of $Q$ and all data qubits of one pair of all-even code, called middle check above) stabilizer generator described above, gives a set of $(k+1)-2$ stabilizer generators with weights 
\[8,12,16,\ldots, 4k-4,4k.\]
$\hfill \square$
\end{proof}

\begin{proof}[Proof of Theorem \ref{T:CH3 family}]
We proceed recursively by applying the doubling construction of Theorem \ref{T:Doubling-general} to the following pair of codes; (a) distance $2k+1$ quantum code of Theorem \ref{T:Doubling color code1} and, (b) the distance $2k-1$ quantum code obtained from the previous round of recursion. 

The case with distance $d=1$ (the $[\![1,1,1]\!]$ code) follows trivially, and $d=3$ was already mentioned in Example \ref{example: higherd=3} (the $[\![15,1,3]\!]$ Reed-Muller code). 
Suppose that the result is true for $m=2k-1$, i.e. this approach implies the existence of 
a family of quantum codes with parameters 
\[
[\![\frac{2}{3}k(k+1)(2k+1)-(2k+1),1,2k-1]\!]
\]
that realizes logical $T$ gate through transversal action of powers of $T$ on physical qubits, and all $X$-stabilizers have weights divisible by $8$, for each $1\le m \le 2k-1$. 
Next we prove the result for $m=2k+1$. 
Applying the doubling construction of Theorem \ref{T:Doubling-general} to 
(a) the $[\![n_1=2(k+1)^2-1,1,2k+1]\!]$ self-orthogonal code and (b) the previously obtained \[
[\![n_2=\frac{2}{3}k(k+1)(2k+1)-(2k+1),1,2k-1]\!]
\] 
code gives a new code with parameters 
\[
[\![2n_1+n_2,1,2k+1]\!].
\] 
An easy computation shows that
\[
2n_1+n_2=\frac{2}{3}(k+1)(k+2)(2k+3)-(2k+3).
\]
So we only need to check the claim about the weight of $X$-stabilizers. The set of new $X$ stabilizer generators are in the form  
\[
 G=\begin{bmatrix}
        E_1 & E_1&\bf{0_{n_2}}\\
        \bf{0_{n_1}}&\bf{0_{n_1}}&E_2\\
        \bf{0_{n_1}}&\bf{1_{n_1}}&\bf{1_{n_2}}
    \end{bmatrix},
\]
where $E_1$ and $E_2$ are the $X$-stabilizer generators of the (a) and (b) codes above, respectively. 
Since $E_1$ generates a doubly even code, and $E_2$ a triply even code, we conclude that the first two blocks of matrix $G$ above, and any linear combination of them, always have triply even weights. 
A linear combination of the last row with the other blocks implies an $X$-stabilizer of the form 
\[
u=(v,v+{\bf{1_{n_1}}},w+{\bf{1_{n_2}}}),
\]
where $v$ and $w$ are generated by $E_1$ and $E_2$. Moreover, we have 
\[
\begin{split}
\wt(u)&\equiv n_1+n_2=(2(k+1)^2-1)\\&+(\frac{2}{3}k(k+1)(2k+1)-(2k+1))   \\& 
\equiv \frac{4}{3}k(k+1)(k+2) \equiv 0 
\pmod 8. 
\end{split}
\]
The last equality is because $6 \mid k(k+1)(k+2)$.
Now since it is a triply even code, Theorem \ref{T:Doubling-general} implies the existence of transversal logical $T$ gate and completes the proof. $\hfill\square$
\end{proof}

Below, we give the proof of the identities given in \eqref{equ: recurrence length} and \eqref{equ:divisibility}. 

\begin{lemma}\label{T:recurence-color}
Let $r,k\ge 1$ be two integers and
\[
S_r(k)=2\big(\sum_{i=1}^{k} S_{r-1}(i)\big)-1,
\]
where $S_r(1)=1$ for all $r\ge 1$ and
\[S_1(k)=2k-1.\]
Then for each $r\ge 1$ and $k\ge 2$, we have
\[
S_r(k)=\sum_{i=0}^{r} 2^i \binom{i+k-2}{i}.
\]    
\end{lemma}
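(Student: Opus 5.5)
We argue by induction on $r$, using the hockey-stick identity $\sum_{i=1}^{k}\binom{j+i-2}{j}=\binom{j+k-1}{j+1}$. The idea is to express the recurrence as a telescoping difference in $k$ rather than summing directly.

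\textbf{Base case and a reformulation.} For $r=1$ the claimed formula gives $\binom{k-1}{0}+2\binom{k-1}{1}=1+2(k-1)=2k-1=S_1(k)$, as required. The defining recurrence gives, for $k\ge 2$,
\[
S_r(k)-S_r(k-1)=2S_{r-1}(k),
\]
with $S_r(1)=1$. It therefore suffices to show that $F_r(k):=\sum_{i=0}^{r}2^i\binom{i+k-2}{i}$ satisfies the same first-order recurrence in $k$ and the same initial value. For the initial value, the formula extended to $k=1$ gives $F_r(1)=\sum_i 2^i\binom{i-1}{i}$. Here $\binom{-1}{0}=1$, and $\binom{i-1}{i}=0$ for $i\ge1$ under the convention $\binom{m}{i}=0$ for $0\le m<i$. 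So $F_r(1)=1$, and the recurrence determines the values for $k\ge2$.

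\textbf{Checking the recurrence for $F_r$.} By Pascal's rule, $\binom{i+k-2}{i}-\binom{i+k-3}{i}=\binom{i+k-3}{i-1}$ for $i\ge1$, while the $i=0$ terms cancel. Hence
\[
F_r(k)-F_r(k-1)=\sum_{i=1}^{r}2^i\binom{i+k-3}{i-1}=2\sum_{j=0}^{r-1}2^j\binom{j+k-2}{j}=2F_{r-1}(k).
\]
By the inductive hypothesis $F_{r-1}(k)=S_{r-1}(k)$ for all $k\ge1$ (using that $F_{r-1}(1)=1$ holds as well). Therefore $F_r$ and $S_r$ satisfy the same recurrence with the same starting value, so they agree for every $k$. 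Induction on $k$ finishes the argument.

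\textbf{Main obstacle.} The only delicate point is the $k=1$ / $k=2$ boundary. Pascal's rule must be applied with $k-3$ possibly equal to $-1$, where $\binom{i-1}{i-1}$ and $\binom{i-2}{i}$ need consistent conventions. To avoid this, I would run the $k$-induction starting from $k=2$ and verify $F_r(2)=\sum_i 2^i=2^{r+1}-1=2S_{r-1}(2)+1$ directly. Here the inductive hypothesis gives $S_{r-1}(2)=2^{r}-1$, and indeed $2(2^r-1)+1=2^{r+1}-1$. For $k\ge3$ all binomial upper indices are nonnegative, so the Pascal step applies with no convention issues.
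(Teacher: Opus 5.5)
Your proof is correct, but it takes a different route from the paper's. The paper also inducts on $r$. It substitutes the closed form for every $S_r(i)$ directly into $2\sum_{i=1}^{k}S_r(i)-1$, interchanges the two sums, and collapses the inner sum with the hockey-stick identity. You instead difference the recurrence to $S_r(k)-S_r(k-1)=2S_{r-1}(k)$; the paper only derives this identity afterwards, in the proof of the divisibility lemma. You then verify with Pascal's rule that $F_r(k)=\sum_{i=0}^{r}2^i\binom{i+k-2}{i}$ satisfies the same first-order recurrence, and conclude by uniqueness via a nested induction on $k$. Despite your opening sentence, you only need the one-step Pascal rule, not the summed hockey-stick form.

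What your version buys is correct bookkeeping at the boundary. The paper's summation plugs the closed form in at $i=1$, which is outside the stated range $k\ge 2$. It also uses $\binom{j-1}{j}=0$ for $j=0$, so $S_r(1)$ is counted as $0$ rather than $1$. Its intermediate expression $\sum_{i=1}^{r+1}2^i\binom{i+k-2}{i}-1$ is therefore too small by $2$; for example it gives $5$ instead of $S_2(2)=7$. The final step, which absorbs the $-1$ as the $i=0$ term, then makes a compensating error of $+2$.

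Your choice to start the $k$-induction at $k=2$, with the direct check $F_r(2)=2^{r+1}-1=1+2S_{r-1}(2)$, and to apply Pascal's rule only for $k\ge 3$, avoids this entirely. It also uses the inductive hypothesis only where it is stated. One minor point: at $k=2$ the only convention actually at stake is $\binom{-1}{0}=1$ in the $i=0$ term, not anything involving $\binom{i-2}{i}$. Since you bypass that case, it does not matter.
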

\begin{proof}
We proceed by induction on $r \ge 1$.
Base case ($r=1$): By definition, $S_1(k)=2k-1$. Using the summation formula, we have 
\[
\begin{split}
\sum_{i=0}^{1} 2^i \binom{i+k-2}{i} &= \binom{k-2}{0} + 2\binom{k-1}{1} \\&= 1 + 2(k-1) = 2k-1.
\end{split}
\]
Hence the base case holds. 
So assume the hypothesis holds for some $r \ge 1$ and for all $k \ge 2$. 
We evaluate $S_{r+1}(k)$ as
\[
S_{r+1}(k) = 2\left(\sum_{i=1}^{k} S_r(i) \right) - 1.\]
Substituting the inductive hypothesis gives
\[
S_{r+1}(k) = 2 \left( \sum_{i=1}^{k} \sum_{j=0}^{r} 2^j \binom{j+i-2}{j} \right) - 1.
\]
Interchanging the order of summation
\[
S_{r+1}(k) = 2 \left( \sum_{j=0}^{r} 2^j \sum_{i=1}^{k} \binom{j+i-2}{j} \right) - 1.
\]
We simplify the inner sum using the hockey-stick identity, $\sum_{i=r}^n \binom{i}{r} = \binom{n+1}{r+1}$.
For our inner sum, we have
\[
\begin{split}
\sum_{i=1}^{k} \binom{j+i-2}{j} &= \binom{j-1}{j} + \binom{j}{j} + \binom{j+1}{j} \\&+ \dots + \binom{j+k-2}{j}.
\end{split}
\]
Since $\binom{j-1}{j} = 0$, we have 
\[
\sum_{i=1}^{k} \binom{j+i-2}{j} = \binom{j+k-1}{j+1}. \]
Substituting this back into $S_{r+1}(k)$ gives 
\[
S_{r+1}(k) = \left( \sum_{j=0}^{r} 2^{j+1} \binom{j+k-1}{j+1} \right) - 1. \]
Choosing $i = j+1$ in the above sum implies
\[S_{r+1}(k) = \left( \sum_{i=1}^{r+1} 2^i \binom{i+k-2}{i} \right) - 1 \]
Finally, observe that when $i=0$, term of the target summation is
\[
2^0 \binom{0+k-2}{0} = 1.
\]
Thus we can therefore replace the main sum by 
 \[S_{r+1}(k) = \sum_{i=0}^{r+1} 2^i \binom{i+k-2}{i}.\]
This completes the induction. 
$\hfill \square$
\end{proof}

Next we proof the divisibility condition of \eqref{equ:divisibility}. \\
\begin{lemma}\label{lem: divisibility rec}
For integers $r \ge 1$ and $k \ge 2$, let
\[
S_r(k) = \sum_{i=0}^{r} 2^i \binom{k+i-2}{i}.
\]
Then the following identity holds
\begin{equation}\label{equ:divisproof}
S_r(k-1) + S_{r-1}(k) = 2^r \binom{k+r-2}{r}.
\end{equation}
\end{lemma}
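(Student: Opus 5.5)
The plan is to prove \eqref{equ:divisproof} by induction on $r$, for fixed $k\ge 2$. Write
\[
T_r := S_r(k-1)+S_{r-1}(k).
\]
The idea is that $T_r-T_{r-1}$ consists of exactly the two top-index terms that the closed form of Lemma~\ref{T:recurence-color} adds when passing from $r-1$ to $r$. These two terms then merge with the inductive value of $T_{r-1}$ through Pascal's rule.

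Throughout, I would use the closed form $S_r(m)=\sum_{i=0}^{r}2^i\binom{m+i-2}{i}$ with the standard convention $\binom{n}{0}=1$ for every integer $n$, including $n=-1$. This convention is what makes the formula return $S_r(1)=1$, so the case $k=2$ is covered uniformly. Under it, $S_0(m)=1$ for all $m$.

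\textbf{Base case $r=1$.} Here $S_1(k-1)=2(k-1)-1=2k-3$ and $S_0(k)=1$. So $T_1=2k-2=2\binom{k-1}{1}$, which is the claim.

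\textbf{Inductive step.} Assume $T_{r-1}=2^{r-1}\binom{k+r-3}{r-1}$. Subtracting the closed forms termwise, all summands with index $i\le r-1$ in $S_r(k-1)-S_{r-1}(k-1)$ cancel, and likewise those with $i\le r-2$ in $S_{r-1}(k)-S_{r-2}(k)$. This leaves
\[
T_r-T_{r-1}=2^r\binom{k+r-3}{r}+2^{r-1}\binom{k+r-3}{r-1}.
\]
Adding the inductive value of $T_{r-1}$ gives
\[
T_r=2^r\binom{k+r-3}{r}+2\cdot 2^{r-1}\binom{k+r-3}{r-1}=2^r\left[\binom{k+r-3}{r}+\binom{k+r-3}{r-1}\right].
\]
By Pascal's rule the bracket equals $\binom{k+r-2}{r}$, which closes the induction.

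The only delicate point is bookkeeping at the boundary. The step from $r=1$ to $r=2$ uses $S_0$, so I must confirm that the closed form at $r=0$ agrees with $S_0\equiv 1$. I also need Pascal's rule to remain valid when $k=2$, where the upper index $k+r-3=r-1$ is smaller than $r$; there $\binom{r-1}{r}=0$ and the rule still holds. Once these conventions are fixed, the argument is a direct computation.
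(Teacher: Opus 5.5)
Your proof is correct, but it takes a different route from the paper. The paper does not induct. It uses the defining recurrence \eqref{equ:first form} in the form $S_r(k)-S_r(k-1)=2S_{r-1}(k)$ (equation \eqref{Equ:relation recurrence}) to rewrite the left-hand side as $S_r(k)-S_{r-1}(k)$. By the closed form of Lemma~\ref{T:recurence-color}, that difference is exactly the single top term $2^r\binom{k+r-2}{r}$.

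You instead work only from the closed form given in the statement, telescoping in $r$ rather than in $k$. The increment $T_r-T_{r-1}$ consists of two top-index terms, and Pascal's rule merges them with the inductive value.

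The paper's argument is a two-line computation. However, the lemma defines $S_r$ by the closed form, so that argument silently relies on the equivalence between the recurrence and the closed form. Lemma~\ref{T:recurence-color} states this equivalence only for $k\ge 2$, with $S_1(k)=2k-1$ given directly. The boundary cases $k-1=1$ and $r-1=0$ are therefore handled only implicitly, through $S_r(1)=1$ and $S_0\equiv 1$.

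Your argument is self-contained and makes those conventions explicit, at the cost of an induction. You could also recover the paper's key relation directly from the closed form via $\binom{k+i-2}{i}-\binom{k+i-3}{i}=\binom{k+i-3}{i-1}$. That would make the paper's shorter route equally self-contained.
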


\begin{proof}
Applying the identity \eqref{equ:first form} gives
\begin{equation}\label{Equ:relation recurrence}
\begin{split}
S_r(k) &- S_r(k-1)= 2 S_{r-1}(k)
\end{split}
\end{equation}
or equivalently one gets
\[
S_r(k-1) = S_r(k) - 2 S_{r-1}(k).
\]
Substituting this into the left-hand side of \eqref{equ:divisproof}, we get
\[
\begin{split}
S_r(k-1) &+ S_{r-1}(k)
= \big[ S_r(k) - 2 S_{r-1}(k) \big] \\&+ S_{r-1}(k) 
= S_r(k) - S_{r-1}(k).
\end{split}
\]
Applying the result of previous lemma implies that
\[
\begin{split}
S_r(k) & - S_{r-1}(k) 
=  \sum_{i=0}^{r} 2^i \binom{k+i-2}{i}
   \\&- \sum_{i=0}^{r-1} 2^i \binom{k+i-2}{i} = 2^r \binom{k+r-2}{r}.
\end{split}
\]
\end{proof}

\begin{proof}[Proof of Theorem \ref{T:general family form}]
The proof for $r=2$ and $3$ and each $k\ge 1$ was provided in Theorems \ref{T:Doubling color code1} and \ref{T:CH3 family}. 
We proceed by applying the doubling construction of Theorem \ref{T:Doubling-general} to the codes with lengths $S_r(k-1)$ and $S_{r-1}(k)$ for each $r$ and $k$ values. 

We prove by induction on $r$. The base cases $r=2,3$ have already discussed. 
So we assume that the result holds for each $r\le i-1$. 
To prove the result for $i=r$ and each $k\ge 1$, we  apply another induction on $k$. 

The base case $k=1$ correspond to the $[\![1,1,1]\!]$ (and $k=2$ was shown by the codes discussed in Example \ref{example: higherd=3}). 
So we assume that the result holds for $k=j-1$, and we need to show that the existence of the code with parameters $[\![S_i(j),1,2j-1]\!]$. 
Existence of such code consequently completes the claims of both of the inductions.

Applying the doubling of the codes with lengths $S_i(j-1)$ and $S_{i-1}(j)$, obtained from the previous steps yields a quantum code with parameters 
\[
[\![S_{r}(k-1)+2S_{r-1}(k),1,2k]\!].
\]
Equation \eqref{equ:first form} implies that $S_{r}(k-1)+2S_{r-1}(k)=S_r(k)$ as desired.  
The claims about weight of stabilizers and the logical gate now follows from Theorem \ref{T:Doubling-general}.
$\hfill\square$
\end{proof}

%%%%%%%%%%%%%%%%%%%%%%%%%%%%%%
\section{Supplementary materials of Section \ref{Sec: Surface as another}}\label{A:supp-surface}
%%%%%%%%%%%%%%%%%%%%%%%%%%%%%%%%%%%

\begin{proof}[Proof of Theorem \ref{T:doubling-surface}]
The proof is based on recursively applying the result of Proposition \ref{P:doubling-surface}. 
First, recall that the codes constructed from this proposition have the $X$-stabilizer generator matrix of the form 
\begin{equation*}
G = \begin{bmatrix}
        E & E&\bf{0_{n}}\\
        \bf{0_{(d+2)^2}}&\bf{0_{(d+2)^2}}&G_2\\
        \bf{0_{(d+2)^2}}&\bf{1_{(d+2)^2}}&\bf{v}
    \end{bmatrix},
\end{equation*}
where $E$ and $G_2$ are the generator matrices of the $X$-stabilizers of a surface code with distance $d+2$, and the code $Q$ is obtained from the previous recursion (with minimum distance $d$),
respectively. 
Also, $\bf{v}$ is a minimum weight logical $X$-operator of $Q$. 
The vector $(\bf{1_{(d+2)^2}},\bf{0_{(d+2)^2}},\bf{0_n})$ is a logical X-operator of the new quantum code. 
One can show that the minimum weight $X$-logical operator has the form $(u,u,v)$, where $u$ is a minimum weight $X$-logical operator of the surface code.   

The case $d=1$ is trivial (the $[\![1,1,1]\!]$ code). 
For $d=3$, applying the mentioned proposition to the $[\![9,1,3]\!]$ surface code and the totally orthogonal code $[\![1,1,1]\!]$ gives a quantum code with parameters $[\![19,1,3]\!]$. 
In this case, the first two row blocks of $G$ generate a subspace with weight divisible by four. 
The last row is $(0_9,1_9,1_1)$, which has weight $10 \equiv 2 \pmod 4$. 
The orthogonality part can be easily checked by the structure of the rows. 
The claim about the transversal logical $S$ follows immediately as
the $X$-stabilizers are orthogonal. 

Next, we assume that the claim is true for distance $i=d-2$ and one can obtain a quantum code $Q_0$ with the claimed properties. 
Applying the construction of Proposition \ref{P:doubling-surface} to 
the code $Q_0$ with parameters 
\[
[\![n=\frac{(d-2)(d-1)d}{3}-1,1,d-2]\!]
\]
and the $[\![d^2,1,d]\!]$ rotated surface code gives a new quantum code with the parameters 
\[
[\![2d^2+\frac{(d-2)(d-1)d}{3}-1,1,d]\!]
\]
and $X$-stabilizers
\begin{equation*}
G = \begin{bmatrix}
        E & E&\bf{0_{n}}\\
        \bf{0_{d^2}}&\bf{0_{d^2}}&G_2\\
        \bf{0_{d^2}}&\bf{1_{d^2}}&\bf{1_n}
    \end{bmatrix}.
\end{equation*}
A simple calculation shows that
\[
2d^2+\frac{(d-2)(d-1)d}{3}-1=\frac{d(d+1)(d+2)}{3}-1.
\]
Moreover, note that the matrix $G_2$ can be decomposed into $[G_{2}',w]^T$, where rows of $G_2'$ generate a subspace with weights divisible by four, and $w$ has weight equal to $2$ modulo $4$. We can represent the matrix $G$ above as 
\begin{equation*}
G = \begin{bmatrix}
        E & E&\bf{0_{n}}\\
        \bf{0_{d^2}}&\bf{0_{d^2}}&G_2'\\
        \bf{0_{d^2}}&\bf{0_{d^2}}&w\\

        \bf{0_{d^2}}&\bf{1_{d^2}}&\bf{1_n}
    \end{bmatrix}.
\end{equation*}
We consider two cases. 
(a) Suppose that $n+d^2 \equiv 0 \pmod 4$. Then the first two blocks of $G$ always have a weight divisible by four. 
Moreover, adding the last row to the vectors generated by the first two blocks gives new vectors of the form 
\[
u=(a,a+{\bf{1_{d^2}}},b+{\bf{1_n}}).
\]
Since $n+d^2$ and $\wt(b)\equiv 0 \pmod 4$, we conclude that 
\[\wt(u)=\wt(a)+d^2-\wt(a)+n-\wt(b) \equiv 0 \pmod 4.
\]
This shows  that, except the third row, the rest of the rows of $G$ generate a space of vectors with weights divisible by four. 

(b) Suppose that $n+d^2 \equiv 2 \pmod 4$. 
Again the first two blocks of $G$ generate vectors of weight divisible by four. 
Adding such vectors to the sum of the last two rows gives vectors of the form 
\[
u=(a,a+{\bf{1_{d^2}}},b+w+{\bf{1_n}}).
\]
Since $n+d^2,\wt(b+w)\equiv 2 \pmod 4$, we conclude that 
\[\wt(u)=d^2+n-\wt(b+w) \equiv 0 \pmod 4.
\]
This completes the proof by partitioning the $X$-stabilizers to a set with weights divisible by four and a set with weights equal to $2$ modulo $4$. 
$\hfill\square$
\end{proof}

For the rest of this section, we deal with the proof of Theorem \ref{T:surface-hierarchy}. 
The process of the proof is very similar to that of Theorem \ref{T:general family form}, and occasionally we refer to the later for more details. 

\begin{proof}[Proof of Theorem \ref{T:surface-hierarchy}]

First recall that applying the result of Theorem \ref{T:Doubling-general} to rotated surface codes led to the family of quantum codes presented in Theorem \ref{T:doubling-surface}. 
We represent length of the codes obtained this way by 
\[
S_2(k)=\frac{(2k-1)(2k)(2k+1)}{3}
\]
i.e the codes inside this family have parameters $[\![S_2(k),1,2k-1]\!]$. 
Furthermore, using the same convention, we can represent $S_1(k)$ to be the length of a surface code with distance $2k-1$, i.e., 
\[\![\![S_1(k)=(2k-1)^2,1,2k-1]\!].  
\]
Repeated application of Theorem \ref{T:Doubling-general} implies a hierarchy of codes, and using induction on first $r$ and then on $k$ one can show that such codes have parameters
$[\![S_r(k),1,2k-1]\!]$, where $r,k \ge 1$ and 
$S_r(k)=2S_{r-1}(k)+S_{r}(k-1)$. 
Note also that we choose $S_r(1)=1$, corresponding to the trivial $[\![1,1,1]\!]$ code. 
Moreover, using the same induction steps one obtain the following matrix for $X$-stabilizers
\begin{equation*}
G = 
\begin{bmatrix}
    E & E & \mathbf{0}_{S_r(k-1)} \\
    \mathbf{0}_{S_{r-1}(k)} & \mathbf{0}_{S_{r-1}(k)} & G_2 \\
    \mathbf{0}_{S_{r-1}(k)} & \mathbf{1}_{S_{r-1}(k)} & \mathbf{1}_{S_r(k-1)}
\end{bmatrix},
\end{equation*}
with a logical operator $(\mathbf{1}_{S_{r-1}(k)},\mathbf{0}_{S_{r-1}(k)}, \mathbf{0}_{S_r(k-1)})$,
where by induction hypothesis $G_2$ is $r$-orthogonal and $E$ is $(r-1)$-orthogonal. Now, one can easy verify that $G$ with the give logical operator forms an $r$-orthogonal quantum code.

The last remaining piece is to prove the claimed formula for $S_r(k)$. 
\begin{lemma}
For each integer $r$ and $k \geq 1$, 
if $S_1(k) = (2k-1)^2$, $S_r(1)=1$, and $S_r(k)=2S_{r-1}(k)+S_{r}(k-1)$, then
\[
S_r(k) = 2^{r+2} \binom{k+r-1}{r+1} + \sum_{i=0}^{r-1} 2^i \binom{k+i-2}{i}.
\]
\end{lemma}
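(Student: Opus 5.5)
We verify that the closed form $F_r(k):=2^{r+2}\binom{k+r-1}{r+1}+\sum_{i=0}^{r-1}2^i\binom{k+i-2}{i}$ meets the same initial conditions and the same recurrence as $S_r(k)$. Since the recurrence $S_r(k)=2S_{r-1}(k)+S_r(k-1)$, together with $S_1(\cdot)$ and $S_r(1)$, determines $S_r(k)$ uniquely, this suffices. Formally this is a double induction, first on $r$ and then on $k$, exactly as in the proof of Theorem \ref{T:general family form}. We write $F_r=A_r+B_r$, with $A_r(k)=2^{r+2}\binom{k+r-1}{r+1}$ and $B_r(k)=\sum_{i=0}^{r-1}2^i\binom{k+i-2}{i}$, and treat the two parts separately. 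Throughout we use the convention $\binom{n}{0}=1$ for every integer $n$ (including $n=-1$) and $\binom{i-1}{i}=0$ for $i\ge 1$, as already used in Lemma \ref{T:recurence-color}.

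\emph{Initial conditions.} For $k=1$ we have $\binom{r}{r+1}=0$, so $A_r(1)=0$. In $B_r(1)=\sum_{i=0}^{r-1}2^i\binom{i-1}{i}$ only the $i=0$ term survives and equals $1$. Hence $F_r(1)=1$. For $r=1$ the sum $B_1(k)$ has only its $i=0$ term, so $F_1(k)=8\binom{k}{2}+1=4k(k-1)+1=(2k-1)^2$, as required.

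\emph{Recurrence, first part.} By Pascal's rule,
\[
A_r(k)-A_r(k-1)=2^{r+2}\Big[\tbinom{k+r-1}{r+1}-\tbinom{k+r-2}{r+1}\Big]=2^{r+2}\tbinom{k+r-2}{r},
\]
which is exactly $2A_{r-1}(k)=2\cdot 2^{r+1}\binom{k+r-2}{r}$.

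\emph{Recurrence, second part.} Again by Pascal,
\[
B_r(k)-B_r(k-1)=\sum_{i=1}^{r-1}2^i\tbinom{k+i-3}{i-1},
\]
since the $i=0$ terms cancel. Shifting the index with $j=i-1$ turns this into $2\sum_{j=0}^{r-2}2^j\binom{k+j-2}{j}=2B_{r-1}(k)$. This is the same telescoping as in Lemma \ref{T:recurence-color}: $B_r$ is just the color-code length $S_{r-1}$ of \eqref{equ: recurrence length}. Adding the two parts gives $F_r(k)=2F_{r-1}(k)+F_r(k-1)$ for $r\ge2$ and $k\ge2$.

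\emph{Expected obstacle.} The only real difficulty is bookkeeping at the boundary values. The case $r=2$ needs $B_1\equiv1$. The case $k=2$ involves binomials with upper index $-1$ or $0$, so the Pascal identities must still hold there under the stated conventions. I would check these edge cases explicitly, for instance $F_2(2)=16+1+2=19$, matching the $[\![19,1,3]\!]$ code. If the conventions cause trouble, I would treat $k=2$ as a separate base case and apply Pascal only for $k\ge3$.
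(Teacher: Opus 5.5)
Your proof is correct and is essentially the paper's argument: a double induction on $r$ and then $k$ (equivalently, uniqueness of the solution of the recurrence), with Pascal's rule giving $2A_{r-1}(k)+A_r(k-1)=A_r(k)$ and the color-code length recurrence giving $2B_{r-1}(k)+B_r(k-1)=B_r(k)$. The only difference is that you verify the latter directly by Pascal's rule and an index shift instead of citing \eqref{Equ:relation recurrence}, which also makes the boundary cases explicit ($r=2$ with $B_1\equiv 1$, and $k=1,2$); the paper's treatment of these cases is looser.
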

\begin{proof}
We prove by induction on $r$. 
The base case for $r=1$ implies that for any $k$ the formula gives
\[
S_1(k)=8\frac{k(k-1)}{2}+1=(2k-1)^2.
\]
Next we assume that the result is true for all integers equal or smaller than $r-1$, and we prove it for $r$. We proceed with another induction on $k$. When $k=1$, we get $S_r(1)=1$ (because $\binom{0}{-1}=1$) which is true. 
So we assume that the result holds for any integers $i\le r-1$ and any $j\le k-1$, and we need to prove the formula for $S_r(k)$. 
Note that $S_r(k)=2S_{r-1}(k)+S_{r}(k-1)$.
Also by the induction hypothesis we have 
\[
S_{r-1}(k)= 2^{r+1} \binom{k+r-2}{r} + \sum_{i=0}^{r-2} 2^i \binom{k+i-2}{i}
\]
and
\[
S_r(k-1) = 2^{r+2} \binom{k+r-2}{r+1} + \sum_{i=0}^{r-1} 2^i \binom{k+i-3}{i}.
\]
Then we have: 
\[
\begin{split}
S_r(k)&=2S_{r-1}(k)+S_{r}(k-1)=\\& 2^{r+2}\big( \binom{k+r-2}{r} + \binom{k+r-2}{r+1} \big)\\&+   \big(2 \sum_{i=0}^{r-2} 2^i \binom{k+i-2}{i} + \sum_{i=0}^{r-1} 2^i \binom{k+i-3}{i} \big)\\&=
2^{r+2} \binom{k+r-1}{r+1} + \sum_{i=0}^{r-1} 2^i \binom{k+i-2}{i},
\end{split}
\]
where the first sum follows from Pascal's identity and the second sum from \eqref{Equ:relation recurrence} (here the terms in the sum translate to the $2S_{r-1}(k)+S_{r}(k-1)$ of \eqref{Equ:relation recurrence}).
\end{proof}

\end{proof}

%%%%%%%%%%%%%%%%%%%%%%%%%%%%%%%%%%%%%%%%%%%%%%%%%%
\section{Symplectic form and proof of Theorem \ref{T:Surface optimization}} \label{A:symplectic}
%%%%%%%%%%%%%%%%%%%%%%%%%%%%%%%%%%%%%%%%%%%%%%%%%%

We will use the properties of symplectic vector spaces in the proof of Theorem \ref{T:Surface optimization}. Therefore we highlight important properties of symplectic spaces before proving the theorem. 
For more information about symplectic bilinear forms, one can see, for example, \cite[pp.~284--285]{bierbrauer2016introduction}.
\begin{definition}
A {\em symplectic vector space} is a pair $(V,\omega)$ where $V$ is a vector space over a field $\mathbb{F}$ and 
$\omega : V \times V \to \mathbb{F}$ is a bilinear form that is
\begin{itemize}
    \item alternating, i.e., $\omega(v,v)=0$ for all $v \in V$, and
    \item non-degenerate, i.e., if $\omega(v,w)=0$ for all $w \in V$, then $v=0$.
\end{itemize}
The form $\omega$ is called a symplectic form.
\end{definition}
A symplectic vector space 
of dimension $2n$ has a basis
\[
\{e_1,\dots,e_n,f_1,\dots,f_n\}
\] 
such that
\[
\omega(e_i,e_j)=0, \quad 
\omega(f_i,f_j)=0, \quad 
\omega(e_i,f_j)=\delta_{ij}.
\]
Such a basis is called a \emph{symplectic basis}.

In the proof, we view the binary space of all $X$-stabilizers as a binary vector space endowed with the Euclidean inner product, and we show that this structure gives rise to a symplectic form.
\begin{proof}[Proof of Theorem \ref{T:Surface optimization}]
(1) The distance $d$ surface code has parameters $[\![d^2,1,d]\!]$ consisting of $\frac{d^2-1}{2}$ X-checks generators that all have even weight. Let $V_1$ be the space of all X-checks. 
Note that if $v\in V_1$ is orthogonal to all $X$-stabilizers, then it has to be a $Z$-stabilizer. However,  $X$- and $Z$-stabilizers of surface code are disjoint. 
This implies that $V_1$ is a symplectic of dimension $\frac{d^2-1}{2}$.     

Similarly, the code space of all-even vectors of length $\frac{d^2+1}{2}$, namely $V_2$, forms a symplectic form of dimension $\frac{d^2-1}{2}$.    

Let 
\[
\{e_1,\dots,e_r,f_1,\dots,f_r\}
\] 
and 
\[
\{e_1',\dots,e_r',f_1',\dots,f_r'\}
\] 
be symplectic basis of $V_1$ and $V_2$, respectively. 
Then the length $\frac{3d^2+1}{2}$ vectors 
\[
\{(e_1,e_1'),\dots, (e_r,e_r'),(f_1,f_1'),\dots,(f_r,f_r')\}
\]
are pairwise orthogonal and form a self-orthogonal binary space. 

(2) As we mentioned earlier, the claimed quantum code $Q$ has X-check generator
\begin{equation*}
 G=\begin{bmatrix}
        E & E''&\bf{0_{n}}\\
        \bf{0_{d^2}}&\bf{0_{d^2-2d+2}}&G\\
        \bf{0_{d^2}}&\bf{1_{d^2-2d+2}}&\bf{1_{n}}
    \end{bmatrix},
\end{equation*}
where $E''$ is obtained from puncturing the $2d-2$ columns of matrix $E$ corresponding to data qubits that overlap X-checks with weight 4 only once. 
We also fix an $X$-logical operator for $Q$ in the form of 
$(\bf{1_{d^2}},\bf{0_{d^2-2d+2}},\bf{0_{n}})$. Hence the quantum code $Q$ is the CSS code of $C_2$ generated by the rows of $G$ and $C_1$ is generated by the rows of $G$ along with $(\bf{1_{d^2}},\bf{0_{d^2-2d+2}},\bf{0_{n}})$.  

First we prove the self-orthogonality of the matrix $G$, which implies that $Q$ is $X$-self-orthogonal. 
Note that the last row of $G$ is orthogonal to any other rows ($E''$ is an even code). 
Let $(x,y,0)$ and $(x',y',0)$ be two vectors belonging to the first row block. Then both have even weights, and the way $E''$ is constructed, imply that $x\cdot x'=y\cdot y'$. Thus the block $[E \ E'' \ 0]$ forms self-orthogonal vectors.  

Note also that the minimum distance of this CSS code is minimum of $d_x$ and $d_z$, where
\[
d_x=\min(\wt(C1\setminus C_2))  
\]
and 
\[
d_z=\min(\wt(C_2^\bot\setminus C_1^\bot)).
\]
Moreover, since $C_2$ is self-orthogonal, we have 
\[
(C_1 \setminus C_2)\subseteq (C_2^\bot\setminus C_1^\bot).
\]
Hence $d_z\le d_x$. So we only need to show that $d_z=d$. 
If $x$ is a weight $d$  $Z$-logical operator, then $(x,0,0) \in (C_2^\bot\setminus C_1^\bot)$, which show that $d_z\le d$. 
Next let $(x_1,x_2,x_3)\in (C_2^\bot\setminus C_1^\bot)$. This implies that $\wt(x_1)$ is odd. 
Then we have one of the following two cases.
\begin{itemize}
    \item Suppose that $\wt(x_2)$ is even. Then adding $x_2$ to the appropriate locations of $x_1$ gives a $Z$-logical operator of surface code. 
    Thus $d\ge \wt(x_1)+\wt(x_2)$. 
    \item Suppose that $\wt(x_2)$ is odd. Then $\wt(x_3)$ is odd, and thus it is a $Z$-logical operator of $Q_2$. Thus $(x_1,x_2,x_3)$ has weight at least $d$.
\end{itemize}
Hence in any case, the code $Q$ has minimum distance $d$. 
\end{proof}

%%%%%%%%%%%%%%%%%%%%%%%%%%%%%%
\section{Stabilizers of $[\![15,1,3]\!]$ orthogonal and $[\![31,1,3]\!]$ triorthogonal codes}\label{A:15-1-3 stabilizers}
%%%%%%%%%%%%%%%%%%%%%%%%%%%%%%%%%%%
Previously we constructed a quantum orthogonal code by applying a doubling approach to the distance three surface code, the all-even  code of size five, and the totally orthogonal code of $[\![1,1,1]\!]$. 
Below we present the generators of $X$ stabilizers in binary form: 
\[
\begin{split}
& [0, 0, 1, 0, 0, 1, 0, 0, 0, 0, 0, 0, 1, 1, 0]\\
&[0, 0, 0, 1, 0, 0, 1, 0, 0, 1, 1, 0, 0, 0, 0]\\
&[1, 1, 0, 1, 1, 0, 0, 0, 0, 0, 1, 1, 0, 0, 0]\\
&[0, 0, 0, 0, 1, 1, 0, 1, 1, 0, 0, 1, 1, 0, 0]\\
&[0,0, 0, 0, 0, 0, 0, 0, 0, 1, 1, 1, 1, 1, 1].
\end{split}
\]
The first 9 columns span the $X$-stabilizers of $[\![9,1,3]\!]$, the next 5 columns form a generator matrix for all-even  code of size five with a row permutation to match the orthogonality of surface code. 
A minimum weight logical $X$ operator has the form of a minimum weight $X$-logical operator of the rotated surface code expanded by six extra zeros. For instance, the following is an $X$-logical operator of the $[\![15,1,3]\!]$ code 
\[
[1, 0, 0, 1, 0, 0, 1, 0, 0, 0, 0, 0, 0, 0, 0].
\]
This code has $d_X=5$ and $d_Z=3$ as the set of $Z$-stabilizers is larger. Moreover, this code is degenerate to both $X$ and $Z$ distances (which are respectively 4 and 2).

Applying another round of doubling to this code implies the tri-orthogonal code with parameters $[\![31,1,3]\!]$. 
One can realize a {\em logical T gate} for this code by applying the transversal operator that acts as 
\begin{equation}\label{equ:logicalT DSPS}
T_L=\displaystyle\prod_{i=1}^{31}O_i,
\end{equation}
where
\begin{itemize}
    \item $O_i=I$ for $i=1,2,8,9,16,17,24,25$,
    \item $O_i=T^{\dagger}$ for $i=15,30,31$, and
    \item $O_i=T$ for the remaining qubits.
\end{itemize}
Below we provide the $X$-stabilizer matrix of the $[\![31,1,3]\!]$ code.

Since the all ones vector is a logical $X$-operator, one can build such quantum code using this information. 
\vspace{1cm}
\begin{widetext}
\[
\left(
\begin{array}{ccccccccccccccccccccccccccccccc}
0&0&1&0&0&1&0&0&0&0&0&0&1&1&0&0&0&1&0&0&1&0&0&0&0&0&0&1&1&0&0\\
0&0&0&1&0&0&1&0&0&1&1&0&0&0&0&0&0&0&1&0&0&1&0&0&1&1&0&0&0&0&0\\
1&1&0&1&1&0&0&0&0&0&1&1&0&0&0&1&1&0&1&1&0&0&0&0&0&1&1&0&0&0&0\\
0&0&0&0&1&1&0&1&1&0&0&1&1&0&0&0&0&0&0&1&1&0&1&1&0&0&1&1&0&0&0\\
0&0&0&0&0&0&0&0&0&1&1&1&1&1&1&0&0&0&0&0&0&0&0&0&1&1&1&1&1&1&0\\
0&0&0&0&0&0&0&0&0&0&0&0&0&0&0&1&1&1&1&1&1&1&1&1&1&1&1&1&1&1&1\\
\end{array}
\right)
\]

\begin{figure}
\begin{center}
\includegraphics[width=1.1\linewidth]{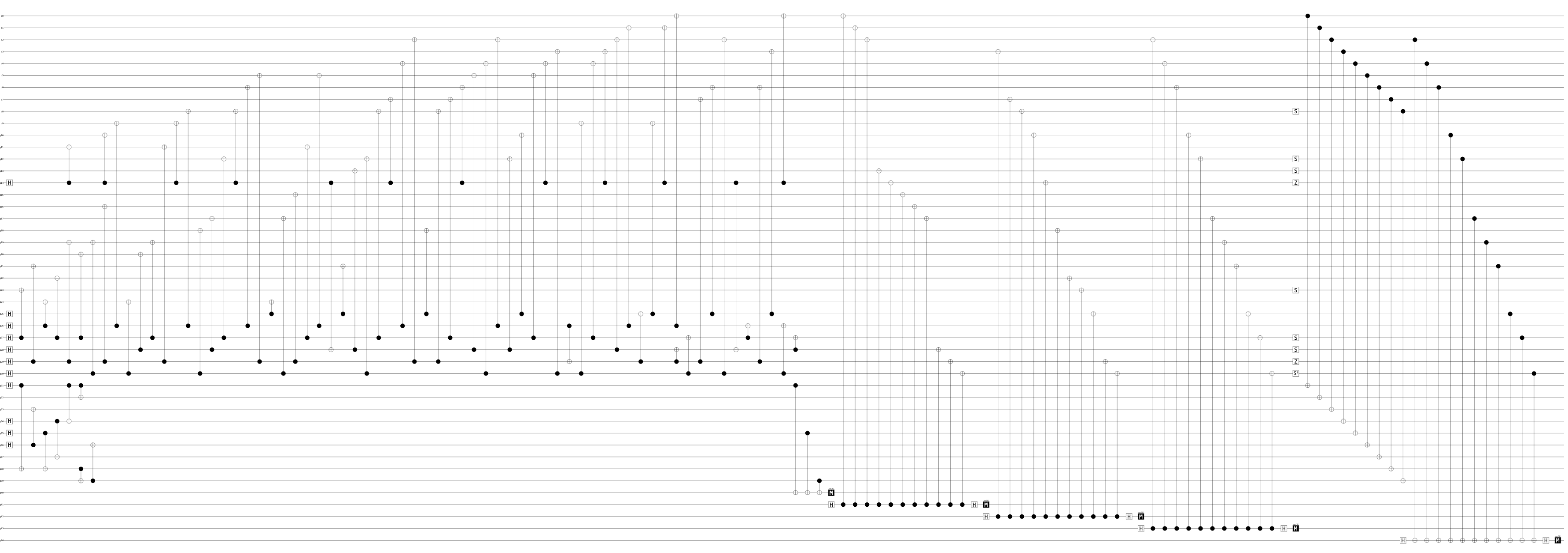}
    \caption{Circuit for $S\ket{+}$ preparation via code-switching between $[\![31,1,3]\!]$ and $d=3$ rotated surface code. The last five qubits are ancilla flag qubits.}
    \label{fig:magic_preparation circuit}
\end{center}
\end{figure}
\end{widetext}

\end{document}